\definecolor{StringRed}{rgb}{.637,0.082,0.082}
\definecolor{CommentGreen}{rgb}{0.0,0.55,0.3}
\definecolor{KeywordBlue}{rgb}{0.0,0.3,0.55}
\definecolor{LinkColor}{rgb}{0.55,0.0,0.3}
\definecolor{CiteColor}{rgb}{0.55,0.0,0.3}
\definecolor{HighlightColor}{rgb}{0.0,0.0,0.0}
\definecolor{grey}{rgb}{0.5,0.5,0.5}
\definecolor{red}{rgb}{1,0,0}
\definecolor{darkgreen}{rgb}{0.0,0.7,0.0}
\newlength{\hatchspread}
\newlength{\hatchthickness}
\newlength{\hatchshift}
\newcommand{\hatchcolor}{}
\tikzset{hatchspread/.code={\setlength{\hatchspread}{#1}},
         hatchthickness/.code={\setlength{\hatchthickness}{#1}},
         hatchshift/.code={\setlength{\hatchshift}{#1}},
         hatchcolor/.code={\renewcommand{\hatchcolor}{#1}}}
\tikzset{hatchspread=10pt,
         hatchthickness=4pt,
         hatchshift=0pt,
         hatchcolor=black}
\tikzstyle{issuedStyle}=[pattern=custom north east lines, hatchcolor=colorISS, rounded corners]
\tikzstyle{coveredStyle}=[]
\newcommand{\setBox}[2]{
    \draw[#1] ($(#2)  + (-1.2,-0.4)$) rectangle ++(2.4,0.8);
}
\newcommand{\bigSetBox}[2]{
    \draw[#1] ($(#2)  + (-1.3,-0.5)$) rectangle ++(2.6,1.0);
}
\newcommand{\coveredBox}[1]{
  \setBox{coveredStyle}{#1}
}
\newcommand{\issuedBox}[1]{
  \setBox{issuedStyle}{#1}
}
\newcommand{\issuedCoveredBox}[1]{
  \bigSetBox{coveredStyle}{#1}
  \issuedBox{#1}
}
\theoremstyle{acmplain}
\newtheorem*{notation*}{Notation}
\theoremstyle{acmdefinition}
\newtheorem{remark}{Remark}
\newcommand{\citeapp}[2]{\cref{#1}}
\Crefname{section}{Section}{Sections}
\Crefname{subsection}{Section}{Sections}
\crefname{figure}{\text{Fig.}}{\text{Figures}}
\Crefname{figure}{\text{Figure}}{\text{Figures}}
\crefname{corollary}{\text{Corollary}}{\text{corollaries}}
\Crefname{corollary}{\text{Corollary}}{\text{Corollaries}}
\crefname{lemma}{\text{Lemma}}{\text{Lemmas}}
\Crefname{lemma}{\text{Lemma}}{\text{Lemmas}}
\crefname{proposition}{\text{Prop.}}{\text{Prop.}}
\Crefname{proposition}{\text{Proposition}}{\text{Propositions}}
\crefname{definition}{\text{Def.}}{\text{Definitions}}
\Crefname{definition}{\text{Definition}}{\text{Definitions}}
\crefname{notation}{\text{Notation}}{\text{Notations}}
\Crefname{notation}{\text{Notation}}{\text{Notations}}
\crefname{theorem}{\text{Theorem}}{\text{Theorems}}
\Crefname{theorem}{\text{Theorem}}{\text{Theorems}}
\crefname{conjecture}{\text{Conj.}}{\text{Conjectures}}
\Crefname{conjecture}{\text{Conjecture}}{\text{Conjectures}}
\newcommand{\textcode}[1]{\texorpdfstring{\texttt{#1}}{#1}}
\newcommand{\kw}[1]{\textbf{\textcode{#1}}}
\newcommand{\ALT}{\;\;|\;\;}
\newcommand{\ie}{\emph{i.e.,} }
\newcommand{\eg}{\emph{e.g.,} }
\newcommand{\wrt}{w.r.t.~}
\newcommand{\aka}{a.k.a.~}
\newcommand{\inarrC}[1]{\begin{array}{@{}c@{}}#1\end{array}}
\newcommand{\inarr}[1]{\begin{array}{@{}l@{}}#1\end{array}}
\newcommand{\inarrII}[2]{\begin{array}{@{}l@{~~}||@{~~}l@{}}\inarr{#1}&\inarr{#2}\end{array}}
\newcommand{\inarrIII}[3]{\begin{array}{@{}l@{~~}||@{~~}l@{~~}||@{~~}l@{}}\inarr{#1}&\inarr{#2}&\inarr{#3}\end{array}}
\newcommand{\inarrIV}[4]{\begin{array}{@{}l@{~~}||@{~~}l@{~~}||@{~~}l@{~~}||@{~~}l@{}}\inarr{#1}&\inarr{#2}&\inarr{#3}&\inarr{#4}\end{array}}
\renewcommand{\comment}[1]{\color{teal}{~~\texttt{/\!\!/}\textit{#1}}}
\newcommand{\set}[1]{\{{#1}\}}
\newcommand{\fpfn}{\mathrel{\stackrel{\mathsf{fin}}{\rightharpoonup}}}
\newcommand{\st}{\; | \;}
\newcommand{\N}{{\mathbb{N}}}
\newcommand{\Q}{{\mathbb{Q}}}
\newcommand{\dom}[1]{\textit{dom}{({#1})}}
\newcommand{\codom}[1]{\textit{codom}{({#1})}}
\newcommand{\tup}[1]{{\langle{#1}\rangle}}
\newcommand{\nin}{\not\in}
\newcommand{\suq}{\subseteq}
\newcommand{\sqsuq}{\sqsubseteq}
\newcommand{\size}[1]{|{#1}|}
\newcommand{\maketil}[1]{{#1}\ldots{#1}}
\newcommand{\til}{\maketil{,}}
\newcommand{\rst}[1]{|_{#1}}
\newcommand{\imm}[1]{{#1}{\rst{\text{imm}}}}
\newcommand{\defeq}{\triangleq}
\newcommand{\powerset}[1]{\mathcal{P}({#1})}
\renewcommand{\implies}{\Rightarrow}
\colorlet{colorPO}{gray!60!black}
\colorlet{colorRF}{green!60!black}
\colorlet{colorMO}{orange}
\colorlet{colorFR}{purple}
\colorlet{colorECO}{red!80!black}
\colorlet{colorSYN}{green!40!black}
\colorlet{colorHB}{blue}
\colorlet{colorPPO}{magenta}
\colorlet{colorPB}{olive}
\colorlet{colorSBRF}{olive}
\colorlet{colorRMW}{olive!70!black}
\colorlet{colorRS}{blue}
\colorlet{colorRELEASE}{blue!70!black}
\colorlet{colorSC}{olive!40!black}
\colorlet{colorPSC}{olive!40!black}
\colorlet{colorREL}{olive}
\colorlet{colorCONFLICT}{olive}
\colorlet{colorRACE}{olive}
\colorlet{colorWB}{orange!70!black}
\colorlet{colorSCB}{violet}
\colorlet{colorDETOUR}{teal}
\colorlet{colorDEPS}{violet}
\colorlet{colorFENCE}{olive}
\colorlet{colorCOV}{magenta!20}
\colorlet{colorISS}{blue!10!white}
\tikzset{
   every path/.style={>=stealth},
   po/.style={->,color=colorPO,shorten >=-0.5mm,shorten <=-0.5mm},
   sw/.style={->,color=colorSYN,shorten >=-0.5mm,shorten <=-0.5mm},
   sc/.style={->,color=colorSC,dotted,thick,shorten >=-0.5mm,shorten <=-0.5mm},
   rf/.style={->,color=colorRF,dashed,,shorten >=-0.5mm,shorten <=-0.5mm},
   hb/.style={->,color=colorHB,thick,shorten >=-0.5mm,shorten <=-0.5mm},
   mo/.style={->,color=colorMO,dotted,very thick,shorten >=-0.5mm,shorten <=-0.5mm},
   no/.style={->,dotted,thick,shorten >=-0.5mm,shorten <=-0.5mm},
   fr/.style={->,color=colorFR,dotted,thick,shorten >=-0.5mm,shorten <=-0.5mm},
   deps/.style={->,color=colorDEPS,dotted,thick,shorten >=-0.5mm,shorten <=-0.5mm},
   ppo/.style={->,color=colorPPO,shorten >=-0.5mm,shorten <=-0.5mm},
   rmw/.style={->,color=colorRMW,thick,shorten >=-0.5mm,shorten <=-0.5mm},
   detour/.style={->,color=colorDETOUR,shorten >=-0.5mm,shorten <=-0.5mm},
}
\newcommand{\rlx}{\mathtt{rlx}}
\newcommand{\rel}{{\mathtt{rel}}}
\newcommand{\acq}{{\mathtt{acq}}}
\newcommand{\acqrel}{{\mathtt{acqrel}}}
\newcommand{\sco}{{\mathtt{sc}}}
\newcommand{\full}{{\mathtt{sy}}}
\newcommand{\ld}{{\mathtt{ld}}}
\newcommand{\isync}{{\mathtt{isync}}}
\newcommand{\lwsync}{{\mathtt{lwsync}}}
\newcommand{\sync}{{\mathtt{sync}}}
\newcommand{\normal}{{\mathtt{normal}}}
\newcommand{\strong}{{\mathtt{strong}}}
\newcommand{\isex}{{\mathtt{ex}}}
\newcommand{\isnotex}{{\operatorname{\mathtt{not-ex}}}}
\newcommand{\initev}[1]{\tup{\texttt{init}~{#1}}}
\newcommand{\rlab}[3]{{\lR}^{#1}({#2},{#3})}
\newcommand{\erlab}[4]{
\ifthenelse{\equal{#2}{}}{{\lR}^{#1}_{#4}}
{\ifthenelse{\equal{#3}{}\and\equal{#1}{}\and\equal{#2}{}}{{\lR}^({#2})}
{\ifthenelse{\equal{#3}{}}{{\lR}^{#1}_{#4}({#2})}
{{\lR}^{#1}_{#4}({#2},{#3})}}}}
\newcommand{\ewlab}[4]{
\ifthenelse{\equal{#2}{}}{{\lW}^{#1}_{#4}}
{\ifthenelse{\equal{#3}{}\and\equal{#1}{}\and\equal{#2}{}}{{\lW}^({#2})}
{\ifthenelse{\equal{#3}{}}{{\lW}^{#1}_{#4}({#2})}
{{\lW}^{#1}_{#4}({#2},{#3})}}}}
\newcommand{\prlab}[2]{{\lR}({#1},{#2})}
\newcommand{\wlab}[3]{{\lW}^{#1}({#2},{#3})}
\newcommand{\pwlab}[2]{{\lW}({#1},{#2})}
\newcommand{\flab}[1]{{\lF}^{#1}}
\newcommand{\lE}{{\mathtt{E}}}
\newcommand{\lR}{{\mathtt{R}}}
\newcommand{\lW}{{\mathtt{W}}}
\newcommand{\lQ}{{\mathtt{Q}}}
\newcommand{\lL}{{\mathtt{L}}}
\newcommand{\lU}{{\mathtt{RMW}}}
\newcommand{\lF}{{\mathtt{F}}}
\newcommand{\lAT}{{\mathtt{At}}}
\newcommand{\lWstrong}{\ewlab{}{}{}{\strong}}
\newcommand{\lLAB}{{\mathtt{lab}}}
\newcommand{\lTID}{{\mathtt{tid}}}
\newcommand{\lSN}{{\mathtt{sn}}}
\newcommand{\lLOC}{{\mathtt{loc}}}
\newcommand{\lMOD}{{\mathtt{mod}}}
\newcommand{\lVAL}{{\mathtt{val}}}
\newcommand{\lX}{\mathtt{x}}
\newcommand{\lAR}{\mathtt{ar}}
\newcommand{\lPO}{{\color{colorPO}\mathtt{po}}}
\newcommand{\lRF}{{\color{colorRF} \mathtt{rf}}}
\newcommand{\lRMW}{{\color{colorRMW} \mathtt{rmw}}}
\newcommand{\lCO}{{\color{colorMO} \mathtt{co}}}
\newcommand{\lFR}{{\color{colorFR} \mathtt{fr}}}
\newcommand{\lECO}{{\color{colorECO} \mathtt{eco}}}
\newcommand{\lRS}{{\color{colorRS}\mathtt{rs}}}
\newcommand{\lRELEASE}{{\color{colorRELEASE}\mathtt{release}}}
\newcommand{\lSW}{{\color{colorSYN}\mathtt{sw}}}
\newcommand{\lHB}{{\color{colorHB}\mathtt{hb}}}
\newcommand{\lDOB}{{\mathtt{dob}}}
\newcommand{\lFWBOB}{{\mathtt{fwbob}}}
\newcommand{\lBOB}{{\mathtt{bob}}}
\newcommand{\lAOB}{{\mathtt{aob}}}
\newcommand{\lOBS}{{\mathtt{obs}}}
\newcommand{\lSC}{{\color{colorSC}\mathtt{sc}}}
\newcommand{\lRSs}{{\color{colorRS}\mathtt{rs}}\smodel}
\newcommand{\lRELEASEs}{{\color{colorRELEASE}\mathtt{release}}\smodel}
\newcommand{\lSWs}{{\color{colorSYN}\mathtt{sw}}\smodel}
\newcommand{\lHBs}{{\color{colorHB}\mathtt{hb}}\smodel}
\newcommand{\lPSC}{{\color{colorPSC} \mathtt{psc}}}
\newcommand{\lDETOUR}{{{\color{colorDETOUR}\mathtt{detour}}}}
\newcommand{\lDEPS}{{{\color{colorDEPS}\mathtt{deps}}}}
\newcommand{\lCTRL}{{{\color{colorDEPS}\mathtt{ctrl}}}}
\newcommand{\lDATA}{{{\color{colorDEPS}\mathtt{data}}}}
\newcommand{\lADDR}{{{\color{colorDEPS}\mathtt{addr}}}}
\newcommand{\lPPO}{{{\color{colorPPO}\mathtt{ppo}}}}
\newcommand{\lRMWDEP}{{{\color{colorDEPS}\mathtt{casdep}}}}
\newcommand{\lmakeE}[1]{#1\mathtt{e}}
\newcommand{\lRFE}{\lmakeE{\lRF}}
\newcommand{\lCOE}{\lmakeE{\lCO}}
\newcommand{\lFRE}{\lmakeE{\lFR}}
\newcommand{\lmakeI}[1]{#1\mathtt{i}}
\newcommand{\lRFI}{\lmakeI{\lRF}}
\newcommand{\lCOI}{\lmakeI{\lCO}}
\newcommand{\lFRI}{\lmakeI{\lFR}}
\newcommand{\lSYNC}{\mathtt{sync}}
\newcommand{\lLWSYNC}{\mathtt{lwsync}}
\newcommand{\lCTRLISYNC}{\operatorname{\mathtt{ctrl-isync}}}
\newcommand{\lFENCE}{\mathtt{fence}}
\newcommand{\lPROP}{\mathtt{prop}}
\newcommand{\lRDW}{{\mathtt{rdw}}}
\newcommand{\lHBP}{{\color{colorHB}\mathtt{hb}}_\mathtt{p}}
\newcommand{\lPPOP}{{{\color{colorPPO}\mathtt{ppo}}_\mathtt{p}}}
\newcommand{\lii}{{\mathtt{ii}}}
\newcommand{\lic}{{\mathtt{ic}}}
\newcommand{\lci}{{\mathtt{ci}}}
\newcommand{\lcc}{{\mathtt{cc}}}
\newcommand{\Tid}{\mathsf{Tid}}
\newcommand{\Loc}{\mathsf{Loc}}
\newcommand{\Val}{\mathsf{Val}}
\newcommand{\Lab}{\mathsf{Lab}}
\newcommand{\Event}{\mathsf{Event}}
\newcommand{\Init}{\mathsf{Init}}
\newcommand{\Inst}{\mathsf{Inst}}
\newcommand{\Exp}{\mathsf{Exp}}
\newcommand{\Reg}{\mathsf{Reg}}
\newcommand{\Sprog}{\mathsf{SProg}}
\newcommand{\smodel}{_{\text{RC11}}}
\newcommand{\POWER}{\ensuremath{\mathsf{POWER}}\xspace}
\newcommand{\IMM}{\ensuremath{\mathsf{IMM}}\xspace}
\newcommand{\ARM}{\ensuremath{\mathsf{ARM}}\xspace}
\newcommand{\prog}{prog}
\newcommand{\sprog}{sprog}
\newcommand{\compile}[1]{{(\!|} #1 {|\!)}}
\newcommand{\lDMBSY}{\mathtt{F^{\full}}}
\newcommand{\lDMBLD}{\mathtt{F^{\ld}}}
\newcommand{\msg}[5]{\tup{#1:#2@(#3,#4],#5}}
\newcommand{\msgRlx}[3]{\tup{#1:#2@#3}}
\newcommand{\Promise}{\ensuremath{\mathsf{Promise}}\xspace}
\newcommand{\view}{view}
\newcommand{\viewCur}{{\sf cur}}
\newcommand{\viewAcq}{{\sf acq}}
\newcommand{\viewRel}{{\sf rel}}
\newcommand{\loc}{x}
\newcommand{\locy}{y}
\newcommand{\locz}{z}
\newcommand{\tid}{i}
\newcommand{\val}{v}
\newcommand{\tidmod}[1]{_{#1}}
\newcommand{\readInst }[3]{#2 \;{:=}\;[#3]^{#1}}
\newcommand{\fenceInst}[1]{\kw{fence}^{#1}}
\newcommand{\ifGotoInst}[2]{\kw{if} \; #1 \; \kw{goto} \; #2}
\newcommand{\writeInst}[3]{[#2]^{#1}\;{:=}\;#3}
\newcommand{\assignInst}[2]{#1\;{:=}\;#2}
\newcommand{\incInst}[6]{#3 \;{:=}\;\kw{FADD}_{#6}^{#1,#2}({#4},{#5})}
\newcommand{\casInst}[7]{#3 \;{:=}\;\kw{CAS}_{#7}^{#1,#2}({#4},{#5},{#6})}
\newcounter{mylabelcounter}
\newcommand{\labelAxiom}[2]{%
\hfill{\normalfont\textsc{(#1)}}\refstepcounter{mylabelcounter}
\immediate\write\@auxout{%
  \string\newlabel{#2}{{\unexpanded{\normalfont\textsc{#1}}}{\thepage}{{\unexpanded{\normalfont\textsc{#1}}}}{mylabelcounter.\number\value{mylabelcounter}}{}}
}%
}
\newcommand{\squishlist}[1][$\bullet$]{
 \begin{list}{#1}
  { \setlength{\itemsep}{0pt}
     \setlength{\parsep}{0pt}
     \setlength{\topsep}{1pt}
     \setlength{\partopsep}{0pt}
     \setlength{\leftmargin}{1.2em}
     \setlength{\labelwidth}{0.5em}
     \setlength{\labelsep}{0.4em} } }
\newcommand{\squishend}{
  \end{list}  }
\newcommand{\coverable}{{\sf Coverable}}
\newcommand{\issuable}{{\sf Issuable}}
\newcommand{\IssuedSet}{I}
\newcommand{\CoveredSet}{C}
\newcommand{\travConfigP}{{\rm partial\text{-}trav\text{-}config}}
\newcommand{\myrightarrow}[1]{
\ifthenelse{\equal{#1}{}}{\xrightarrow{}}{\mathrel{\raisebox{-2pt}{$\xrightarrow{#1}$}}}}
\newcommand{\travConfigStep}{\myrightarrow{\rm STC}}
\newcommand{\etravStep}{\myrightarrow{}}
\newcommand{\View}{\ensuremath{\mathsf{View}}\xspace}
\newcommand{\letdef}[2]{\kw{let} \; #1 = #2 \; \kw{in}}
\newcommand{\lURR}{\mathtt{vf}}
\newcommand{\lBVF}{\mathtt{bvf}}
\newcommand{\lBVFrlx}{\lBVF^{\rlx}}
\newcommand{\Tto}{T}
\newcommand{\Tfrom}{F}
\newcommand{\PhiD}{\Psi}
\newcommand{\txtsub}[2]{{#1}_{\text{#2}}}
\newcommand{\tcom}{{\mathcal{V}}}
\newcommand{\gsco}{{\mathcal{S}}}
\newcommand{\mem}{M}
\newcommand{\gts}{\mathcal{T\!S}}
\newcommand{\lts}{\mathit{TS}}
\newcommand{\lprmem}{\texttt{prm}}
\newcommand{\lprom}{P}
\newcommand{\myfuture}{{fut}}
\newcommand{\ogsco}{\txtsub{\gsco}{\myfuture}}
\newcommand{\omem}{\txtsub{\mem}{\myfuture}}
\newcommand{\simfull}{{\mathcal{J}}}
\newcommand{\simthread}{{\mathcal{I}}}
\newcommand{\msghelper}{{\rm view}}
\newcommand{\astep}[1]{\myrightarrow{#1}}
\newcommand{\D}{{D}}
\newcommand{\ee}[1]{e_{#1}}
\newcommand{\PConf}{\Sigma}
\newcommand{\futuresimmem}{{\rm up}\text{-}{\rm mem}}
\newcommand{\fsim}{\mathcal{L}}
\newcommand{\mems}{M_{\rm S}}
\newcommand{\gscos}{\gsco_{\rm S}}
\newcommand{\memt}{M_{\rm T}}
\newcommand{\gscot}{\gsco_{\rm T}}
\newcommand{\viewCurS}{\viewCur_{\rm S}}
\newcommand{\viewAcqS}{\viewAcq_{\rm S}}
\newcommand{\viewRelS}{\viewRel_{\rm S}}
\newcommand{\viewCurT}{\viewCur_{\rm T}}
\newcommand{\viewAcqT}{\viewAcq_{\rm T}}
\newcommand{\viewRelT}{\viewRel_{\rm T}}
\newcommand{\sigmaS}{\sigma_{\rm S}}
\newcommand{\sigmaT}{\sigma_{\rm T}}
\newcommand{\lpromS}{\lprom_{\rm S}}
\newcommand{\lpromT}{\lprom_{\rm T}}
\newcommand{\addG}[1]{\mathsf{add}_{#1}}
\newcommand{\nextG}[1]{\mathsf{next}_{#1}}
\newcommand{\lG}{\mathtt{G}}
\newcommand{\simcertification}{{\rm crt}}
\newcommand{\TCcert}{\TC^{\simcertification}}
\newcommand{\Gcert}{G^{\simcertification}}
\newcommand{\CoveredSetcert}{\CoveredSet^{\simcertification}}
\newcommand{\IssuedSetcert}{\IssuedSet^{\simcertification}}
\newcommand{\simthreadcert}{\mathcal{I}^{\simcertification}}
\newcommand{\TC}{TC}
\newcommand\doubleplus{+\kern-1.3ex+\kern0.8ex}
\begin{document}

\title{Bridging the Gap between Programming Languages and Hardware Weak Memory Models}



\author{Anton Podkopaev}
\affiliation{
  \institution{St. Petersburg University}            
  \city{St. Petersburg}
  \state{Russia}
}
\affiliation{
  \institution{JetBrains Research}            
  \city{St. Petersburg}
  \country{Russia}
}
\affiliation{
  \institution{MPI-SWS}            
  \country{Germany}
}
\email{anton.podkopaev@jetbrains.com}          

\author{Ori Lahav}
\affiliation{
  \institution{Tel Aviv University}           
  \country{Israel}
}
\email{orilahav@tau.ac.il}         

\author{Viktor Vafeiadis}
\orcid{0000-0001-8436-0334}             
\affiliation{
  \institution{MPI-SWS}           
  \streetaddress{Saarland Informatics Campus}
  \country{Germany}
}
\email{viktor@mpi-sws.org}         


\begin{abstract}
We develop a new intermediate weak memory model, \IMM, as a way of modularizing the
proofs of correctness of compilation from concurrent programming languages
with weak memory consistency semantics to mainstream multi-core architectures,
such as POWER and ARM.
We use \IMM to prove the correctness of compilation from the
promising semantics of Kang et al.\ to POWER (thereby correcting and improving their result) and ARMv7,
as well as to the recently revised ARMv8 model.
Our results are mechanized in Coq, and to the best of our knowledge, these are the first machine-verified
compilation correctness results for models that are weaker than x86-TSO.

\end{abstract}

\begin{CCSXML}
<ccs2012>
<concept>
<concept_id>10003752.10003753.10003761</concept_id>
<concept_desc>Theory of computation~Concurrency</concept_desc>
<concept_significance>500</concept_significance>
</concept>
<concept>
<concept_id>10011007.10011006.10011039.10011311</concept_id>
<concept_desc>Software and its engineering~Semantics</concept_desc>
<concept_significance>500</concept_significance>
</concept>
<concept>
<concept_id>10011007.10011006.10011041</concept_id>
<concept_desc>Software and its engineering~Compilers</concept_desc>
<concept_significance>500</concept_significance>
</concept>
<concept>
<concept_id>10011007.10010940.10010992.10010993</concept_id>
<concept_desc>Software and its engineering~Correctness</concept_desc>
<concept_significance>500</concept_significance>
</concept>
</ccs2012>
\end{CCSXML}

\ccsdesc[500]{Theory of computation~Concurrency}
\ccsdesc[500]{Software and its engineering~Semantics}
\ccsdesc[500]{Software and its engineering~Compilers}
\ccsdesc[500]{Software and its engineering~Correctness}

\keywords{Weak memory consistency, IMM, promising semantics, C11 memory model}

\maketitle

\section{Introduction}
\label{sec:intro}

To support platform-independent concurrent programming, 
languages like C/C++11 and Java9 provide several types of memory accesses
and high-level fence commands.
Compilers of these languages are required to map the high-level primitives to instructions of mainstream architectures: 
in particular, x86-TSO~\cite{x86-tso}, ARMv7 and POWER~\cite{herding-cats}, and ARMv8~\cite{Pulte-al:POPL18}.
In this paper, we focus on proving the correctness of such mappings.
Correctness amounts to showing that for every source program $P$, 
the set of behaviors allowed by the target architecture for the mapped program $\compile{P}$
(the program obtained by pointwise mapping the instructions in $P$)
is contained in the set of behaviors allowed by the language-level model for $P$.
Establishing such claim is a major part of a compiler correctness proof,
and it is required for demonstrating the implementability of concurrency semantics.%
\footnote{In the rest of this paper
we refer to these mappings as ``compilation'',
leaving compiler optimizations out of our scope.}

Accordingly, it has been an active research topic.
In the case of C/C++11, \citet{Batty-al:POPL11} established the correctness of a mapping to x86-TSO, 
while \citet{Batty-al:POPL12} addressed the mapping to POWER and ARMv7.
However, the correctness claims of \citet{Batty-al:POPL12} were subsequently found to be incorrect \cite{Manerkar-al:CoRR16,scfix},
as they mishandled the combination of sequentially consistent accesses with weaker accesses.
\citet{scfix} developed RC11, a repaired version of C/C++11, and established (by pen-and-paper proof) 
the correctness of the suggested compilation schemes to x86-TSO, POWER and ARMv7.
Beyond (R)C11, however, there are a number of other proposed higher-level semantics, 
such as JMM~\cite{Manson-al:POPL05},
OCaml~\cite{OCAMLraces},
Promise~\cite{Kang-al:POPL17}, 
LLVM~\cite{Chakraborty-al:CGO17},
Linux kernel memory model~\cite{LK},
AE-justification~\cite{Jeffrey-Riely:LICS16},
Bubbly~\cite{Pichon-al:POPL16},
and WeakestMO~\cite{Chakraborty-Vafeiadis:POPL19},
for which only a handful of compilation correctness results have been developed.

As witnessed by a number of known incorrect claims and proofs,
these correctness results may be very difficult to establish.
The difficulty stems from the typical large gap
between the high-level programming language concurrency features and semantics, and the architecture ones.
In addition, since hardware models differ in their strength (\eg which dependencies are preserved)
and the primitives they support (barriers and atomic accesses), each hardware model may require a new challenging proof.


To address this problem, 
we propose to modularize the compilation correctness proof to go via an intermediate model, which we call \IMM (for Intermediate Memory Model).
\IMM contains features akin to a language-level model (such as relaxed and release/acquire accesses as well as compare-and-swap primitives), 
but gives them a hardware-style declarative (\aka axiomatic)  semantics referring to explicit syntactic dependencies.%
\footnote{Being defined on a per-execution basis, \IMM is not suitable as language-level semantics (see~\cite{Batty-al:ESOP15}).
Indeed, it disallows various compiler optimizations that remove syntactic dependencies.}
 \IMM is very useful for structuring the compilation proofs and for enabling proof reuse:
for $N$ language semantics and $M$ architectures,
using \IMM, we can reduce the number of required results from $N\times M$ to $N+M$,
and moreover each of these $N+M$ proofs is typically easier than a corresponding end-to-end proof
because of a smaller semantic gap between \IMM and another model than between a given language-level and hardware-level model.
The formal definition of \IMM contains a number of subtle points
as it has to be weaker than existing hardware models, 
and yet strong enough to support compilation from language-level models.
(We discuss these points in \cref{sec:model}.)

\begin{wrapfigure}{r}{0.38\textwidth}\centering
\begin{tikzpicture}[auto,nd/.style={inner sep=2pt},lab/.style={inner sep=0pt}]
\node[nd] (ph) {\IMM};
\node[nd,right=8mm of ph] (arm7) {ARMv7};
\node[nd,above right=2.5mm and 8mm of ph] (pow)  {POWER};
\node[nd,above=2.5mm of pow ] (tso)  {x86-TSO};
\node[nd,below=2.5mm of arm7] (arm8) {ARMv8};
\node[nd,below=2.5mm of arm8] (riscv) {RISC-V};
\node[nd,above left=2.5mm and 8mm of ph] (prom) {Promise};
\node[nd,below left=2.5mm and 8mm of ph] (rc11) {(R)C11$^*$};
\draw[->,out=  0,in=170] (prom) edge (ph);
\draw[->,out=  0,in=190] (rc11) edge (ph);
\draw[->,out= 60,in=180] (ph) edge (tso);
\draw[->,out= 10,in=180] (ph) edge  (pow);
\draw[->] (ph) edge (arm7);
\draw[->,out=-10,in=180] (ph) edge (arm8);
\draw[->,out=-60,in=180] (ph) edge (riscv);
\end{tikzpicture}
\caption{Results proved in this paper.}
\label{fig:res}
\end{wrapfigure}

As summarized in \cref{fig:res},
besides introducing \IMM and proving that it is a sound abstraction over a range of hardware memory models,
we prove the correctness of compilation from fragments of C11 and RC11 without non-atomic and SC accesses (denoted by (R)C11$^*$)
and from the language-level memory model of the ``promising semantics'' of \citet{Kang-al:POPL17} to \IMM.

The latter proof is the most challenging.
The promising semantics is a recent prominent attempt to solve the infamous ``out-of-thin-air'' problem in 
programming language concurrency semantics~\cite{Batty-al:ESOP15,Boehm14} without sacrificing performance.
To allow efficient implementation on modern hardware platforms,
the promising semantics allows threads to execute instructions out of order
by having them ``\emph{promise}'' (\ie pre-execute) future stores.
To avoid out-of-thin-air values, every step in the promising semantics is subject to a \emph{certification} condition.
Roughly speaking, this means that thread $\tid$ may take a step to a state $\sigma$, 
only if there exists a sequence of steps of thread $\tid$ starting from $\sigma$
to a state $\sigma'$ in which $\tid$ indeed performed (fulfilled) all its pre-executed writes (promises). 
Thus, the validity of a certain trace in the promising semantics depends on existence of other traces.

In mapping the promising semantics to \IMM, we therefore have the largest gap to bridge:
a non-standard operational semantics on the one side
versus a hardware-like declarative semantics on the other side.
To relate the two semantics, we carefully construct a traversal strategy on \IMM execution graphs,
which gives us the order in which we can execute the promising semantics machine, keep satisfying its certification condition,
and finally arrive at the same outcome.

The end-to-end result is the correctness of an efficient mapping
from the promising semantics of \citet{Kang-al:POPL17} to the main hardware architectures.
While there are two prior compilation correctness results from promising semantics to POWER and ARMv8~\cite{Podkopaev-al:ECOOP17,Kang-al:POPL17},
neither result is adequate.
The POWER result~\cite{Kang-al:POPL17} considered a simplified (suboptimal) compilation scheme
and, in fact, we found out that its proof is \emph{incorrect} in its handling of SC fences (see \cref{sec:related} for more details).
In addition, its proof strategy, which is based on program transformations account for weak behaviors~\cite{trns}, cannot be applied to ARM.
The ARMv8 result~\cite{Podkopaev-al:ECOOP17} handled only a small restricted subset of the concurrency features of the promising semantics
and an operational hardware model (ARMv8-POP) that was later abandoned by ARM in favor of a rather different declarative model~\cite{Pulte-al:POPL18}.

By encompassing all features of the promising semantics,
our proof uncovered a subtle correctness problem in the conjectured compilation scheme of its read-modify-write (RMW) operations to ARMv8
and to the closely related RISC-V model.
We found out that exclusive load and store operations in ARMv8 and RISC-V are weaker than those of POWER and ARMv7, 
following their models by \citet{herding-cats}, so that the intended compilation of RMWs is broken (see \cref{ex:strong-rmw}).
Thus, the mapping to ARMv8 that we proved correct places a weak barrier (specifically ARM's ``ld fence'') after every RMW.%
\footnote{Recall that RMWs are relatively rare.
The performance cost of this fixed compilation scheme is beyond the scope of this paper,
and so is the improvement of the promising semantics to recover the correctness of the barrier-free compilation.}
To keep \IMM as a sound abstraction of ARMv8 and allow reuse of \IMM in a future improvement of the promising semantics, 
we equip \IMM with two types of RMWs: usual ones that are compiled to ARMv8 without the extra barrier,
and stronger ones that require the extra barrier.
To establish the correctness of the mapping from the (existing) promising semantics to \IMM,
we require that RMW instructions of the promising semantics are mapped to \IMM's strong RMWs.

Finally,  to ensure correctness of such subtle proofs, our results are all mechanized in Coq ($\sim$33K LOC).
To the best of our knowledge, this constitutes the first mechanized correctness of compilation result from a high-level programming language
concurrency model to a model weaker than x86-TSO.
We believe that the existence of Coq proof scripts relating the different models may facilitate the development and investigation
of weak memory models in the future, as well as the possible modifications of \IMM to accommodate new and revised 
hardware and/or programming languages concurrency semantics.

The rest of this paper is organized as follows.
In \cref{sec:pre} we present \IMM's program syntax and its mapping to execution graphs.
In \cref{sec:model} we define \IMM's consistency predicate.
In \cref{sec:hardware} we present the mapping of \IMM to main hardware and establish its correctness.
In \cref{sec:c11} we present the mappings from C11 and RC11 to \IMM and establish their correctness.
\Cref{sec:rlx-prom-compilation,sec:prom-compilation} concern the mapping of 
the promising semantics of \citet{Kang-al:POPL17} to \IMM.
To assist the reader, we discuss first (\cref{sec:rlx-prom-compilation}) a restricted fragment (with only relaxed accesses),
and later (\cref{sec:prom-compilation}) extend our results and proof outline to the full promising model.
Finally, we discuss related work in \cref{sec:related} and conclude in \cref{sec:conclusion}.

\smallskip
Supplementary material for this paper, including the Coq development,  
is publicly available at \url{http://plv.mpi-sws.org/imm/}.



\section{Preliminaries: from programs to execution graphs}
\label{sec:pre}

Following the standard declarative (\aka axiomatic) approach of defining memory consistency models~\cite{herding-cats},
the semantics of \IMM programs is given in terms of \emph{execution graphs} which partially
order \emph{events}.
This is done in two steps. First, the program is mapped to a large set of execution graphs in
which the read values are completely arbitrary.
Then, this set is filtered by a consistency predicate, and only \emph{\IMM-consistent}
execution graphs determine the possible outcomes of the program under \IMM.
Next, we define \IMM's programming language (\cref{sec:language}),
define \IMM's execution graphs (\cref{sec:executions}),
and present the construction of execution graphs from programs (\cref{sec:prog-to-exec}).
The next section (\cref{sec:model}) is devoted to present \IMM's consistency predicate. 

Before we start we introduce some notation for relations and functions.
Given a binary relation $R$,
we write $R^?$, $R^+$, and $R^*$ respectively to denote its reflexive, transitive, and reflexive-transitive closures.
The inverse relation is denoted by $R^{-1}$, and
$\dom{R}$ and $\codom{R}$ denote $R$'s domain and codomain.
We denote by $R_1\mathbin{;}R_2$ the left composition of two relations $R_1,R_2$,
and assume that $;$ binds tighter than $\cup$ and $\setminus$.
We write $\imm{R}$ for the set of all \emph{immediate $R$ edges}: $\imm{R}\defeq R \setminus R\mathbin{;}R$.
We denote by $[A]$ the identity relation on a set $A$.
In particular, $[A]\mathbin{;}R\mathbin{;}[B] = R\cap (A\times B)$.
For finite sets $\set{a_1\til a_n}$, we omit the set parentheses and write $[a_1 \til a_n]$.
Finally, for a function $f: A \to B$ and a set $X\suq A$, we write $f[X]$ to denote the set $\set{f(x) \st x\in X}$.

\subsection{Programming language}
\label{sec:language}

\begin{figure}[t]
\centering
$\inarr{
\begin{array}[t]{@{} l @{\hspace{20pt}} l @{}}
	\begin{array}{@{} l r @{}}
		\textbf{Domains} \\
			\quad n \in \N & \text{Natural numbers} \\
			\quad v \in \Val \defeq \N & \text{Values} \\
			\quad \loc \in \Loc  \defeq \N & \text{Locations} \\
			\quad r \in \Reg & \text{Registers}	 \\
			\quad \tid \in \Tid & \text{Thread identifiers} 
	\end{array}		&
	\begin{array}{@{} r@{\;}l r @{}}
		\textbf{Modes} \\
			\quad o_\lR ::=  & \rlx \ALT \acq  & \text{Read modes} \\
			\quad o_\lW ::=  & \rlx \ALT \rel & \text{Write modes} \\
			\quad o_\lF ::=  & \acq \ALT \rel \ALT \acqrel \ALT \sco  & \text{Fence modes} \\
			\quad o_\lU::=  & \normal \ALT \strong & \text{RMW modes} \\ 
            \mbox{}
	\end{array}	
		\end{array}	
\\[8ex]
\begin{array}{r@{\;}l}
\Exp \ni e ::=  & r \ALT n \ALT e_1 + e_2 \ALT e_1 - e_2 \ALT \ldots
\\[1ex]
\Inst \ni \mathit{inst} ::=  & 
\assignInst{r}{e}
\ALT \ifGotoInst{e}{n}
\ALT \writeInst{o_\lW}{e}{e} 
\ALT \readInst{o_\lR}{r}{e} 
\ALT \\&
\incInst{o_\lR}{o_\lW}{r}{e}{e}{o_\lU}
\ALT \casInst{o_\lR}{o_\lW}{r}{e}{e}{e}{o_\lU}
\ALT \fenceInst{o_\lF}
\end{array} 
\\[4ex]
\begin{array}{@{} l r @{}}
\sprog \in \Sprog \defeq \N \fpfn \Inst & \text{Sequential programs} \\
\prog : \Tid \to \Sprog  & \text{Programs}
\end{array}}$
\caption{Programming language syntax.}
\label{fig:lang}
\end{figure}

\IMM is formulated over the language defined in \cref{fig:lang}
with C/C++11-like concurrency features.
Expressions are constructed from registers (local variables) and integers,
and represent values and locations.
Instructions include assignments and conditional branching,
as well as memory operations.
Intuitively speaking, 
an assignment $\assignInst{r}{e}$ assigns the value of $e$ to register $r$ (involving no memory access);
$\ifGotoInst{e}{n}$ jumps to line $n$ of the program iff the value of $e$ is not $0$;
the write $\writeInst{o_\lW}{e_1}{e_2}$ stores the value of $e_2$ in the address given by $e_1$;
the read $\readInst{o_\lR}{r}{e} $ loads the value in address $e$ to register $r$;
$\incInst{o_\lR}{o_\lW}{r}{e_1}{e_2}{o_\lU}$ atomically increments the value in address $e_1$
by the value of $e_2$ and loads the old value to $r$;
$\casInst{o_\lR}{o_\lW}{r}{e}{e_\lR}{e_\lW}{o_\lU}$ atomically compares
the value stored in address $e$ to the value of $e_\lR$, and if the two values are the same,
it replaces the value stored in $e$ by the value of $e_\lW$; and fence instructions $\fenceInst{o_\lF}$
are used to place global barriers. 

The memory operations are annotated with \emph{modes} that are ordered as follows: 
$$\sqsuq ~\defeq \set{\tup{\rlx,\acq},\tup{\rlx,\rel}, \tup{\acq,\acqrel},\tup{\rel,\acqrel},\tup{\acqrel,\sco}}^*$$
Whenever $o_1 \sqsuq o_2$, we say that $o_2$ is stronger than $o_1$: 
it provides more consistency guarantees but is more costly to implement.
RMWs include two modes---$o_\lR$ for the read part and
$o_\lW$ for the write part---as well as a third (binary) mode $o_\lU$ used to denote certain RMWs
as stronger ones.

In turn, sequential programs are finite maps from $\N$ to instructions, 
and (concurrent) programs are top-level parallel composition of 
sequential programs,
defined as mappings from a finite set $\Tid$ of thread identifiers to sequential programs.
In our examples, we write sequential programs as sequences of instructions delimited by `$;$' (or line breaks) 
and use `$\parallel$' for parallel composition.

\begin{remark}
C/C++11 sequentially consistent (SC) accesses are not included in \IMM.
They can be simulated, nevertheless, using SC fences
following the compilation scheme of C/C++11 (see~\cite{scfix}).
We note that SC accesses are also not supported by the promising semantics.
\end{remark}

\subsection{Execution graphs}
\label{sec:executions}

\begin{definition}
\label{def:event}
An \emph{event}, $e\in\Event$, takes one of the following forms:
\begin{itemize}
\item Non-initialization event: $\tup{\tid,n}$ where 
$\tid\in\Tid$ is a thread identifier,
and $n\in\Q$ is a serial number inside each thread.
\item Initialization event: $\initev{\loc}$ where $\loc\in\Loc$ is the location being initialized.
\end{itemize}
We denote by $\Init$ the set of all initialization events.
The functions $\lTID$ and $\lSN$ return the (non-initialization) event's thread identifier and serial number.
\end{definition}

Our representation of events induces a \emph{sequenced-before} partial order on events given by:
$$e_1 < e_2 \Leftrightarrow (e_1 \in \Init \land e_2 \nin \Init) \lor 
(e_1 \nin \Init \land e_2 \nin \Init \land \lTID(e_1) =\lTID(e_2) \land \lSN(e_1) < \lSN(e_2))$$
Initialization events precede all non-initialization events,
while events of the same thread are ordered according to their serial numbers.
We use rational numbers as serial numbers to be able to easily add an event between any two events.

\begin{definition}
\label{def:label}
A \emph{label}, $l\in\Lab$, takes one of the following forms:
\begin{itemize}
\item Read label: $\erlab{o_\lR}{\loc}{v}{s}$ where $\loc\in\Loc$, $v\in\Val$, $o_\lR\in\set{\rlx,\acq}$, 
and $s\in\set{\isnotex,\isex}$.
\item Write label: $\ewlab{o_\lW}{\loc}{v}{o_\lU}$ where $\loc\in\Loc$, $v\,{\in}\,\Val$, $o_\lW\,{\in}\,\set{\rlx,\rel}$,
and $o_\lU\,{\in}\,\set{\normal,\strong}$.
\item Fence label: $\flab{o_\lF}$ where $o_\lF\in\set{\acq,\rel,\acqrel,\sco}$.
\end{itemize}
\end{definition}

Read labels include a location, a value, and a mode, as well as an ``is exclusive'' flag $s$.
Exclusive reads stem from an RMW and are usually followed by a corresponding write. 
An exception is the case of a ``failing'' CAS (when the read value is not the expected one), 
where the exclusive read is not followed by a corresponding write.
Write labels include a location, a value, and a mode, as well as a flag marking certain writes
as strong. This will be used to differentiate the strong RMWs from the normal ones.
Finally, a fence label includes just a mode.

\begin{definition}
\label{def:execution}
An \emph{execution} $G$ consists of:
\begin{enumerate}
\item a finite set $G.\lE$ of events.
Using $G.\lE$ and the partial order $<$ on events, 
we derive the \emph{program order} (\aka \emph{sequenced-before}) relation in $G$:
$G.\lPO \defeq [G.\lE]; < ;[G.\lE]$.
For $\tid\in\Tid$, we denote by $G.\lE\tidmod{\tid}$ the set $\set{a\in G.\lE \st \lTID(a)=\tid}$,
and by  $G.\lE\tidmod{\neq \tid}$ the set $\set{a\in G.\lE \st \lTID(a)\neq\tid}$.
\item a labeling function $G.\lLAB: G.\lE \to \Lab$.
The labeling function naturally induces functions 
$G.\lMOD$, $G.\lLOC$, and $G.\lVAL$
that return (when applicable) an event's label 
mode, location, and value.
We use $G.\lR,G.\lW,G.\lF$ to denote the subsets of $G.\lE$ of events labeled with the respective type.
We use obvious notations to further restrict the different modifiers of the event
(\eg $G.\ewlab{}{\loc}{}{}=\set{w\in G.\lW \st G.\lLOC(w)=\loc }$
and $G.\lF^{\sqsupseteq o}=\set{f\in G.\lF \st G.\lMOD(f)\sqsupseteq o}$).
We assume that $G.\lLAB(\initev{\loc})=\ewlab{\rlx}{\loc}{0}{\normal}$
for every $\initev{\loc}\in G.\lE$.
\item a relation $G.\lRMW\suq \bigcup_{\loc\in\Loc} [G.\erlab{}{\loc}{}{\isex}];\imm{G.\lPO};[G.\ewlab{}{\loc}{}{}]$, called \emph{RMW pairs}.
We require that $G.\ewlab{}{}{}{\strong} \suq \codom{G.\lRMW}$.
\item a relation $G.\lDATA\suq [G.\lR];G.\lPO;[G.\lW]$, called \emph{data dependency}.
\item a relation $G.\lADDR\suq [G.\lR];G.\lPO;[G.\lR\cup G.\lW]$, called \emph{address dependency}.
\item a relation $G.\lCTRL\suq [G.\lR];G.\lPO$, called \emph{control dependency},
that is forwards-closed under the program order: $G.\lCTRL;G.\lPO \suq G.\lCTRL$.
\item a relation $G.\lRMWDEP\suq [G.\lR];G.\lPO;[G.\erlab{}{}{}{\isex}]$, called \emph{CAS dependency}.
\item a relation $G.\lRF\suq \bigcup_{\loc\in\Loc} G.\ewlab{}{\loc}{}{} \times G.\erlab{}{\loc}{}{}$, called \emph{reads-from},
and satisfying:
$G.\lVAL(w)=G.\lVAL(r)$ for every $\tup{w,r}\in G.\lRF$; and
$w_1=w_2$ whenever $\tup{w_1,r},\tup{w_2,r}\in G.\lRF$ (that is, $G.\lRF^{-1}$ is functional).
\item a strict partial order $G.\lCO \suq \bigcup_{\loc\in\Loc} G.\ewlab{}{\loc}{}{} \times G.\ewlab{}{\loc}{}{}$, 
called \emph{coherence order} (\aka \emph{modification order}). 
\end{enumerate}
\end{definition}

\subsection{Mapping programs to executions}
\label{sec:prog-to-exec}

Sequential programs are mapped to execution graphs by means of an operational semantics.
Its states have the form $\sigma=\tup{\sprog,pc,\Phi,G,\PhiD,S}$,
where 
$\sprog$ is the thread's sequential program;
$pc\in\N$ points to the next instruction in $\sprog$ to be executed;
$\Phi:\Reg\to\Val$ maps register names to the values they store (extended to expressions in the obvious way);
$G$ is an execution graph (denoted by $\sigma.\lG$);
$\PhiD: \Reg \to \powerset{G.\lR}$ maps each register name to the set of events that were used to compute the register's value;
and $S \suq G.\lR$ maintains the set of events having a control dependency to the current program point.
The $\PhiD$ and $S$ components are used to calculate the dependency edges in $G$.
$\PhiD$ is extended to expressions in the obvious way (\eg
 $\PhiD(n) \defeq \emptyset$ and $\PhiD(e_1 + e_2) \defeq \PhiD(e_1) \cup \PhiD(e_2)$).
Note that the executions graphs produced by this semantics represent traces of one thread,
and as such, they are quite degenerate: $G.\lPO$ totally orders $G.\lE$ and $G.\lRF = G.\lCO=\emptyset$.

\begin{figure}[t]
\small\centering$
\begin{array}{|c|l|}\hline
\text{When } \sprog(pc)=\ldots   & \multicolumn{1}{c|}{\text{we have the following constraints relating $pc,pc',\Phi,\Phi',G,G',\PhiD,\PhiD',S,S'$:}}
\\[1pt]\hline
\assignInst{r}{e} & \inarr{
  pc'=pc+1 \land \Phi' = \Phi[r:=\Phi(e)] \land G'=G \land \PhiD' = \PhiD[r:=\PhiD(e)] \land
  S'=S 
  }
\\[1pt]\hline
\ifGotoInst{e}{n} & \inarr{ 
(\Phi(e)\neq 0 \implies pc' =n) \land (\Phi(e)= 0 \implies pc' =pc +1) \land {}\\
G=G' \land \Phi=\Phi' \land \PhiD' = \PhiD \land S'=S \cup \PhiD(e) 
}
\\[1pt]\hline
\writeInst{o_\lW}{e_1}{e_2}  & \inarr{ 
  G'=\addG{G}(\tid,\ewlab{o_\lW}{\Phi(e_1)}{\Phi(e_2)}{\normal},\emptyset,\PhiD(e_2),\PhiD(e_1),S,\emptyset)
\land{}\\
pc'=pc+1 \land \Phi'=\Phi \land \PhiD' = \PhiD \land S'=S 
}
\\[1pt]\hline
\readInst{o_\lR}{r}{e} & \inarr{ \exists v.~ 
  G'=\addG{G}(\tid,\erlab{o_\lR}{\Phi(e)}{v}{\isnotex},\emptyset,\emptyset,\PhiD(e),S,\emptyset)
\land{}\\
pc'=pc+1 \land \Phi' =\Phi[r:=v] \land \PhiD' = \PhiD[r:=\set{\tup{\tid,\nextG{G}}}] \land S'=S 
}
\\[1pt]\hline
\incInst{o_\lR}{o_\lW}{r}{e_1}{e_2}{o_\lU} & \inarr{
  \exists v.~
  \letdef{a_\lR,G_\lR}{\tup{\tid,\nextG{G}},
    \addG{G}(\tid,\erlab{o_\lR}{\Phi(e_1)}{v}{\isex},\emptyset,\emptyset,\PhiD(e_1),S,\emptyset)} \\
  G'=\addG{G_\lR}(\tid,\ewlab{o_\lW}{\Phi(e_1)}{v+\Phi(e_2)}{o_\lU},\set{a_\lR},\set{a_\lR} \cup \PhiD(e_2),\PhiD(e_1),S,\emptyset) 
\land{}\\
pc'=pc+1  \land \Phi' =\Phi[r:=v] \land \PhiD' = \PhiD[r:=\set{a_\lR}] \land S'=S 
}
\\[1pt]\hline
\casInst{o_\lR}{o_\lW}{r}{e}{e_\lR}{e_\lW}{o_\lU} & \inarr{ \exists v.~
  \letdef{a_\lR, G_\lR}{
    \tup{\tid,\nextG{G}},
    \addG{G}(\tid,\erlab{o_\lR}{\Phi(e)}{v}{\isex},\emptyset,\emptyset,\PhiD(e),S,\PhiD(e_\lR)) 
  } \\
  pc'=pc+1 \land \Phi' =\Phi[r:=v] \land \PhiD' = \PhiD[r:=\set{a_\lR}] \land S'=S \land {} \\
  (v \neq \Phi(e_\lR) \implies G' = G_\lR)  \land {} \\
  (v = \Phi(e_\lR) \implies 
    G'=\addG{G_\lR}(\tid,\ewlab{o_\lW}{\Phi(e)}{\Phi(e_\lW)}{o_\lU},\set{a_\lR},\PhiD(e_\lW),\PhiD(e),S,\emptyset) )
}
\\[1pt]\hline
\fenceInst{o_\lF}   & \inarr{
  G'=\addG{G}(\tid,\flab{o_\lF},\emptyset,\emptyset,\emptyset,S,\emptyset) \land
 pc'=pc+1 \land \Phi' =\Phi \land \PhiD'=\PhiD \land
S'=S 
}
\\[1pt]\hline
\end{array}$
\vspace{-6pt}
\caption{The relation $\tup{\sprog,pc,\Phi,G,\PhiD,S} \rightarrow_{i} \tup{\sprog,pc',\Phi',G',\PhiD',S'}$
representing a step of thread $i$.}
\label{fig:prog-to-exec}
\end{figure}

The \emph{initial state} is $\sigma_0(\sprog)\defeq \tup{\sprog,0,\lambda r.\;0,G_\emptyset,\lambda r.\;\emptyset,\emptyset}$
($G_\emptyset$  denotes the empty execution),
\emph{terminal} states are those in which $pc \nin \dom{\sprog}$,
and the transition relation is given in \cref{fig:prog-to-exec}.
It uses the notations $\nextG{G}$ to obtain the next serial number in a thread execution graph $G$
($\nextG{G} \defeq \size{G.\lE}$)
and $\addG{G}$ to append an event with thread identifier $\tid$ and label $l$ to $G$:

\begin{definition}
For an execution graph $G$, $\tid\in\Tid$, $l\in\Lab$, and 
$E_\lRMW,E_\lDATA,E_\lADDR,E_\lCTRL,E_\lRMWDEP\suq G.\lR$,
$\addG{G}(\tid,l,E_\lRMW,E_\lDATA,E_\lADDR,E_\lCTRL,E_\lRMWDEP)$
denotes the execution graph $G'$ given by:
$$
\begin{array}{@{}r@{\;}l@{\quad}@{}r@{\;}l@{}}
G'.\lE &= G.\lE \uplus \set{\tup{\tid,\nextG{G}}}  &
G'.\lLAB &= G.\lLAB \uplus \set{\tup{\tid,\nextG{G}}\mapsto l} \\
G'.\lRMW &= G.\lRMW \uplus (E_\lRMW \times \set{\tup{\tid,\nextG{G}}}) &
G'.\lDATA &= G.\lDATA \uplus (E_\lDATA \times \set{\tup{\tid,\nextG{G}}}) \\
G'.\lADDR &= G.\lADDR \uplus (E_\lADDR \times \set{\tup{\tid,\nextG{G}}}) &
G'.\lCTRL &= G.\lCTRL \uplus (E_\lCTRL \times \set{\tup{\tid,\nextG{G}}}) \\
G'.\lRMWDEP &= G.\lRMWDEP \uplus (E_\lRMWDEP \times \set{\tup{\tid,\nextG{G}}}) &
G'.\lRF &= G.\lRF \qquad
G'.\lCO = G.\lCO 
\end{array}$$%
\end{definition}


Besides the explicit calculation of dependencies, the operational semantics is standard.

\begin{example}
The only novel ingredient is the \emph{CAS dependency} relation, which tracks 
reads that affect the success of a CAS instruction.
As an example, consider the following program.

{\makeatletter
\let\par\@@par
\par\parshape0
\everypar{}\begin{wrapfigure}{r}{0.58\textwidth}\centering
\small$
\inarr{
  \readInst{\rlx}{a}{x} \\
  \casInst{\rlx}{\rlx}{b}{y}{a}{1}{\normal} \\
  \writeInst{\rlx}{z}{2} \\
}
\;\;\vrule\;\;
\inarr{\begin{tikzpicture}[yscale=0.8,xscale=1]
  \node (i11) at (0,  0) {$\erlab{\rlx}{x}{0}{\isnotex}$};
  \node (i12) at (0, -1) {$\erlab{\rlx}{y}{0}{\isex}$};
  \node (i13) at (0, -2) {$\ewlab{\rlx}{y}{1}{\normal}$};
  \node (i14) at (0, -3) {$\ewlab{\rlx}{z}{2}{}$};
  \draw[po] (i11) edge node[left] {\smaller$\lPO$} (i12);
  \draw[po] (i12) edge node[left] {\smaller$\lPO$} (i13);
  \draw[po] (i13) edge node[left] {\smaller$\lPO$} (i14);
  \draw[deps,bend left=15] (i11) edge node[right] {\smaller$\lRMWDEP$} (i12);
  \draw[rmw,bend left=15] (i12) edge node[right] {\smaller$\lRMW$} (i13);
\end{tikzpicture}}
\;\;\vrule\;\;
\inarr{\begin{tikzpicture}[yscale=0.8,xscale=1]
  \node (i11) at (0,  0) {$\erlab{\rlx}{x}{1}{\isnotex}$};
  \node (i12) at (0, -1) {$\erlab{\rlx}{y}{0}{\isex}$};
  \node (i14) at (0, -3) {$\ewlab{\rlx}{z}{2}{}$};
  \draw[po] (i11) edge node[left] {\smaller$\lPO$} (i12);
  \draw[deps,bend left=15] (i11) edge node[right] {\smaller$\lRMWDEP$} (i12);
  \draw[po] (i12) edge node[left] {\smaller$\lPO$} (i14);
\end{tikzpicture}}$
 \end{wrapfigure}

\noindent
The CAS instruction may produce a write event or not, 
depending on the value read from $y$ and the value of register $a$, which is assigned at the read instruction from $x$.
The $\lRMWDEP$ edge reflects the latter dependency in both representative execution graphs.
The mapping of \IMM's CAS instructions to POWER and ARM ensures that the $\lRMWDEP$ on the source execution graph
implies a control dependency to all $\lPO$-later events in the target graph (see \cref{sec:hardware}).
\qed
  \par}
\end{example}




Next, we define \emph{program executions}.

\begin{definition}
For an execution graph $G$ and $\tid\in\Tid$,
$G\rst{\tid}$ denotes the execution graph given by:
$$
\begin{array}{@{}r@{\;}l@{}@{}r@{\;}l@{}}
G\rst{\tid}.\lE &= G.\lE\tidmod{\tid}  &
G\rst{\tid}.\lLAB &= G.\lLAB\rst{G.\lE\tidmod{\tid}} \\
G\rst{\tid}.\lRMW &= [G.\lE\tidmod{\tid}] ; G.\lRMW ; [G.\lE\tidmod{\tid}] &
G\rst{\tid}.\lDATA &= [G.\lE\tidmod{\tid}] ;  G.\lDATA ; [G.\lE\tidmod{\tid}] \\
G\rst{\tid}.\lADDR &= [G.\lE\tidmod{\tid}] ;  G.\lADDR ; [G.\lE\tidmod{\tid}] &
G\rst{\tid}.\lCTRL &= [G.\lE\tidmod{\tid}] ;  G.\lCTRL ; [G.\lE\tidmod{\tid}]  \\
G\rst{\tid}.\lRMWDEP &= [G.\lE\tidmod{\tid}] ;  G.\lRMWDEP ; [G.\lE\tidmod{\tid}] \qquad &
G\rst{\tid}.\lRF & = G\rst{\tid}.\lCO = \emptyset
\end{array}$$%
\end{definition}

\begin{definition}[Program executions]
An execution graph $G$ is a \emph{(full) execution graph of a program $\prog$} if
 for every $\tid\in\Tid$, there exists a (terminal) state $\sigma$ such that
$\sigma.\lG= G\rst{\tid}$ and $\sigma_0(\prog(\tid)) \to_{\tid}^* \sigma$.
\end{definition}

Now, given the \IMM-consistency predicate presented in the next section,
we define the set of allowed outcomes.

\begin{definition}
\label{def:initialized}
$G$ is \emph{initialized} if
$\initev{\loc} \in G.\lE$ 
for every $\loc\in G.\lLOC[G.\lE]$.
\end{definition}

\begin{definition}
\label{def:outcome}
\label{def:allowed}
A function $O:\Loc \to \Val$ is:
\begin{itemize}
\item an \emph{outcome of an execution graph} $G$
if for every $\loc\in\Loc$, either $O(\loc)=G.\lVAL(w)$ for some $G.\lCO$-maximal event $w\in G.\ewlab{}{\loc}{}{}$,
or $O(\loc)=0$ and $G.\ewlab{}{\loc}{}{}=\emptyset$.
\item an \emph{outcome of a program} $\prog$ under \IMM
if $O$ is an outcome of some \IMM-consistent initialized full execution graph of $\prog$.
\end{itemize} 
\end{definition}

\section{\IMM: The intermediate model}
\label{sec:model}

In this section, we introduce the consistency predicate of \IMM.
The first (standard) conditions require that every read reads from some write
($\codom{G.\lRF} = G.\lR$), 
and that the coherence order totally orders the writes to each location
($G.\lCO$ totally orders $G.\ewlab{}{\loc}{}{}$ for every $\loc\in\Loc$).
In addition, we require (1) coherence, 
(2) atomicity of RMWs, and 
(3) global ordering,
which are formulated in the rest of this section, 
with the help of several derived relations on events.

The rest of this section is described in the context of a given execution graph $G$,
and the `$G.$' prefix is omitted.
In addition, we employ the following notational conventions:
for every relation $\lX \suq \lE \times \lE$, 
we denote by $\lmakeE{\lX}$ its thread external restriction ($\lmakeE{\lX}\defeq\lX \setminus \lPO$),
while $\lmakeI{\lX}$ denotes its thread internal restriction ($\lmakeI{\lX} \defeq\lX \cap \lPO$).
We denote by $\lX\rst\lLOC$ its restriction to accesses to the same location 
($\lX\rst \lLOC \defeq \textstyle\bigcup_{\loc\in \Loc} [\erlab{}{\loc}{}{} \cup \ewlab{}{\loc}{}{}]\mathbin{;} \lX \mathbin{;} [\erlab{}{\loc}{}{} \cup \ewlab{}{\loc}{}{}]$).

\subsection{Coherence}
\label{sec:model-coherence}

\emph{Coherence} is a basic property of memory models that implies that 
programs with only one shared location behave as if they were running under sequential consistency.
Hardware memory models typically enforce coherence by requiring that
$\lPO\rst\lLOC \cup \lRF \cup \lCO \cup \lRF^{-1}\mathbin{;}\lCO$ is acyclic
(\aka \emph{SC-per-location}).
Language models, however, strengthen the coherence requirement by replacing $\lPO$ with 
a ``happens before'' relation $\lHB$ that includes $\lPO$ as well as inter-thread synchronization.
Since \IMM's purpose is to verify the implementability of language-level models,
we take its coherence axiom to be close to those of language-level models.
Following \cite{scfix}, we therefore define the following relations:
\begin{align*}
 \lRS &\defeq [\lW]\mathbin{;}\lPO\rst\lLOC \mathbin{;}[\lW] \cup [\lW]\mathbin{;}(\lPO\rst\lLOC^? \mathbin{;} \lRF\mathbin{;}\lRMW)^*
 \tag{\emph{release sequence}} \\
 \lRELEASE &\defeq ([\ewlab{\rel}{}{}{}] \cup [\flab{\sqsupseteq \rel}]\mathbin{;}\lPO) \mathbin{;}\lRS
 \tag{\emph{release prefix}} \\
 \lSW & \defeq  \lRELEASE\mathbin{;} (\lRFI \cup \lPO\rst\lLOC^?\mathbin{;}\lRFE)\mathbin{;} ([\erlab{\acq}{}{}{}] \cup \lPO\mathbin{;}[\flab{\sqsupseteq\acq}])
 \tag{\emph{synchronizes with}} \\
 \lHB &\defeq (\lPO \cup \lSW)^+ \tag{\emph{happens-before}}  \\
 \lFR &\defeq \lRF^{-1}\mathbin{;}\lCO \tag{\emph{from-read/read-before}}  \\
 \lECO &\defeq \lRF \cup \lCO\mathbin{;}\lRF^? \cup \lFR\mathbin{;}\lRF^? \tag{\emph{extended coherence order}}
\end{align*}
We say that $G$ is coherent if $\lHB\mathbin{;}\lECO^?$ is irreflexive,
or equivalently $\lHB\rst\lLOC \cup \lRF \cup \lCO \cup \lFR$ is acyclic.

\begin{example}[Message passing]
Coherence disallows the weak behavior of the MP litmus test:
\[
\inarrII{
  \writeInst{\rlx}{x}{1} \\
  \writeInst{\rel}{y}{1} 
}{
  \readInst{\acq}{a}{y} \comment{1} \\
  \readInst{\rlx}{b}{x} \comment{0} \\
}
\quad\vrule\quad
\inarr{\begin{tikzpicture}[yscale=0.8,xscale=1]
  \node (i11) at (0,  0) {$\ewlab{\rlx}{x}{1}{}$};
  \node (i12) at (0, -1) {$\ewlab{\rel}{y}{1}{}$};
  \node (i21) at (3,  0) {$\erlab{\acq}{y}{1}{\isnotex}$};
  \node (i22) at (3, -1) {$\erlab{\rlx}{x}{0}{\isnotex}$};

  \draw[po] (i11) edge (i12);
  \draw[po] (i21) edge (i22);
  \draw[rf] (i12) edge node[below] {\small$\lRF$} (i21);
  \draw[fr] (i22) edge node[above] {\small$\lFR$} (i11);
\end{tikzpicture}}
\]
To the right, we present the execution yielding the annotated weak outcome.%
\footnote{
We use program comments notation to refer to the read values in the behavior we discuss.
These can be formally expressed as program outcomes (\cref{def:outcome}) by storing the
read values in distinguished memory locations.
In addition, for conciseness, we do not show the implicit initialization events
and the $\lRF$ and $\lCO$ edges from them,
and include the $o_\lU$ subscript only for writes in $\codom{G.\lRMW}$
(recall that $G.\ewlab{}{}{}{\strong} \suq \codom{G.\lRMW}$).}
The $\lRF$-edges and the induced $\lFR$-edge are determined by the annotated outcome.
The displayed execution is inconsistent
because the $\lRF$-edge between the release write and the acquire read constitutes an $\lSW$-edge,
and hence there is an $\lHB \mathbin{;} \lFR$ cycle.
\qed
\end{example}

\begin{remark}
\label{rem:rs}
Adept readers may notice that our definition of $\lSW$ is stronger (namely, our $\lSW$ is larger)
than the one of RC11~\cite{scfix}, which (following the fixes of \citet{c11comp} to C/C++11's original definition)
employs the following definitions:
\begin{align*}
 \lRSs &\defeq [\lW]\mathbin{;}\lPO\rst\lLOC^?\mathbin{;}(\lRF\mathbin{;}\lRMW)^* &
 \lRELEASEs &\defeq ([\ewlab{\rel}{}{}{}] \cup [\flab{\sqsupseteq \rel}]\mathbin{;}\lPO) \mathbin{;}\lRSs \\
 \lSWs &\defeq  \lRELEASE\mathbin{;} \lRF\mathbin{;} ([\erlab{\acq}{}{}{}] \cup \lPO\mathbin{;}[\flab{\sqsupseteq\acq}]) &
  \lHBs &\defeq  (\lPO \cup \lSWs)^+ 
\end{align*}
The reason for this discrepancy is our aim to allow the splitting of release writes and RMWs
into release fences followed by relaxed operations. Indeed, as explained in \cref{sec:power},
the soundness of this transformation allows us to simplify our proofs.
In RC11~\cite{scfix}, as well as in C/C++11~\cite{Batty-al:POPL11},
this rather intuitive transformation, as we found out, is actually \emph{unsound}.
To see this consider the following example:
$$\inarrIII{
\writeInst{\rlx}{y}{1} \\
\writeInst{\rel}{x}{1} 
}{
\incInst{\acq}{\rel}{a}{x}{1}{} \comment{1} \\
\writeInst{\rlx}{x}{3} 
}{
\readInst{\acq}{b}{x} \comment{3} \\
\readInst{\rlx}{c}{y}  \comment{0}
}$$
(R)C11 disallows the annotated behavior, due in particular to the release sequence formed
from the release exclusive write to $x$ in the second thread to its subsequent relaxed write.
However, if we split the increment to 
$\fenceInst{\rel} ; \incInst{\acq}{\rlx}{a}{x}{1}{}$ (which intuitively may seem stronger),
the release sequence will no longer exist, and the annotated behavior will be allowed.
\IMM overcomes this problem by strengthening $\lSW$ in a way that ensures a synchronization edge
for the transformed program as well.
In \cref{sec:power}, we establish the soundness of this splitting transformation in general.
In addition, note that, as we show in \cref{sec:hardware}, existing hardware support \IMM's
stronger synchronization without strengthening the intended compilation schemes.
On the other hand, in our proof concerning the promising semantics in \cref{sec:prom-compilation}, it is more convenient to use
RC11's definition of $\lSW$, which results in a (provably) stronger (namely, allowing less behaviors) model
that still accounts for all the behaviors of the promising semantics.%
\footnote{The C++ committee is currently revising the release sequence definition
aiming to simplify it and relate it to its actual uses. 
The analysis here may provide further input to that discussion.}
\end{remark}

\subsection{RMW atomicity}
\label{sec:model-atomicity}

\emph{Atomicity of RMWs} simply states that the load of a successful RMW reads from the 
immediate $\lCO$-preceding write before the RMW's store.
Formally, $\lRMW \cap (\lFRE\mathbin{;}\lCOE) = \emptyset$, which says that there is no other
write ordered between the load and the store of an RMW.

\begin{example}[Violation of RMW atomicity]
The following behavior violates the fetch-and-add atomicity 
and is disallowed by all known weak memory models.
\[
\inarrII{
  \incInst{\rlx}{\rlx}{a}{x}{1}{\normal} \comment{0}
}{
  \writeInst{\rlx}{x}{2}  \\
  \readInst{\rlx}{b}{x} \comment{1} \\
}
\quad\vrule\quad
\inarr{\begin{tikzpicture}[yscale=0.8,xscale=1]
  \node (i11) at (0,  0) {$\erlab{\rlx}{x}{0}{\isex}$};
  \node (i12) at (0, -1) {$\ewlab{\rlx}{x}{1}{\normal}$};
  \node (i21) at (3,  0) {$\ewlab{\rlx}{x}{2}{}$};
  \node (i22) at (3, -1) {$\erlab{\rlx}{x}{1}{\isnotex}$};

  \draw[rmw] (i11) edge node[right] {\small$\lRMW$} (i12);
  \draw[po] (i21) edge (i22);
  \draw[fr] (i11) edge node[above] {\small$\lFRE$} (i21);
  \draw[mo,bend left=8] (i21) edge node[above] {\small$\lCOE$} (i12);
  \draw[rf] (i12) edge node[below] {\small$\lRF$} (i22);
\end{tikzpicture}}
\]
To the right, we present an inconsistent execution corresponding to the outcome 
omitting the initialization event for conciseness.
The $\lRF$ edges and the induced $\lFRE$ edge are forced by the annotated outcome,
while the $\lCOE$ edge is forced because of coherence: 
\ie ordering the writes in the reverse order yields a coherence violation.
The atomicity violation is thus evident.
\qed
\end{example}

\subsection{Global Ordering Constraint}
\label{sec:global}
The third condition---the \emph{global ordering} constraint---is the most complicated 
and is used to rule out out-of-thin-air behaviors.
We will incrementally define a relation $\lAR$ that we require to be acyclic. 

First of all, $\lAR$ includes the external reads-from relation, $\lRFE$,
and the ordering guarantees induced by memory fences and release/acquire accesses.
Specifically, release writes enforce an ordering to any previous event of the same thread,
acquire reads enforce the ordering to subsequent events of the same thread,
while fences are ordered with respect to both prior and subsequent events.
As a final condition, release writes are ordered before any subsequent writes to the same location: 
this is needed for maintaining release sequences.
\begin{align*}
 \lBOB & \defeq \inarr{
 	\lPO\mathbin{;}[\ewlab{\rel}{}{}{}] \cup [\erlab{\acq}{}{}{}]\mathbin{;} \lPO \cup \lPO\mathbin{;}[\lF] \cup [\lF]\mathbin{;}\lPO 
    \cup [\ewlab{\rel}{}{}{}]\mathbin{;}\lPO\rst\lLOC\mathbin{;}[\lW] 
        \tag{\emph{barrier order}}
   }\\
 \lAR &\defeq \lRFE \cup \lBOB \cup \ldots     \tag{\emph{acyclicity relation, more cases to be added}}
\end{align*}
Release/acquire accesses and fences in \IMM play a double role: 
they induce synchronization similar to RC11 as discussed in \cref{sec:model-coherence} 
and also enforce intra-thread instruction ordering as in hardware models. 
The latter role ensures the absence of `load buffering' behaviors in the following examples.

\begin{example}[Load buffering with release writes]
Consider the following program, 
whose annotated outcome disallowed by ARM, POWER,
and the promising semantics.\footnote{In this and other examples, when saying whether a behavior of a program is allowed by ARM/POWER,
we implicitly mean the intended mapping of the program's primitive accesses to ARM/POWER.
See \cref{sec:hardware} for details.}
\[
\inarrII{
  \readInst{\rlx}{a}{x} \comment{1} \\
  \writeInst{\rel}{y}{1} \\
}{\readInst{\rlx}{b}{y} \comment{1} \\
  \writeInst{\rel}{x}{1}  \\
}
\quad\vrule\quad
\inarr{\begin{tikzpicture}[yscale=0.8,xscale=1]
  \node (i11) at (0,  0) {$\erlab{\rlx}{x}{1}{\isnotex}$};
  \node (i12) at (0, -1) {$\ewlab{\rel}{y}{1}{}$};
  \node (i21) at (3,  0) {$\erlab{\rlx}{y}{1}{\isnotex}$};
  \node (i22) at (3, -1) {$\ewlab{\rel}{x}{1}{}$};

  \draw[po] (i11) edge node[right] {\small$\lBOB$} (i12);
  \draw[po] (i21) edge node[right] {\small$\lBOB$} (i22);
  \draw[rf] (i12) edge (i21);
  \draw[rf] (i22) edge node[pos=0.6, below] {\small$\lRFE$} (i11);
\end{tikzpicture}}
\]
\IMM disallows the outcome because of the $\lBOB\cup\lRFE$ cycle.
\qed
\end{example}

\begin{example}[Load buffering with acquire reads]
Consider a variant of the previous program with acquire loads and relaxed stores:
\[
\inarrII{
  \readInst{\acq}{a}{x} \comment{1} \\
  \writeInst{\rlx}{y}{1} \\
}{\readInst{\acq}{b}{y} \comment{1} \\
  \writeInst{\rlx}{x}{1}  \\
}
\quad\vrule\quad
\inarr{\begin{tikzpicture}[yscale=0.8,xscale=1]
  \node (i11) at (0,  0) {$\erlab{\acq}{x}{1}{\isnotex}$};
  \node (i12) at (0, -1) {$\ewlab{\rlx}{y}{1}{}$};
  \node (i21) at (3,  0) {$\erlab{\acq}{y}{1}{\isnotex}$};
  \node (i22) at (3, -1) {$\ewlab{\rlx}{x}{1}{}$};

  \draw[po] (i11) edge node[right] {\small$\lBOB$} (i12);
  \draw[po] (i21) edge node[right] {\small$\lBOB$} (i22);
  \draw[rf] (i12) edge  (i21);
  \draw[rf] (i22) edge node[pos=0.6, below] {\small$\lRFE$} (i11);
\end{tikzpicture}}
\]
\IMM again declares the presented execution as inconsistent
following both ARM and POWER, which forbid the annotated outcome.
The promising semantics, in contrast, allows this outcome
to support a higher-level optimization (namely, elimination of redundant acquire reads).
\qed
\end{example}

Besides orderings due to fences, hardware preserves certain orderings due to syntactic code dependencies.
Specifically, whenever a write depends on some earlier read by a chain of syntactic dependencies or internal reads-from edges 
(which are essentially dependencies through memory),
then the hardware cannot execute the write until it has finished executing the read, 
and so the ordering between them is preserved.
We call such preserved dependency sequences the \emph{preserved program order} ($\lPPO$) and include it in $\lAR$.
In contrast, dependencies between read events are not always preserved,
and so we do not incorporate them in the $\lAR$ relation.
\begin{align*}
 \lDEPS &\defeq \lDATA \cup \lCTRL \cup \lADDR\mathbin{;}\lPO^? \cup \lRMWDEP \cup [\erlab{}{}{}{\isex}]\mathbin{;} \lPO \tag{\emph{syntactic dependencies}}  \\
 \lPPO &\defeq [\lR] \mathbin{;} (\lDEPS \cup \lRFI )^{+} \mathbin{;} [\lW] \tag{\emph{preserved program order}}
\\
 \lAR &\defeq \lRFE \cup \lBOB \cup \lPPO \cup \ldots 
\end{align*}
The extended constraint rules out the weak behaviors of variants of the load buffering example
that use syntactic dependencies to enforce an ordering.

\begin{example}[Load buffering with an address dependency]
Consider a variant of the previous program with an address-dependent read instruction 
in the middle of the first thread:
\[
\inarrII{
  \readInst{\rlx}{a}{x} \comment{1} \\
  \readInst{\rlx}{b}{y + a} \\
  \writeInst{\rlx}{y}{1} \\
}{\readInst{\rlx}{c}{y} \comment{1} \\
  \writeInst{\rel}{x}{1}  \\
}
\quad\vrule\quad
\inarr{\begin{tikzpicture}[yscale=0.5,xscale=1]
  \node (i11) at (1,  0) {$\erlab{\rlx}{x}{1}{\isnotex}$};
  \node (i12) at (0, -1) {$\erlab{\rlx}{y+1}{0}{\isnotex}$};
  \node (i13) at (1, -2) {$\ewlab{\rlx}{y}{1}{}$};
  \node (i21) at (4,  0) {$\erlab{\rlx}{y}{1}{\isnotex}$};
  \node (i22) at (4, -2) {$\ewlab{\rel}{x}{1}{}$};

  \draw[deps,out=180,in=90] (i11) edge node[left] {\small$\lADDR$} (i12);
  \draw[po,out=270,in=180] (i12) edge node[left]  {\small$\lPO$} (i13);
  \draw[po] (i21) edge node[right] {\small$\lBOB$} (i22);
  \draw[rf] (i13) edge  (i21);
  \draw[rf] (i22) edge node[pos=0.6, below] {\small$\lRFE$} (i11);
\end{tikzpicture}}
\]
The displayed execution is \IMM-inconsistent because of the $\lADDR\mathbin{;}\lPO\mathbin{;}\lRFE\mathbin{;}\lBOB\mathbin{;}\lRFE$ cycle.
Hardware implementations cannot produce the annotated behavior because the write to $y$ cannot be issued until 
it has been determined that its address does not alias with $y + a$, 
which cannot be determined until the value of $x$ has been read.
\qed
\end{example}

Similar to syntactic dependencies, 
$\lRFI$ edges are guaranteed to be preserved only on dependency paths from a read to a write,
not otherwise.

\begin{example}[$\lRFI$ is not always preserved]
  \label{ex:rfi-not-preserved}
Consider the following program, whose annotated outcome is allowed by ARMv8.
\[
\inarrII{
  \phantom{e_1\colon} \readInst{\rlx}{a}{x} \comment{1} \\
  e_1\colon \writeInst{\rel}{y}{1} \\
  e_2\colon \readInst{\rlx}{b}{y} \comment{1} \\
  \phantom{e_2\colon} \writeInst{\rlx}{z}{b} \\
}{\readInst{\rlx}{c}{z} \comment{1} \\
  \writeInst{\rlx}{x}{c}
}
\quad\vrule\quad
\inarr{\begin{tikzpicture}[yscale=0.5,xscale=1]
  \node (i11) at (1,  0) {$\erlab{\rlx}{x}{1}{\isnotex}$};
  \node (i12) at (0, -1) {$e_1\colon \ewlab{\rel}{y}{1}{}$};
  \node (i13) at (0, -2) {$e_2\colon \erlab{\rlx}{y}{1}{\isnotex}$};
  \node (i14) at (1, -3) {$\ewlab{\rlx}{z}{1}{}$};

  \node (i21) at (4, -0.5) {$\erlab{\rlx}{y}{1}{\isnotex}$};
  \node (i22) at (4, -2.5) {$\ewlab{\rlx}{z}{1}{}$};

  \draw[po,out=180,in=90] (i11) edge node[left] {\small$\lBOB$\;} (i12);
  \draw[rf,out=0,in=0] (i12) edge node[right] {\small $\lRFI$} (i13);
  \draw[deps,out=270,in=180] (i13) edge node[left] {\small $\lDEPS$} (i14);
  \draw[deps] (i21) edge node[right] {\small $\lDEPS$} (i22);
  \draw[rf] (i22) edge (i11);
  \draw[rf] (i14) edge node[pos=.66,below=2pt] {\small $\lRFE$} (i21);
\end{tikzpicture}}
\]
To the right, we show the corresponding execution (the $\lRF$ edges are forced because of the outcome).
Had we included $\lRFI$ unconditionally as part of $\lAR$, we would have disallowed the behavior,
because it would have introduced an $\lAR$ edge between events $e_1$ and $e_2$, and therefore an $\lAR$ cycle.
\qed
\end{example}

Note that we do not include $\lFRI$ in $\lPPO$ since it is not preserved in ARMv7~\cite{herding-cats}
(unlike in x86-TSO, POWER, and ARMv8).
Thus, as ARMv7 (as well as the Flowing and POP models of ARM in~\cite{arm8-model}),
\IMM allows the weak behavior from \cite[§6]{trns}.

Next, we include $\lDETOUR \defeq (\lCOE\mathbin{;} \lRFE) \cap \lPO$ in $\lAR$.
It captures the case when a read $r$ does not read from an earlier write $w$ to the same location
but from a write $w'$ of a different thread. 
In this case, both ARM and POWER enforce an ordering between $w$ and $r$.
Since the promising semantics also enforces such orderings
(due to the certification requirement in every future memory, see~\cref{sec:prom-compilation}),
\IMM also enforces the ordering by including $\lDETOUR$ in $\lAR$.
\begin{example}[Enforcing $\lDETOUR$]
The annotated behavior of the following program is disallowed by POWER, ARM, and the promising semantics, 
and so it must be disallowed by \IMM.
\[
\inarrIII{
  \writeInst{\rlx}{x}{1}
}{\readInst{\rlx}{a}{z}  \comment{1} \\
  \writeInst{\rlx}{x}{a -1} \\
  \readInst{\rlx}{b}{x} \comment{1} \\
  \writeInst{\rlx}{y}{b} \\
}{\readInst{\rlx}{c}{y}  \comment{1} \\
  \writeInst{\rlx}{z}{c}
}
\quad\vrule\quad
\inarr{\begin{tikzpicture}[yscale=0.8]
  \node (i01) at (0, -1.5) {$\ewlab{\rlx}{x}{1}{}$};

  \node (i11) at (2,  0) {$\erlab{\rlx}{z}{1}{\isnotex}$};
  \node (i12) at (2, -1) {$\ewlab{\rlx}{x}{0}{}$};
  \node (i13) at (2, -2) {$\erlab{\rlx}{x}{1}{\isnotex}$};
  \node (i14) at (2, -3) {$\ewlab{\rlx}{y}{1}{}$};

  \node (i21) at (4, -0.8) {$\erlab{\rlx}{y}{1}{\isnotex}$};
  \node (i22) at (4, -2.2) {$\ewlab{\rlx}{z}{1}{}$};

  \draw[mo,out=180,in=25] (i12) edge node[above] {\smaller\smaller$\lCOE$} (i01);
  \draw[deps] (i11) edge node[left] {\smaller\smaller$\lDEPS$} (i12);
  \draw[rf,out=-25,in=180] (i01) edge node[below] {\smaller\smaller$\lRFE$} (i13);
  \draw[po] (i12) edge node[left] {} (i13);
  \draw[deps] (i13) edge node[left] {\smaller\smaller$\lDEPS$} (i14);

  \draw[deps] (i21) edge node[right] {\smaller\smaller$\lDEPS$} (i22);
  \draw[rf] (i22) edge node[pos=.54,below=3pt] {\smaller\smaller$\lRFE$} (i11);
  \draw[rf] (i14) edge (i21);
\end{tikzpicture}}
\]
If we were to exclude $\lDETOUR$ from the acyclicity condition, 
the execution of the program shown above to the right would have been allowed by \IMM.
\qed
%
\end{example}

We move on to a constraint about SC fences.
Besides constraining the ordering of events from the same thread, 
SC fences induce inter-thread orderings whenever there is a coherence path between them.
Following the RC11 model \cite{scfix}, 
we call this relation $\lPSC$ and include it in $\lAR$.
\begin{align*}
 \lPSC &\defeq [\lF^\sco] \mathbin{;}\lHB\mathbin{;} \lECO\mathbin{;}\lHB\mathbin{;} [\lF^\sco] \tag{\emph{partial SC fence order}} \\
 \lAR &\defeq \lRFE \cup \lBOB \cup \lPPO \cup \lDETOUR \cup \lPSC \cup \ldots 
\end{align*}

\begin{example}[Independent reads of independent writes]
\label{ex:iriw}
Similar to POWER, \IMM is not ``multi-copy atomic''~\cite{tutorial_arm_power} (or ``memory atomic''~\cite{Zhang18}).
In particular, it allows the weak behavior of the IRIW litmus test even with release-acquire accesses. 
To forbid the weak behavior, one has to use SC fences:
\[\mbox{\small
$\inarrIV{
  \readInst{\acq}{a}{x} \comment{1} \\
  \fenceInst{\sco} \\ 
  \readInst{\acq}{b}{y} \comment{0} 
}{
  \writeInst{\rel}{x}{1}
}{
  \writeInst{\rel}{y}{1}
}{  
  \readInst{\acq}{c}{y} \comment{1} \\
  \fenceInst{\sco} \\ 
  \readInst{\acq}{d}{x} \comment{0} \\
}$}
~~\vrule~
\inarr{\begin{tikzpicture}[xscale=0.75,yscale=1.4]
  \node (11)  at (-1,0.5) {$\ewlab{\rel}{x}{1}{}$ };
  \node (21)  at (-3,1) {$\erlab{\acq}{x}{1}{\isnotex}$ };
  \node (215) at (-3,0.5) {$\flab{\sco}$ };
  \node (22)  at (-3,0) {$\erlab{\acq}{y}{0}{\isnotex}$ };
  \node (31)  at (3,1) {$\erlab{\acq}{y}{1}{\isnotex}$ };
  \node (315) at (3,0.5) {$\flab{\sco}$ };
  \node (32)  at (3,0) {$\erlab{\acq}{x}{0}{\isnotex}$ };
  \node (41)  at (1,0.5) {$\ewlab{\rel}{y}{1}{}$ };
  \draw[po] (21) edge (215);
  \draw[po] (215) edge (22);
  \draw[po] (31) edge (315);
  \draw[po] (315) edge (32);
  \draw[rf,out=90,in=0] (11) edge node[above]{\small $\lRF$} (21);
  \draw[rf,out=90,in=180] (41) edge node[above]{\small $\lRF$}  (31);
  \draw[fr,out=180,in=300,looseness=0.3] (32) edge node[pos=.4,below]{\small $\lFR$}  (11);
  \draw[fr,out=0,in=240,looseness=0.3] (22) edge node[pos=.4,below]{\small $\lFR$}  (41);
\end{tikzpicture}}
\]
The execution corresponding to the weak outcome is shown to the right.
For soundness \wrt the promising semantics, \IMM declares this execution to be inconsistent
(which is also natural since it has an SC fence between every two instructions).
It does so due to the $\lPSC$ cycle: each fence reaches the other by a $\lPO\mathbin{;}\lFR\mathbin{;}\lRF\mathbin{;}\lPO \suq \lPSC$ path.
When the SC fences are omitted, since POWER allows the weak outcome, \IMM allows it as well.
\qed
\end{example}

\begin{example}
To illustrate why we make $\lPSC$ part of $\lAR$, rather than a separate acyclicity condition (as in RC11), consider the following program, 
whose annotated outcome is forbidden by the promising semantics.
\[
\inarrIII{ 
\readInst{\rlx}{a}{y} \comment{1} \\
\fenceInst{\sco} \\
\readInst{\rlx}{b}{z} \comment{0} \\
}{
\writeInst{\rlx}{z}{1} \\
\fenceInst{\sco} \\
\readInst{\rlx}{c}{x} \comment{1} \\
}{
\quad\readInst{\rlx}{d}{x} \comment{1} \\
\quad\ifGotoInst{d \neq 0}{L} \\
\quad\writeInst{\rlx}{y}{1}  \\
L\colon
}
\quad\vrule\quad
\inarr{\begin{tikzpicture}[yscale=0.8]
  \node (i01) at (0,  0) {$\erlab{\rlx}{y}{1}{\isnotex}$};
  \node (i02) at (0, -1) {$\flab{\sco}$};
  \node (i03) at (0, -2) {$\erlab{\rlx}{z}{0}{\isnotex}$};

  \node (i11) at (2,  0) {$\ewlab{\rlx}{z}{1}{}$};
  \node (i12) at (2, -1) {$\flab{\sco}$};
  \node (i13) at (2, -2) {$\erlab{\rlx}{x}{1}{\isnotex}$};

  \node (i21) at (4,  0) {$\erlab{\rlx}{x}{1}{\isnotex}$};
  \node (i22) at (4, -2) {$\ewlab{\rlx}{y}{1}{}$};

  \draw[po] (i01) edge node[left] {\smaller\smaller$\lBOB$} (i02); 
  \draw[po] (i02) edge (i03);
  \draw[po] (i11) edge (i12); 
  \draw[po] (i12) edge node[left] {\smaller\smaller$\lBOB$}  (i13);
  \draw[ppo] (i21) edge node[right] {\smaller\smaller$\lPPO$} (i22);
  
  \draw[rf] (i22) edge[out=145,in=350] (i01);
  \draw[fr,bend left=10] (i03) edge node[above] {\smaller\smaller$\lFR$ } (i11);
  \draw[rf] (i13) edge node[above] {\smaller\smaller$\lRFE\ \ \ \ $} (i21);
  \draw[sc,bend right=10] (i02) edge node[below] {\smaller\smaller$\lPSC$} (i12);
\end{tikzpicture}}
\]
The execution corresponding to that outcome is shown to the right.
For soundness \wrt the promising semantics, \IMM declares this execution inconsistent,
due to the $\lAR$ cycle. \qed
\end{example}

The final case we add to $\lAR$ is to support the questionable semantics of RMWs in the promising semantics.
The promising semantics requires the ordering between the store of a release RMW and subsequent stores to be preserved,
something that is not generally guaranteed by ARMv8.
For this reason, to be able to compile the promising semantics to \IMM,
and still  keep \IMM as a sound abstraction of ARMv8,
we include the additional ``RMW mode'' in RMW instructions,
which propagates to their induced write events.
Then, we include $[\ewlab{}{}{}{\strong}] \mathbin{;} \lPO \mathbin{;} [\lW]$ in $\lAR$, yielding the following (final) definition:
\begin{align*}
 \lAR &\defeq \lRFE \cup \lBOB \cup \lPPO \cup \lDETOUR \cup \lPSC  \cup [\ewlab{}{}{}{\strong}] \mathbin{;} \lPO \mathbin{;} [\lW]
\end{align*}

\begin{example}
  \label{ex:strong-rmw}
  The following example demonstrates the problem in the intended mapping of the promising semantics to ARMv8. 
\[
\inarrII{
  \readInst{\rlx}{a}{y} \comment{1} \\
  \writeInst{\rlx}{z}{a}
}{
  \readInst{\rlx}{b}{z} \comment{1} \\
  \incInst{\rlx}{\rel}{c}{x}{1}{\strong} \comment{0} \\
  \writeInst{\rlx}{y}{c+1} \\
}
\quad\vrule\quad
\inarr{\begin{tikzpicture}[yscale=0.8,xscale=1]
  \node (i11) at (0,  -0.5) {$\erlab{\rlx}{y}{1}{\isnotex}$};
  \node (i12) at (0, -2.5) {$\ewlab{\rlx}{z}{1}{}$};
  \node (i21) at (3,  0) {$\erlab{\rlx}{z}{1}{\isnotex}$};
  \node (i22) at (3,  -1) {$\erlab{\rlx}{x}{0}{\isex}$};
  \node (i23) at (3,  -2) {$\ewlab{\rel}{x}{1}{\strong}$};
  \node (i24) at (3,  -3) {$\ewlab{\rlx}{y}{1}{}$};
  \draw[deps] (i11) edge node[left] {\small$\lDATA$} (i12);
  \draw[rmw] (i22) edge node[right] {\small$\lRMW$} (i23);
  \draw[po] (i21) edge (i22);
  \draw[deps,out=0,in=0] (i22) edge node[right] {\small$\lDATA$} (i24);
    \draw[po,out=0,in=0] (i21) edge node[right] {\small$\lBOB$} (i23);
  \draw[po] (i23) edge (i24);
  \draw[rf,out=180,in=0] (i24) edge  (i11);
  \draw[rf,out=0,in=180] (i12) edge node[left] {\small$\lRFE$} (i21);
\end{tikzpicture}}
\]
The promising semantics disallows the annotated behavior (it requires a promise of $y=1$,
but this promise cannot be certified for a future memory that will not allow the atomic 
increment from $0$---see \cref{sec:promise} and \cref{ex:w-strong-iss}).
It is disallowed by \IMM due to the $\lAR$ cycle (from the read of $y$):
$\lPPO\mathbin{;}\lRFE\mathbin{;}\lBOB\mathbin{;}[\ewlab{}{}{}{\strong}] \mathbin{;} \lPO \mathbin{;} [\lW]\mathbin{;}\lRFE$.
Without additional barriers, ARMv8 allows this behavior.
Thus, our mapping of \IMM to ARMv8 places a barrier (``ld fence'') after strong RMWs (see \cref{sec:arm8}).
\qed
\end{example}

\subsection{Consistency}

Putting everything together, \IMM-consistency is defined as follows.

\begin{definition}
\label{def:model}
$G$ is called \emph{\IMM-consistent} if the following hold:
\begin{itemize}
\item $\codom{G.\lRF} = G.\lR$. \labelAxiom{$\lRF$-completeness}{ax:comp}
\item For every location $\loc\in\Loc$, $G.\lCO$ totally orders $G.\ewlab{}{\loc}{}{}$. \labelAxiom{$\lCO$-totality}{ax:total}
\item $G.\lHB\mathbin{;}G.\lECO^?$ is irreflexive. \labelAxiom{coherence}{ax:coh}
\item $G.\lRMW \cap (G.\lFRE\mathbin{;}G.\lCOE) = \emptyset$.  \labelAxiom{atomicity}{ax:at}
\item $G.\lAR$ is acyclic. \labelAxiom{no-thin-air}{ax:nta}
\end{itemize}
\end{definition}

\section{From \IMM to hardware models}
\label{sec:hardware}

In this section, we provide mappings from \IMM to the main hardware architectures and establish their soundness. 
That is, if some behavior is allowed by a target architecture on a target program, then
it is also allowed by \IMM on the source of that program.
Since the models of hardware we consider are declarative, we formulate the soundness results
on the level of execution graphs, keeping the connection to programs only implicit.
Indeed, a mapping of \IMM instructions to real architecture instructions
naturally induces a mapping of \IMM execution graphs to target architecture execution graphs.
Then, it suffices to establish that the consistency of a target execution graph (as defined by the target memory model)
entails the \IMM-consistency of its source execution graph.
This is a common approach for studying declarative models, (see, \eg \cite{c11comp}),
and allows us to avoid orthogonal details of the target architectures' instruction sets.

Next, we study the mapping to POWER (\cref{sec:power}) and ARMv8 (\cref{sec:arm8}).
We note that \IMM can be straightforwardly shown to be weaker than x86-TSO,
and thus the identity mapping (up to different syntax) is a correct compilation scheme from
\IMM to x86-TSO.
The mapping to ARMv7 is closely related to POWER, and it is discussed in \cref{sec:power} as well.
RISC-V~\cite{RISCV,www:risc-herd} is stronger than ARMv8 and therefore soundness of mapping to it from IMM
follows from the corresponding ARMv8 result.

\subsection{From \IMM to POWER}
\label{sec:power}

\begin{figure}[t]
\small
\centering
$\begin{array}{@{}r@{}c@{{}\approx{}}l@{\qquad \qquad \quad}r@{}c@{{}\approx{}}l@{}}
\compile{\readInst{\rlx}{r}{e}} &&  ``\texttt{ld}" & 
\compile{\writeInst{\rlx}{e_1}{e_2}} &&  ``\texttt{st}" \\
\compile{\readInst{\acq}{r}{e}} && ``\texttt{ld;cmp;bc;isync}"  &
\compile{\writeInst{\rel}{e_1}{e_2}} && ``\texttt{lwsync;st}" \\ 
\compile{\fenceInst{\neq \sco}}  && ``\texttt{lwsync}" &
\compile{\fenceInst{\sco}} && ``\texttt{sync}" \\
\compile{\incInst{o_\lR}{o_\lW}{r}{e_1}{e_2}{o_\lU}} &&
\multicolumn{4}{@{}l@{}}{ 
  {\rm wmod}(o_\lW) \doubleplus
  ``\textsf{L:}\texttt{lwarx;stwcx.;bc  }\textsf{L}"
  \doubleplus {\rm rmod}(o_\lR)
}
\\ 
\compile{\casInst{o_\lR}{o_\lW}{r}{e}{e_\lR}{e_\lW}{o_\lU}} &&
\multicolumn{4}{@{}l@{}}{
  {\rm wmod}(o_\lW) \doubleplus
  ``\textsf{L:}\texttt{lwarx;cmp;bc } \textsf{Le}\texttt{;stwcx.;bc  }\textsf{L}\texttt{;}\textsf{Le:}"
  \doubleplus {\rm rmod}(o_\lR)
} \\
\multicolumn{6}{@{}l@{}}{
{\rm wmod}(o_\lW) \defeq o_\lW = \rel \; ? \; ``\texttt{lwsync;}" \; : \; ``" \qquad \qquad \qquad \qquad
{\rm rmod}(o_\lR) \defeq o_\lR = \acq \; ? \; ``\texttt{;isync}" \; : \; ``" 
}
\end{array}$
\caption{Compilation scheme from \IMM to POWER.}
\label{fig:compPower}
\end{figure}

The intended mapping of \IMM to POWER is presented schematically in \cref{fig:compPower}.
It follows the C/C++11 mapping~\cite{www:mappings} (see also~\cite{tutorial_arm_power}):
relaxed reads and writes are compiled down to plain machine loads and stores;
acquire reads are mapped to plain loads followed by a control dependent instruction fence;
release writes are mapped to plain writes preceded by a lightweight fence;
acquire/release/acquire-release fences are mapped to POWER's lightweight fences;
and SC fences are mapped to full fences.
The compilation of RMWs requires a loop which repeatedly uses
POWER's load-reserve/store-conditional instructions until the store-conditional succeeds.
RMWs are accompanied with barriers for acquire/release modes as reads and writes.
CAS instructions proceed to the conditional write only after checking that the loaded value meets the required condition.
Note that \IMM's strong RMWs are compiled  to POWER as normal RMWs.

To simplify our correctness proof, we take advantage of the fact  that 
release writes and release RMWs are compiled down as their relaxed counterparts
with a preceding $\fenceInst{\rel}$.
Thus, we consider the compilation as if it happens in two steps: first, release writes and RMWs are split
to release fences and their relaxed counterparts; and then, the mapping of \cref{fig:compPower} is applied
(for a program without release writes and release RMWs).
Accordingly, we establish 
(\emph{i}) the soundness of the split of release accesses;
and (\emph{ii}) the correctness of the mapping in the absence of release accesses.\footnote{Since \IMM does not have a
primitive that corresponds to POWER's instruction fence, we cannot apply the same trick for acquire reads.}
The first obligation is solely on the side of \IMM, and is formally presented next.

\begin{theorem}
  \label{thm:rel-write-to-fence}
Let $G$ be an \IMM execution graph such that 
$G.\lPO \mathbin{;}[G.\ewlab{\rel}{}{}{}] \suq G.\lPO^?\mathbin{;} [G.\flab{\rel}]\mathbin{;} G.\lPO \cup G.\lRMW$.
Let $G'$ be the \IMM execution graph obtained from $G$ by weakening the access modes of release write events
to a relaxed mode.
Then, \IMM-consistency of $G'$ implies \IMM-consistency of $G$. 
\end{theorem}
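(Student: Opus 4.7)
The plan is to check each of the five \IMM-consistency clauses for $G$, assuming they all hold for $G'$. Since $G$ and $G'$ share the same events, $\lRF$, $\lCO$, $\lRMW$, and dependency relations, the $\lRF$-completeness, $\lCO$-totality, and atomicity axioms will transfer immediately. Coherence and no-thin-air will have to be analysed jointly, because a coherence violation in $G$ may manifest as an $\lAR$-cycle in $G'$ rather than another coherence violation.

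The structural observation driving the argument is that the hypothesis associates to every weakened release write $w$ a release fence $f_w \lPO w$ in the same thread---either (a) when $w \notin \codom{G.\lRMW}$, $f_w$ is the immediate $\lPO$-predecessor of $w$, or (b) when $r_w \lRMW w$, $f_w$ is the $\lPO$-maximal release fence strictly before $r_w$. In both cases, every $a \lPO w$ distinct from $r_w$ (in Case~(b)) satisfies $a \lPO^? f_w$. Because $\lRS$ depends only on $\lW$, $\lPO$, $\lRF$, and $\lRMW$, all unchanged, $\lRS_G = \lRS_{G'}$; consequently every $\lSW_G$-edge $(w, e)$ that starts at $w$ via $[\ewlab{\rel}{}{}{}]$ is mirrored by $(f_w, e) \in \lSW_{G'}$ via the $[\flab{\sqsupseteq\rel}] ; \lPO$ alternative in $\lRELEASE$. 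Similarly, the two $\lBOB$-fragments involving $[\ewlab{\rel}{}{}{}]$ on weakened writes---$\lPO ; [\ewlab{\rel}{}{}{}]$ into $w$ and $[\ewlab{\rel}{}{}{}] ; \lPO\rst\lLOC ; [\lW]$ out of $w$---will be absorbed by $\lBOB_{G'}$-edges routed through $f_w$ using the preserved $\lPO ; [\lF]$ and $[\lF] ; \lPO$ clauses.

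I will then argue by contradiction: starting from a violation in $G$, I construct a violation in $G'$. For each lost $\lSW_G$-edge $(w, e)$ appearing in a $G$-cycle, I shift the source from $w$ to $f_w$, redirecting the incoming $\lPO$-edge using $a \lPO^? f_w$ (noting that $w$ is never a $\lSW$-target since it is a write, so incoming edges to $w$ are in $\lPO$); alternatively, when the underlying $\lSW$ is cross-thread via $\lRFE$, I can replace the $\lSW$-edge directly by its $\lRFE$-edge, which already sits in $\lAR_{G'}$. The problematic $\lBOB_G$-edges are absorbed analogously. The hardest part will be Case~(b), where the shift through $f_w$ fails for cycle-incoming edges originating at $r_w$ itself---one cannot route $r_w \to f_w$ since $f_w \lPO r_w$. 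The escape is the preserved $\lPPO_{G'}$-edges $(r_w, w)$ and $(r_w, b)$, which hold because $r_w \in \erlab{}{}{}{\isex}$ and $\lDEPS \supseteq [\erlab{}{}{}{\isex}] ; \lPO$; an $\lHB_G$-cycle routed through $r_w \lPO w \lSW_G v$ that resists direct translation to $\lHB_{G'}$ will instead translate into an $\lAR_{G'}$-cycle by combining $(r_w, w) \in \lPPO_{G'}$ with the $\lRFE$-edge underlying the lost $\lSW$. I expect this mode-crossing case analysis, together with the analogous shifting of $\lHB_G$-paths appearing inside $\lPSC$-edges, to be the most delicate bookkeeping in the formal Coq proof.
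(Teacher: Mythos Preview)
The paper does not include a pen-and-paper proof of this theorem; the result is established only in the accompanying Coq development, so there is no textual argument to compare against directly. Evaluating your sketch on its own merits:

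Your overall decomposition is sound. The mode-independent axioms transfer verbatim, and the real work is relating $\lHB_G,\lAR_G$ to $\lHB_{G'},\lAR_{G'}$. The rerouting through the associated release fence $f_w$ is the right mechanism, your observation that in Case~(a) $f_w$ is the \emph{immediate} $\lPO$-predecessor of $w$ is correct (apply the hypothesis to the immediate predecessor and use that it cannot lie strictly between itself and $w$), and using $[\lR^\isex]\mathbin{;}\lPO \suq \lDEPS$ to get $(r_w,w)\in\lPPO$ in Case~(b) is valid.

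There is, however, a gap in your handling of coherence in Case~(b). You propose that when the shift through $f_w$ fails (because the cycle enters $w$ from $r_w$), one ``translates into an $\lAR_{G'}$-cycle''. But a coherence violation has the shape $a\,\lHB_G\,b\,\lECO\,a$, and $\lECO$ is not contained in $\lAR_{G'}$; once you leave $\lHB_{G'}$ for $\lAR_{G'}$ on one segment you cannot close the cycle through the $\lECO$ edge. The same issue recurs for the $\lHB_G$-paths inside $\lPSC$, which also sandwich an $\lECO$. The actual argument needs a finer case split on the structure of the lost $\lSW_G$-edge $(w,e)$: when it is realised entirely intra-thread one already has $r_w\,\lPO\,e$ and hence $(r_w,e)\in\lHB_{G'}$, and when it is cross-thread the RMW structure of $(r_w,w)$ interacts with the surrounding $\lECO$ (for instance $b\,\lECO\,r_w$ together with $r_w\,\lRMW\,w$ forces $b\,\lECO\,w$, letting you re-cut the cycle at $w$; combined with $G'$-atomicity this eventually yields a $G'$-coherence violation rather than an $\lAR_{G'}$-cycle). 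Your sketch does not articulate this interaction, and the bare ``escape to $\lAR_{G'}$'' is not by itself sufficient.
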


Next, we establish the correctness of the mapping (in the absence of release writes) 
with respect to the model of the POWER architecture of~\citet{herding-cats}, which we denote by \POWER.
As \IMM, the \POWER model is declarative, defining allowed outcomes via consistent execution graphs.
Its labels are similar to \IMM's labels (\cref{def:label}) with the following exceptions:
\begin{itemize}
\item Read/write labels have the form 
$\prlab{x}{v}$ and $\pwlab{x}{v}$: they do not include additional modes.
\item There are three fence labels (listed here in increasing strength order):
an ``instruction fence'' ($\flab{\isync}$),
a ``lightweight fence'' ($\flab{\lwsync}$), and a ``full fence'' ($\flab{\sync}$).
\end{itemize}

In turn, \POWER execution graphs are defined as those of \IMM (cf.\ \cref{def:execution}),
except for the CAS dependency, $\lRMWDEP$, which is not present in \POWER executions.
The next definition presents the correspondence between \IMM execution graphs and their mapped \POWER ones
following the compilation scheme in \cref{fig:compPower}.

\begin{definition}
\label{def:cmp_exec_power}
Let $G$ be an \IMM execution graph with whole serial numbers ($\lSN[G.\lE] \suq \N$),
such that $G.\ewlab{\rel}{}{}{}=\emptyset$.
A \POWER execution graph $G_p$ \emph{corresponds} to $G$ if the following hold:
\begin{itemize}
\item 
$G_p.\lE = G.\lE \cup \set{\tup{i,n+0.5} \st \tup{i,n} \in (G.\erlab{\acq}{}{}{} \setminus \dom{G.\lRMW})
\cup \codom{[G.\erlab{\acq}{}{}{}]\mathbin{;}G.\lRMW}}$ 
\\ (new events are added after acquire reads and acquire RMW pairs)
\item  
$G_p.\lLAB = \set{e \mapsto \compile{G.\lLAB(e)} \st e\in G.\lE} \cup \set{ e\mapsto  \flab{\isync} \st e\in G_p.\lE \setminus G.\lE}$ where:
$$\begin{array}{r@{\;}l@{\qquad\quad}r@{\;}l}
\compile{\erlab{o_\lR}{x}{v}{s}}         & \defeq \prlab{x}{v}                     
& \compile{\flab{\acq}} = \compile{\flab{\rel}} = \compile{\flab{\acqrel}} &\defeq \flab{\lwsync} \\
\compile{\ewlab{o_\lW}{x}{v}{o_\lU}} & \defeq \pwlab{x}{v}  & 
\compile{\flab{\sco}}  &\defeq \flab{\sync}
\end{array}$$
\item  
$G.\lRMW = G_p.\lRMW$, $G.\lDATA = G_p.\lDATA$, and $G.\lADDR = G_p.\lADDR$
\\ (the compilation does not change RMW pairs and data/address dependencies)
\item
$G.\lCTRL \suq G_p.\lCTRL$
\\ (the compilation only adds control dependencies)
\item 
$[G.\erlab{\acq}{}{}{}] \mathbin{;} G.\lPO \suq G_p.\lRMW \cup G_p.\lCTRL$
\\ (a control dependency is placed from every acquire read)
\item
$[G.\erlab{}{}{}{\isex}] \mathbin{;}G.\lPO \suq G_p.\lCTRL \cup G_p.\lRMW \cap G_p.\lDATA $
\\ (exclusive reads entail a control dependency to any future event, 
except for their immediate exclusive write successor if arose from an atomic increment)
\item
$G.\lDATA \mathbin{;} [\codom{G.\lRMW}] \mathbin{;} G.\lPO \suq G_p.\lCTRL$
\\ (data dependency to an exclusive write entails a control dependency to any future event)
\item
$G.\lRMWDEP \mathbin{;} G.\lPO \suq G_p.\lCTRL$
\\ (CAS dependency to an exclusive read entails a control dependency to any future event)
\end{itemize}
\end{definition}

Next, we state our theorem that ensures \IMM-consistency if the corresponding \POWER
execution graph is \POWER-consistent.
Due to lack of space, we do not include here the (quite elaborate) definition of \POWER-consistency.
For that definition, we refer the reader to \cite{herding-cats} (\citeapp{sec:power_consistent}{Appendix B}
provides the definition we used in our development).

\begin{theorem}
  \label{thm:imm-to-power}
Let $G$ be an \IMM execution graph with whole serial numbers ($\lSN[G.\lE] \suq \N$),
such that $G.\ewlab{\rel}{}{}{}=\emptyset$,
and let $G_p$ be a \POWER execution graph that corresponds to $G$.
Then, \POWER-consistency of $G_p$ implies \IMM-consistency of $G$. 
\end{theorem}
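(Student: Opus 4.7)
The plan is to verify each of the five \IMM-consistency conditions of \cref{def:model} for $G$ by extracting the matching guarantee from POWER-consistency of $G_p$. Three of them require almost no work: $\lRF$-completeness and $\lCO$-totality transfer automatically since $G.\lRF$ and $G.\lCO$ coincide with $G_p.\lRF$ and $G_p.\lCO$ on the shared read/write events; and atomicity follows because $G.\lRMW$, $G.\lFRE$, and $G.\lCOE$ carry verbatim to $G_p$, so POWER's atomicity gives \IMM's. The remaining two axioms---coherence and acyclicity of $\lAR$---are the real content of the proof, and my strategy for both is to set up embedding lemmas translating each kind of \IMM edge to a POWER ordering whose acyclicity is guaranteed by one of POWER's axioms.

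For coherence, every $G.\lHB$ edge is either $G.\lPO$ (which survives into $G_p$) or $G.\lSW$. Thanks to the assumption $G.\ewlab{\rel}{}{}{}=\emptyset$, every release in $G$ originates from a release fence and therefore compiles to $\lLWSYNC$ in $G_p$; every acquire read in $G$ is followed in $G_p$ by an $\isync$ event together with the control dependency mandated by \cref{def:cmp_exec_power}. Combining these, each $G.\lSW$-edge is realized in $G_p$ by an $\lLWSYNC$-guarded chain lying in POWER's happens-before relation, whose composition with $\lECO$ is forbidden to cycle by POWER's SC-per-location axiom. The subtle point is that \IMM's release sequence $\lRS$ (see \cref{rem:rs}) is strictly stronger than the RC11 one: each $\lPO\rst\lLOC$ or $\lPO\rst\lLOC^?\mathbin{;}\lRF\mathbin{;}\lRMW$ segment inside an $\lRS$ has to be matched to a same-location POWER happens-before chain using POWER's SC-per-location together with its RMW atomicity.

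For the no-thin-air axiom I would unfold $G.\lAR = G.\lRFE \cup G.\lBOB \cup G.\lPPO \cup G.\lDETOUR \cup G.\lPSC \cup [G.\ewlab{}{}{}{\strong}]\mathbin{;} G.\lPO\mathbin{;}[G.\lW]$ and treat each summand separately. $\lRFE$ and $\lDETOUR$ map directly onto their POWER counterparts, both of which sit inside POWER's propagation order. $\lBOB$ splits by case: the two release-write subcases are vacuous by hypothesis; the fence subcases become $\lLWSYNC$ or $\lSYNC$ orderings; and the acquire-read subcase $[\erlab{\acq}{}{}{}]\mathbin{;}\lPO$ is realized in $G_p$ by the inserted $\isync$ event together with the $G_p.\lCTRL$ edge from the acquire read, forming a POWER control-isync barrier. $\lPSC$ unfolds into $\lSYNC$-bounded $\lHB\mathbin{;}\lECO\mathbin{;}\lHB$ paths that POWER's sync-propagation order subsumes, and the strong-RMW disjunct reduces to the fact that in POWER a store-conditional is propagation-ordered before program-order-subsequent writes via the $\lRMW$ pair.

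The main obstacle will be the $\lPPO$ case. \IMM defines $G.\lPPO = [G.\lR]\mathbin{;}(G.\lDEPS \cup G.\lRFI)^+\mathbin{;}[G.\lW]$, freely mixing syntactic dependencies with internal reads-from, whereas POWER's preserved program order is built inductively from the four base relations $\lii$, $\lic$, $\lci$, $\lcc$ via delicate closure rules. I would proceed by induction on the length of an \IMM dependency chain $r \to \cdots \to w$, showing that every prefix ending in a read lies in the $G_p$ relation $\lii \cup \lic$ and every prefix ending in a write lies in $\lcc$, using the standard inclusions $\lRFI \suq \lii$, $\lDATA \suq \lii$, $\lADDR\mathbin{;}\lPO^? \suq \lic$, and $\lCTRL \suq \lcc$. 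Transferring \IMM's $\lRMWDEP$ and $[\erlab{}{}{}{\isex}]\mathbin{;}\lPO$ components is where the extra $G_p.\lCTRL$ edges demanded by the last three bullets of \cref{def:cmp_exec_power} become indispensable, turning CAS- and exclusive-dependencies into genuine POWER control dependencies that feed into the $\lii/\lic/\lci/\lcc$ derivation.
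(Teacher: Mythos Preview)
The paper does not include a proof of this theorem in the text; it states the result and defers the argument to the Coq development. So there is no paper proof to compare your outline against, and I evaluate it on its own merits.

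Your decomposition is sound and the three easy axioms go through as you say. The coherence argument, however, has a real gap. You write that an $\lSW$-edge in $G$ becomes an $\lLWSYNC$-guarded chain in $\lHBP$ ``whose composition with $\lECO$ is forbidden to cycle by POWER's SC-per-location axiom.'' That cannot work: POWER's SC-per-location only constrains $\lPO\rst{\lLOC}\cup\lRF\cup\lFR\cup\lCO$; it says nothing about fences or $\lHBP$. A coherence violation of the MP shape, $w_1\to^{\lPO}w_2\to^{\lSW}r_1\to^{\lPO}r_2\to^{\lFR}w_1$, is ruled out in POWER only by the \emph{observation} axiom, using the $\lLWSYNC$ to form a $\lPROP$ edge from $w_1$ to $w_2$ and then $\lHBP^*$ to reach $r_2$. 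More generally, establishing \IMM coherence requires a case split dispatching different $\lHB\mathbin{;}\lECO$-cycle shapes to SC-per-location, observation, and propagation; SC-per-location alone handles only the purely-$\lPO$ cases.

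The $\lAR$-acyclicity argument has a similar structural hole: you never commit to which POWER axiom delivers the acyclicity. The components $\lRFE$, $\lPPO$, and the fence/acquire cases of $\lBOB$ can be absorbed into $\lHBP$, so cycles built only from those fall to \ref{ax:P-thinair}. But $\lPSC$ contains $\lECO$ and hence $\lCO$ and $\lFR$, which are not in $\lHBP$; an $\lAR$-cycle through $\lPSC$ must be discharged via propagation (and possibly observation), and you need to explain how the $\lHBP$-segments and the $\lPROP$-segments of such a cycle are stitched into a single relation that POWER forbids from cycling. Two smaller points: $\lDETOUR$ is a $\lW\to\lR$ edge, so it is not itself in $\lPPOP=[\lR]\mathbin{;}(\lii\cup\lic)\mathbin{;}[\lR\cup\lW]$ nor in $\lPROP$; it enters POWER's ordering only when composed with a following dependency chain via the $\lci$-rules, so it cannot be treated as a standalone edge. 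And your $\lPPO$ induction invariant is mis-stated: the $\lii/\lic/\lci/\lcc$ calculus tracks \emph{both} endpoints, so the right invariant is that $\lR\to\lR$ prefixes lie in $\lii$ and $\lR\to\lW$ prefixes in $\lic$, with the step cases using $\lii\mathbin{;}\lii\suq\lii$, $\lii\mathbin{;}\lic\suq\lic$, $\lic\mathbin{;}\lci\suq\lii$, $\lic\mathbin{;}\lcc\suq\lic$.
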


The ARMv7 model in~\cite{herding-cats} is very similar to the POWER model.
There are only two differences.
First, ARMv7 lacks an analogue for POWER's lightweight fence ($\lwsync$).
Second, ARMv7 has a weaker preserved program order than POWER, 
which in particular does not always include $[G.\erlab{}{}{}{}]; G.\lPO\rst{G.\lLOC}; [G.\ewlab{}{}{}{}]$
(the ${\lPO\rst{\lLOC}}/{\lcc}$ rule is excluded, see \citeapp{sec:power_consistent}). 
In our proofs for POWER, however, we never rely on POWER's $\lPPO$, but rather assume the weaker one of ARMv7.
The compilation schemes to ARMv7 are essentially the
same as those to POWER substituting the corresponding
ARMv7 instructions for the POWER ones: $\texttt{dmb}$ instead of $\texttt{sync}$
and $\texttt{lwsync}$, and $\texttt{isb}$ instead of $\texttt{isync}$. 
Thus, the correctness of compilation to ARMv7 follows directly 
from the correctness of compilation to POWER.

\subsection{From \IMM to ARMv8}
\label{sec:arm8}

\begin{figure}[t]
\small
\centering
$\begin{array}{@{}r@{}c@{}l@{\qquad \qquad \qquad \;\;}r@{}c@{{}\approx{}}l@{}}
\compile{\readInst{\rlx}{r}{e}} & {}\approx{} &  ``\texttt{ldr}" & 
\compile{\writeInst{\rlx}{e_1}{e_2}} &&  ``\texttt{str}" \\
\compile{\readInst{\acq}{r}{e}} & {}\approx{} & ``\texttt{ldar}"  &
\compile{\writeInst{\rel}{e_1}{e_2}} && ``\texttt{stlr}" \\ 
\compile{\fenceInst{\acq}}  & {}\approx{} & ``\texttt{dmb.ld}" &
\compile{\fenceInst{\neq\acq}} && ``\texttt{dmb.sy}" \\
\compile{\incInst{o_\lR}{o_\lW}{r}{e_1}{e_2}{o_\lU}} & {}\approx{} &
\multicolumn{4}{@{}l@{}}{
  ``\textsf{L:}" \doubleplus {\rm ld}(o_\lR) \doubleplus
  {\rm st}(o_\lW) \doubleplus ``\texttt{bc  }\textsf{L}" \doubleplus
  {\rm dmb}(o_\lU)
}\\ 
\compile{\casInst{o_\lR}{o_\lW}{r}{e}{e_\lR}{e_\lW}{o_\lU}} & {}\approx{} &
\multicolumn{4}{@{}l@{}}{  
  ``\textsf{L:}" \doubleplus {\rm ld}(o_\lR) \doubleplus ``\texttt{cmp;bc } \textsf{Le}
  \texttt{;}" \doubleplus {\rm st}(o_\lW) \doubleplus
  ``\text{bc  }\textsf{L}\texttt{;}\textsf{Le:}"
  \doubleplus
  {\rm dmb}(o_\lU)
} 
\end{array}$ \\
$\begin{array}{@{}l@{}l@{\quad \qquad}l@{}}
{\rm ld}(o_\lR) & \defeq o_\lR = \acq \; ? \; ``\texttt{ldaxr;}" \; : \; ``\texttt{ldxr;}" &
{\rm st}(o_\lW) \defeq o_\lW = \rel \; ? \; ``\texttt{stlxr.;}" \; : \; ``\texttt{stxr.;}" \\
{\rm dmb}(o_\lU) \, & 
\multicolumn{2}{@{}l@{}}{\defeq o_\lU = \strong \; ? \; ``\texttt{;dmb.ld}" \; : ``"} 
\end{array}$
\caption{Compilation scheme from \IMM to ARMv8.}
\label{fig:compArm}
\end{figure}

The intended mapping of \IMM to ARMv8 is presented schematically in \cref{fig:compArm}.
It is identical to the mapping to POWER (\cref{fig:compPower}), except for the following:
\begin{itemize}
\item Unlike POWER, ARMv8 has machine instructions for acquire loads (\texttt{ldar}) and release stores (\texttt{stlr}),
which are used instead of placing barriers next to plain loads and stores.
\item ARMv8 has a special \texttt{dmb.ld} barrier that is used for \IMM's acquire fences.
On the other side, it lacks an analogue for \IMM's release fence, for which a full barrier (\texttt{dmb.sy}) 
is used.
\item As noted in \cref{ex:strong-rmw}, the mapping of \IMM's \emph{strong} RMWs requires placing
a \texttt{dmb.ld} barrier after the exclusive write.
\end{itemize}

As a model of the ARMv8 architecture,
we use its recent official declarative model~\cite{ARMv82model} (see also \cite{Pulte-al:POPL18})
which we denote by \ARM.\footnote{We only describe the fragment of the model that is needed for mapping of \IMM,
thus excluding  sequentially consistent reads and \texttt{isb} fences.}
Its labels are given by:
\begin{itemize}
\item \ARM read label: $\erlab{o_\lR}{\loc}{v}{}$ where $\loc\in\Loc$, $v\in\Val$, and $o_\lR\in\set{\rlx,\lQ}$.
\item \ARM write label: $\ewlab{o_\lW}{\loc}{v}{}$ where $\loc\in\Loc$, $v\in\Val$, and $o_\lW\in\set{\rlx,\lL}$.
\item \ARM fence label: $\flab{o_\lF}$ where $o_\lF\in\set{\ld,\full}$.
\end{itemize}
In turn, \ARM's execution graphs are defined as \IMM's ones,
except for the CAS dependency, $\lRMWDEP$, which is not present in \ARM executions.
As we did for POWER, we first interpret the intended compilation on execution graphs:

\begin{definition}
\label{def:cmp_exec_arm}
Let $G$ be an \IMM execution graph with whole serial numbers ($\lSN[G.\lE] \suq \N$).
An \ARM execution graph $G_a$ \emph{corresponds} to $G$ if the following hold (we skip the explanation of conditions that appear in \cref{def:cmp_exec_power}):
\begin{itemize}
\item 
$G_a.\lE = G.\lE \cup \set{\tup{i,n+0.5} \st \tup{i,n} \in G.\ewlab{}{}{}{\strong}}$ 
\\ (new events are added after strong exclusive writes)
\item  
$G_a.\lLAB = \set{ e \mapsto \compile{G.\lLAB(e)} \st e\in G.\lE} \cup \set{ e\mapsto \flab{\ld} \st e\in G_a.\lE \setminus G.\lE}$ where:
$$\begin{array}{r@{\;}l@{\qquad\quad}r@{\;}l}
\compile{\erlab{\rlx}{x}{v}{s}} & \defeq \rlab{\rlx}{x}{v}  &  \compile{\ewlab{\rlx}{x}{v}{o_\lU}} & \defeq \wlab{\rlx}{x}{v} \\
\compile{\erlab{\acq}{x}{v}{s}} & \defeq \rlab{\lQ}{x}{v}  & \compile{\ewlab{\rel}{x}{v}{o_\lU}} & \defeq \wlab{\lL}{x}{v} \\
\compile{\flab{\acq}} & \defeq \flab{\ld}& \compile{\flab{\rel}} =\compile{\flab{\acqrel}} = \compile{\flab{\sco}} & \defeq \flab{\full}
\end{array}$$
\item  
$G.\lRMW = G_a.\lRMW$, $G.\lDATA = G_a.\lDATA$, and $G.\lADDR = G_a.\lADDR$
\item
$G.\lCTRL \suq G_a.\lCTRL$
\item
$[G.\erlab{}{}{}{\isex}] \mathbin{;}G.\lPO \suq G_a.\lCTRL \cup G_a.\lRMW \cap G_a.\lDATA $
\item
$G.\lRMWDEP \mathbin{;} G.\lPO \suq G_a.\lCTRL$
\end{itemize}
\end{definition}

Next, we state our theorem that ensures \IMM-consistency if the corresponding \ARM
execution graph is \ARM-consistent.
Again, due to lack of space, we do not include here the definition of \ARM-consistency.
For that definition, we refer the reader to \cite{ARMv82model,Pulte-al:POPL18} (\citeapp{sec:arm_consistent}{Appendix C}
provides the definition we used in our development).

\begin{theorem}
  \label{thm:imm-to-arm}
Let $G$ be an \IMM execution graph with whole serial numbers ($\lSN[G.\lE] \suq \N$),
and let $G_a$ be an \ARM execution graph that corresponds to $G$.
Then, \ARM-consistency of $G_a$ implies \IMM-consistency of $G$. 
\end{theorem}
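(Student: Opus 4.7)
The plan is to verify each clause of \IMM-consistency (cf.\ \cref{def:model}) for $G$, using \ARM-consistency of $G_a$. The easy part is the bookkeeping: $G_a$ is built from $G$ by inserting $\flab{\ld}$ fences only after strong exclusive writes and relabeling, so the $\lRF$ and $\lCO$ relations, the set $G.\lRMW$, and derived relations $\lFR,\lECO,\lFRE,\lCOE$ transfer verbatim between $G$ and $G_a$ on the original events. Hence \ref{ax:comp} and \ref{ax:total} are immediate, and \ARM's atomicity axiom (which has the same shape) gives \ref{ax:at}.

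For \ref{ax:coh} and \ref{ax:nta}, the core task is to exhibit, for each basic IMM edge, a corresponding \ARM ordering that is guaranteed to be acyclic by \ARM-consistency (specifically, by its external and coherence axioms). I would introduce an "embedding lemma" that, in a single statement, shows $G.\lPO \subseteq G_a.\lPO$ (since insertions are strictly between existing events) and $G.\lECO \subseteq G_a.\lECO$, and then handle each constituent of $G.\lSW$ and $G.\lAR$ in turn:
\begin{itemize}
\item $G.\lBOB$: from \cref{def:cmp_exec_arm}, $\rel$/$\acq$ accesses translate to $\lL$/$\lQ$ accesses, $\acqrel,\rel,\sco$ fences to $\lDMBSY$ and $\acq$ fences to $\lDMBLD$, so every $\lBOB$ edge is covered by \ARM's barrier-induced ordering.
\item $G.\lPPO$: both syntactic dependencies ($\lDATA,\lADDR,\lCTRL,\lRMWDEP$) and $\lRFI$ followed by a syntactic dependency to a write lie in \ARM's preserved program order; the side condition $[\lR]\mathbin{;}\ldots\mathbin{;}[\lW]$ in $\lPPO$ matches exactly the case in which \ARM does preserve such internal reads-from chains.
\item $G.\lDETOUR$ and $G.\lRFE$: fall under \ARM's observed-by/ordered-before.
\item $G.\lPSC$: each $\sco$ fence is a $\lDMBSY$, so the cycle $[\lDMBSY]\mathbin{;}\lHB\mathbin{;}\lECO\mathbin{;}\lHB\mathbin{;}[\lDMBSY]$ is forbidden by \ARM (full barriers globally order everything around them).
\item $[\ewlab{}{}{}{\strong}]\mathbin{;}\lPO\mathbin{;}[\lW]$: this is where the inserted $\flab{\ld}$ fence after every strong write pays off---it creates a $\lBOB$-path in $G_a$ from the strong write to any subsequent write in program order (cf.\ \cref{ex:strong-rmw}).
\end{itemize}
The coherence of $G$ then reduces to showing $G.\lSW \subseteq G_a.\lOBS^+$ (or a similar \ARM ordered-before closure), which, combined with $\lPO$ preservation and $\lECO$ preservation, yields an \ARM-cycle from any supposed $G.\lHB\mathbin{;}G.\lECO^?$ cycle.

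The principal obstacle I expect is the release-sequence component of $\lSW$, namely $\lRS = [\lW]\mathbin{;}\lPO\rst\lLOC\mathbin{;}[\lW]\cup [\lW]\mathbin{;}(\lPO\rst\lLOC^?\mathbin{;}\lRF\mathbin{;}\lRMW)^*$, because \IMM's stronger $\lSW$ (see \cref{rem:rs}) forces us to track synchronization through arbitrary chains of RMWs on the same location and through same-location $\lPO$ edges. I would handle this by a dedicated inductive lemma: any $\lRS$-chain in $G$ lifts to an $[\lW]\mathbin{;}\lCO^*\mathbin{;}[\lW]$-path in $G_a$ (using RMW atomicity to chain the successive $\lRF;\lRMW$ steps through $\lCO$), and composed with the surrounding $\rel$-write/fence on the left and $\acq$-read/fence on the right, this gives an \ARM-ordering between the release and acquire events via \ARM's observed-by together with $\lL$-store and $\lQ$-load semantics. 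A secondary subtlety is the split treatment of $\lRFI$: since $\lSW$ can traverse $\lRFI$ to an acquire read or an $\acq$ fence, I need to argue that the $\lRFI$ is either subsumed by same-thread $\lPO$ ordering or followed by an $\lQ$/$\lDMBLD$ that ARM respects. Once these lemmas are in place, the remaining assembly of the five axioms is largely mechanical.
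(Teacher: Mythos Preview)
The paper gives no proof of this theorem in the text; the result is stated and the argument is deferred entirely to the Coq development. Your overall decomposition---transferring $\lRF$, $\lCO$, $\lRMW$ directly, then embedding each constituent of $G.\lAR$ and $G.\lSW$ into \ARM's acyclic relation $(\lOBS \cup \lDOB \cup \lAOB \cup \lBOB)^+$---is the natural route and almost certainly what the mechanized proof does.

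There is, however, one genuine gap in your outline: the coherence argument. You write that once $G.\lSW$ embeds into \ARM's ordered-before, ``combined with $\lPO$ preservation and $\lECO$ preservation, [this] yields an \ARM-cycle from any supposed $G.\lHB\mathbin{;}G.\lECO^?$ cycle.'' But the cycle you obtain lives in $(G_a.\lPO \cup (\lOBS\cup\lDOB\cup\lAOB\cup\lBOB)^+ \cup G_a.\lECO)$, and \emph{neither} of \ARM's axioms forbids that directly: the external axiom bans only $(\lOBS\cup\lDOB\cup\lAOB\cup\lBOB)$-cycles (no raw $\lPO$, no internal $\lECO$), and \textsc{sc-per-loc} bans only $(\lPO\rst\lLOC \cup \lRF \cup \lCO \cup \lFR)$-cycles. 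You still need an argument that decomposes the hybrid cycle---typically by showing that each $\lPO$-segment either is bracketed by release/acquire or fence events (and hence already in $\lBOB$), or is confined to a single location and can be pushed into the $\lECO$ part via \textsc{sc-per-loc}. This is where the actual work of the coherence case sits, and your outline does not address it.

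A secondary imprecision: for $[\lWstrong]\mathbin{;}\lPO\mathbin{;}[\lW]$, \ARM's $\lBOB$ clause for the inserted $\lDMBLD$ is $[\lR]\mathbin{;}\lPO\mathbin{;}[\lDMBLD]\mathbin{;}\lPO$, so the fence orders $\lPO$-earlier \emph{reads} (in particular the RMW's read part, or whatever read precedes the strong write on the incoming $\lAR$ edge) with later events---there is no ordered-before edge originating at the strong write itself. This still suffices to close any $\lAR$-cycle, but you must compose with the edge entering $w$ rather than treat $[\lWstrong]\mathbin{;}\lPO\mathbin{;}[\lW]$ as a standalone ordered-before edge.
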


\section{From C11 and RC11 to \IMM}
\label{sec:c11}

In this section, we establish the correctness of the mapping from the C11 and RC11 models to \IMM.
Since C11 and RC11 are defined declaratively and \IMM-consistency is very close to (R)C11-consistency,
these results are straightforward.

Incorporating the fixes from \citet{c11comp} and \citet{scfix} to the original C11 model of \citet{Batty-al:POPL11},
and restricting attention to the fragment of C11 that has direct \IMM counterparts (thus, excluding non-atomic and SC accesses),
C11-consistency is defined follows.

\begin{definition}
\label{def:c11}
$G$ is called \emph{C11-consistent} if the following hold:
\begin{itemize}
\item $\codom{G.\lRF} = G.\lR$.
\item For every location $\loc\in\Loc$, $G.\lCO$ totally orders $G.\ewlab{}{\loc}{}{}$.
\item $G.\lHBs\mathbin{;}G.\lECO^?$ is irreflexive.
\item $G.\lRMW \cap (G.\lFRE\mathbin{;}G.\lCOE) = \emptyset$.
\item $[\lF^\sco] \mathbin{;}(\lHBs \cup \lHBs\mathbin{;} \lECO\mathbin{;}\lHBs)\mathbin{;} [\lF^\sco]$ is acyclic.
\end{itemize}
\end{definition}

It is easy to show that \IMM-consistency implies C11-consistency,
and consequently, the identity mapping is a correct compilation from this fragment of C11 to \IMM.
This result can be extended to include non-atomic and SC accesses as follows:
\squishlist
\item Non-atomic accesses provide weaker guarantees than relaxed accesses,
and are not needed for accounting for \IMM's behaviors.
Put differently, one may assume that the compilation from C11 to \IMM
first strengthens all non-atomic accesses to relaxed accesses.
Compilation correctness then follows from the soundness of
this strengthening and our result that excludes non-atomics. 
\item The semantics of SC accesses in C11 was shown to be too strong in~\cite{Manerkar-al:CoRR16,scfix}
to allow the intended compilation to POWER and ARMv7. 
If one applies the fix proposed in~\cite{scfix}, 
then compilation correctness could be established following their reduction,
that showed that it is sound to globally split SC accesses to SC fences and release/acquire accesses on the source level.
This encoding yields the (two) expected compilation schemes for SC loads and stores on x86, ARMv7, and POWER.
On the other hand, handling ARMv8's specific instructions for SC accesses is left for future work.
We note that the usefulness and the ``right semantics'' for SC accesses is still under discussion.
The Promising semantics, for instance, does not have primitive SC accesses at all and implements them using SC fences.
\squishend

In turn, RC11 (ignoring the part related to SC accesses) is obtained by strengthening \cref{def:c11}
with a condition asserting that $G.\lPO \cup G.\lRF$ is acyclic.
To enforce the additional requirement, the mapping of RC11 places a (control) dependency 
or a fence between every relaxed read and subsequent  relaxed write. 
It is then straightforward to define the correspondence between source (RC11) execution graphs and
target (\IMM) ones, and prove that \IMM-consistency of the target graph implies RC11-consistency 
of the source. 
This establishes the correctness of the intended mapping from RC11 without non-atomic accesses to \IMM.
Handling non-atomic accesses, which are intended to be mapped to plain machine accesses with no additional barriers or dependencies (on which \IMM generally allows $\lPO\cup\lRF$-cycles), is left for future work;
while SC accesses can be handled as mentioned above.

\section{From the promising semantics to \IMM: Relaxed fragment}
\label{sec:rlx-prom-compilation}

In the section, we outline the main ideas of the proof of the correctness of compilation from
the promising semantics of \citet{Kang-al:POPL17}, denoted by \Promise, to \IMM.
To assist the reader, we initially restrict attention to programs containing only relaxed read and write accesses.
In \cref{sec:prom-compilation}, we show how to adapt and extend our proof to the full model.

Our goal is to prove that for every outcome of a program $\prog$ (with relaxed accesses only)
under \IMM (\cref{def:allowed}), 
there exists a \Promise trace of $\prog$ terminating with the same outcome.
To do so, we introduce a traversal strategy of \IMM-consistent execution graphs,
and show, by forward simulation argument, that it can be followed by \Promise.
The main challenge in the simulation proof is due to the \emph{certification} requirement of \Promise---after 
every step, the thread that made the transition
has to show that it can run in isolation and fulfill all its so-called promises.
To address this challenge, we break our simulation argument into two parts.
First, we provide a simulation relation, which relates a \Promise thread state with 
a traversal configuration.
Second, after each traversal step, we (i) construct a \emph{certification execution graph} $\Gcert$ and
a new traversal configuration $\TCcert$;
(ii) show that the simulation relation relates $\Gcert$, $\TCcert$, and the current \Promise state; and
(iii) deduce that we can meet the certification condition by traversing $\Gcert$.
(Here, we use the fact that \Promise does not require nested certifications.)

The rest of this section is structured as follows.
In \cref{sec:rlx-promise} we describe the
fragment of \Promise restricted to relaxed accesses.
In \cref{sec:rlx-traversal} we introduce the traversal of \IMM-consistent execution graphs,
which is suitable for the relaxed fragment.
In \cref{sec:rlx-sim-cf-step} we define the simulation relation for \Promise thread steps and the execution graph traversal.
In \cref{sec:rlx-certification} we discuss how we handle certification.
Finally, in \cref{sec:rlx-prom-compl-thm} we state the compilation correctness theorem and provide its proof outline.

\subsection{The promise machine (relaxed fragment)}
\label{sec:rlx-promise}

\Promise is an operational model where threads execute in an interleaved fashion.
The \emph{machine state} is a pair $\PConf=\tup{\gts,\mem}$, where $\gts$
assigns a \emph{thread state} $\lts$ to every thread and
$\mem$ is a (global) \emph{memory}.
The memory consists of a set of \emph{messages} of the form $\msgRlx{\loc}{\val}{t}$
representing all previously executed writes, where
$\loc \in \Loc$ is the target location, $\val \in \Val$ is the stored value,
and $t \in \Q$ is the \emph{timestamp}.
The timestamps totally order the messages to each location
(this order corresponds to $G.\lCO$ in our simulation proof).

The state of each thread contains a \emph{thread view}, 
$\tcom \in \View \defeq \Loc \rightarrow \Q$, which represents the ``knowledge'' of each thread.
The view is used to forbid a thread to read from a (stale) message $\msgRlx{\loc}{\val}{t}$ if
it is aware of a newer one, \ie when $\tcom(\loc)$ is greater than $t$.
Also, it disallows to write a message to the memory with a timestamp not greater than $\tcom(\loc)$.
(Due to lack of space, we refer the reader to \citet{Kang-al:POPL17} for the full definition of thread steps.)

Besides the step-by-step execution of their programs,
threads may non-deterministically \emph{promise} future writes.
This is done by simply adding a message to the memory. 
We refer to the execution of a write instruction whose message was promised before 
as \emph{fulfilling} the promise.

The \emph{thread state} $\lts$ is a triple $\tup{\sigma,\tcom,\lprom}$,
where $\sigma$ is the thread's local state,%
\footnote{The promising semantics is generally formulated over a general labeled state transition system. 
In our development, we instantiate it with the sequential program semantics
that is used in \cref{sec:prog-to-exec} to construct execution graphs.}
$\tcom$ is the thread view, 
and $\lprom$ tracks the set of messages that were promised by the thread and not yet fulfilled.
We write 
$\lts.\lprmem$ to obtain the 
promise set of a thread state $\lts$.
Initially, each thread is in local state
$\lts_0^\tid=\tup{\sigma_0(\prog(\tid)), \lambda \loc.\; 0, \emptyset}$.

To ensure that promises do not make the semantics overly  weak,
each sequence of thread steps in \Promise has to be \emph{certified}:
the thread that took the steps should be able to fulfill all its promises when executed in isolation.
Thus, a \emph{machine step} in \Promise is given by:
\[
\inferrule{
\tup{\gts(\tid),\mem} \astep{}^+ \tup{\lts',\mem'}  \\
\exists \lts''.\; \tup{\lts',\mem'} \astep{}^* \tup{\lts'',\_} \land \lts''.\lprmem = \emptyset
}{\tup{\gts,\mem} \astep{} \tup{\gts[\tid\mapsto \lts'],\mem'}}
\]

Program outcomes under \Promise are defined as follows.

\begin{definition}
\label{def:rlx-prom-outcome}
A function $O:\Loc \to \Val$ is an \emph{outcome} of a program $\prog$ under \Promise if
$\PConf_0(\prog)\astep{}^{*} \tup{\gts, \mem}$
for some $\gts$ and $\mem$ such that 
the thread's local state in $\gts(\tid)$ is terminal for every $\tid \in \Tid$, and
for every $\loc\in\Loc$, there exists a message of the form $\msgRlx{\loc}{O(\loc)}{t} \in \mem$ 
where $t$ is maximal among timestamps of messages to $\loc$ in $\mem$.
Here, $\PConf_0(\prog)$ denotes the initial machine state, 
$\tup{\gts_{\rm init}, \mem_{\rm init}}$,
  where 
  $\gts_{\rm init} = \lambda \tid. \; \lts_0^\tid$, 
and $\mem_{\rm init} = \{\msgRlx{\loc}{0}{0} \mid \loc \in \Loc\}$.
\end{definition}

\begin{example}[Load Buffering]
  \label{ex:lb-data}
Consider the following load buffering behavior under \IMM:
\[
\inarrII{
  e_{11}: \readInst{\rlx}{a}{x} \comment{1} \\
  e_{12}: \writeInst{\rlx}{y}{1} 
}{
  e_{21}: \readInst{\rlx}{b}{y} \comment{1} \\
  e_{22}: \writeInst{\rlx}{x}{b} \\
}
\quad\vrule\quad
\inarr{\begin{tikzpicture}[yscale=0.8,xscale=1]
  \node (i11) at (0,  0) {$e_{11}: \erlab{\rlx}{x}{1}{}$};
  \node (i12) at (0, -1) {$e_{12}: \ewlab{\rlx}{y}{1}{}$};
  \node (i21) at (3,  0) {$e_{21}: \erlab{\rlx}{y}{1}{\isnotex}$};
  \node (i22) at (3, -1) {$e_{22}: \ewlab{\rlx}{x}{1}{\isnotex}$};

  \draw[po] (i11) edge (i12);
  \draw[deps] (i21) edge node[right] {\small$\lDATA$} (i22);
  \draw[rf] (i12) edge node[below] {} (i21);
  \draw[rf] (i22) edge node[above] {\small$\lRF$} (i11);
\end{tikzpicture}}
\]
The \Promise machine obtains this outcome as follows.
Starting with memory $\tup{\msgRlx{x}{0}{0}, \msgRlx{y}{0}{0}}$,
the left thread promises the message $\msgRlx{y}{1}{1}$.
After that, the right thread reads this message and executes its second instruction 
(promises a write and immediately fulfills it),
adding the the message $\msgRlx{x}{1}{1}$ to memory. 
Then, the left thread reads from that message and fulfills its promise.
Each step (including, in particular, the first promise step) could be easily ``certified''
in a thread-local execution.
Note also how the data dependency in the right thread redistrict the execution of the \Promise machine.
Due to the certification requirement, the execution cannot begin by the right thread promising $\msgRlx{x}{1}{1}$,
as it cannot generate this message by running in isolation.
\qed
\end{example}

\subsection{Traversal (relaxed fragment)}
\label{sec:rlx-traversal}

Our goal is to generate a run of \Promise for any given \IMM-consistent initialized execution graph $G$ of a program $\prog$.
To do so, we traverse $G$ with a certain strategy, deciding in each step 
whether to execute the next instruction in the program or promise a future write.
While traversing $G$, we keep track of a \emph{traversal configuration}---a 
pair $\TC=\tup{\CoveredSet, \IssuedSet}$ of subsets of $G.\lE$.
We call the events in $\CoveredSet$ and $\IssuedSet$ \emph{covered} and \emph{issued} respectively.
The covered events correspond to the instructions that were executed by \Promise, 
and the issued events corresponds to messages that were added to the memory (executed or promised stores).

Initially, we take $\TC_0 = \tup{G.\lE \cap \Init, G.\lE \cap \Init}$.
Then, at each traversal step, the covered and/or issued sets are increased,
using one of the following two steps:
$$\inferrule[\textsc{(issue)}]{
    w \in \issuable(G, \CoveredSet, \IssuedSet)
    }{
    G \vdash
    \tup{\CoveredSet, \IssuedSet} \etravStep_{\lTID(w)} \tup{\CoveredSet, \IssuedSet \uplus \{w\}}
} \qquad\qquad
\inferrule[\textsc{(cover)}]{
    e \in \coverable(G, \CoveredSet, \IssuedSet)
}{
    G \vdash 
    \tup{\CoveredSet, \IssuedSet} \etravStep_{\lTID(e)} \tup{\CoveredSet \uplus \{e\}, \IssuedSet}
}$$

The \textsc{(issue)} step adds an event $w$ to the issued set.
It corresponds to a promise step of \Promise.
We require that $w$ is issuable, which says that all the writes of other threads that it depends on 
have already been issued:
\begin{definition}
\label{def:rlx-issuable}
An event $w$ is \emph{issuable} in $G$
and $\tup{\CoveredSet, \IssuedSet}$, denoted $w\in\issuable(G, \CoveredSet, \IssuedSet)$,
if $w\in G.\lW$ and
$\dom{G.\lRFE\mathbin{;} G.\lPPO\mathbin{;}[w]} \suq \IssuedSet$.
\end{definition}

The \textsc{(cover)} step adds an event $e$ to the covered set.
It corresponds to an execution of a program instruction in \Promise.
We require that $e$ is coverable, as defined next.

\begin{definition}
\label{def:rlx-coverable}
An event $e$ is called \emph{coverable} in $G$ 
and $\tup{\CoveredSet, \IssuedSet}$, denoted $e\in \coverable(G, \CoveredSet, \IssuedSet)$, 
if $e \in G.\lE$, $\dom{G.\lPO\mathbin{;}[e]} \suq \CoveredSet$, and either
$(i)$ $e \in G.\lW \cap \IssuedSet$;
or $(ii)$ $e \in G.\lR$ and $\dom{G.\lRF\mathbin{;}[e]} \suq \IssuedSet$.
\end{definition}
The requirements in this definition are straightforward.
First, all $G.\lPO$-previous events have to be covered, \ie previous instructions have to be already executed by \Promise.
Second, if $e$ is a write event, then it has to be already issued;
and if $e$ is a read event, then the write event that $e$ reads from has to be already issued
(the corresponding message has to be available in the memory).

As an example of a traversal, consider the execution from \cref{ex:lb-data}. 
A possible traversal of the execution is the following: 
issue $e_{12}$, cover $e_{21}$, issue $e_{22}$, cover $e_{22}$, cover $e_{11}$, and cover $e_{12}$.

Starting from the initial configuration $\TC_0$,
each traversal step maintains the following invariants:
$(i)$ $\lE \cap \Init \suq \CoveredSet$;
$(ii)$ $\CoveredSet \cap G.\lW \suq \IssuedSet$; 
and $(iii)$ $\IssuedSet \suq \issuable(G, \CoveredSet, \IssuedSet)$ and $\CoveredSet \suq \coverable(G, \CoveredSet, \IssuedSet)$.
When these properties hold, we say that $\tup{\CoveredSet, \IssuedSet}$ 
is a \emph{traversal configuration} of $G$.
The next proposition ensures the existence of a traversal starting from any traversal configuration.
(A proof outline for an extended version of the traversal discussed in \cref{sec:traversal}
is presented in \citeapp{sec:trav-proof}{Appendix F}.)

\begin{proposition}
  \label{prop:rlx-trav-full}
  Let $G$ be an \IMM-consistent execution graph and $\tup{\CoveredSet, \IssuedSet}$
  be a traversal configuration of $G$.
  Then, $G \vdash\tup{\CoveredSet, \IssuedSet} \etravStep^{*} \tup{G.\lE, G.\lW}$.
\end{proposition}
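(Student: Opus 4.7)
The plan is to prove the proposition by strong induction on the measure
$n(\TC) \defeq |G.\lE \setminus \CoveredSet| + |G.\lW \setminus \IssuedSet|$.
The base case $n = 0$ forces $\CoveredSet = G.\lE$ and, since
$\IssuedSet \suq \issuable(G, \CoveredSet, \IssuedSet) \suq G.\lW$, also $\IssuedSet = G.\lW$, so the zero-step traversal witnesses the claim. For the inductive step I will exhibit a single step $\TC \etravStep \TC'$ that strictly decreases $n$ while preserving the three traversal configuration invariants, and then appeal to the induction hypothesis on $\TC'$.

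The heart of the argument is a progress lemma: whenever $n(\TC) > 0$, either an \textsc{(issue)} or a \textsc{(cover)} step is applicable. I split into two cases. First, suppose $G.\lW \setminus \IssuedSet$ is nonempty. Since $G.\lRFE \suq G.\lAR$ and $G.\lPPO \suq G.\lAR$, the composition $G.\lRFE \mathbin{;} G.\lPPO$ lies in $G.\lAR^+$, which is acyclic by \IMM's \textsc{no-thin-air} axiom. Restricted to the finite set $G.\lW \setminus \IssuedSet$ it therefore admits a minimal element $w_0$. As the domain of $G.\lRFE \mathbin{;} G.\lPPO$ consists entirely of writes, minimality yields $\dom{G.\lRFE \mathbin{;} G.\lPPO \mathbin{;} [w_0]} \suq \IssuedSet$, showing $w_0 \in \issuable(G, \CoveredSet, \IssuedSet)$, so the \textsc{(issue)} step fires. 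Second, if $G.\lW \suq \IssuedSet$ but some event is uncovered, I pick a $G.\lPO$-minimal element $e$ of $G.\lE \setminus \CoveredSet$ (which exists since $G.\lPO$ is a strict partial order on a finite set), so all $G.\lPO$-predecessors of $e$ lie in $\CoveredSet$. The relaxed fragment contains only read and write events: either $e \in G.\lW$ and is already issued, or $e \in G.\lR$ and its unique $G.\lRF$-predecessor is a write in $\IssuedSet$; in either case $e \in \coverable(G, \CoveredSet, \IssuedSet)$, so the \textsc{(cover)} step applies.

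It remains to check that applying the chosen step leaves the result a traversal configuration, which is largely bookkeeping. Monotonicity of \issuable and \coverable in both $\CoveredSet$ and $\IssuedSet$ makes preservation of the ``old elements stay in the predicate'' clauses automatic, while the newly added element satisfies the predicate by the premise of the firing rule. The condition $\CoveredSet \cap G.\lW \suq \IssuedSet$ is immediate for \textsc{(issue)}, and for \textsc{(cover)} it follows from case $(i)$ of \coverable, which guarantees that a newly covered write is already in $\IssuedSet$. The step strictly decreases $n$, so the induction closes. I expect the only delicate conceptual point to be the first case of the progress lemma: identifying $G.\lRFE \mathbin{;} G.\lPPO$ as a sub-relation of $G.\lAR$ and leveraging \textsc{no-thin-air} is the essential use of \IMM-consistency, and without it a load-buffering cycle of mutually $G.\lRFE \mathbin{;} G.\lPPO$-dependent unissued writes would leave the traversal stuck.
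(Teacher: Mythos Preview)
Your proof is correct for the relaxed fragment. The induction measure, the invariant preservation, and the progress argument all go through as you describe; in particular, your observation that $\dom{G.\lRFE\mathbin{;}G.\lPPO}\suq G.\lW$ together with acyclicity of $G.\lAR$ is exactly what makes an $(\lRFE\mathbin{;}\lPPO)$-minimal unissued write issuable.

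The paper's proof (Appendix~F) uses the same measure but organises the progress lemma differently: it first looks at the set $N$ of $\lPO$-minimal uncovered events in each thread and tries to \emph{cover} one of them; only when every element of $N$ is a non-coverable read or SC fence does it search for an issuable write, chosen as $\lAR^+$-minimal in $G.\lW\setminus\IssuedSet$. Your strategy is the opposite: issue first whenever possible, cover only once all writes are issued. For the relaxed fragment your approach is strictly simpler and the $(\lRFE\mathbin{;}\lPPO)$-minimal write always works because $\issuable$ depends only on $\IssuedSet$. The paper's ordering, however, is what generalises to the full model of \cref{sec:traversal}: there $\issuable$ also imposes the \ref{req:fwbob-cov} condition, which requires certain $\lPO$-preceding fences and release writes to be \emph{covered}, so an unissued write need not become issuable until some covering has happened; conversely, covering those events may require other writes to be issued first. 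The paper's cover-first case split, combined with $\lAR^+$-minimality (rather than just $(\lRFE\mathbin{;}\lPPO)$-minimality), is what untangles this interdependence. So your argument buys a cleaner proof of the restricted proposition, while the paper's buys a proof that scales to the general statement.
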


\subsection{Thread step simulation (relaxed fragment)}
\label{sec:rlx-sim-cf-step}

To show that a traversal step of thread $\tid$ can be matched by a \Promise thread step,
we use a simulation relation
$\simthread_{\tid}(G, \TC, \tup{\lts, \mem}, \Tto)$,
where $G$ is an \IMM-consistent initialized full execution of $\prog$;
$\TC=\tup{\CoveredSet, \IssuedSet}$ is a traversal configuration of $G$;
$\lts=\tup{\sigma,\tcom,\lprom}$ is $\tid$'s thread state in \Promise;
 $\mem$ is the memory of \Promise;
and $\Tto : \IssuedSet \rightarrow \Q$ is a function that assigns timestamps to issued writes.
The relation $\simthread_{\tid}(G, \TC, \tup{\lts, \mem}, \Tto)$ holds 
if the following conditions are met (for conciseness we omit the ``$G.$'' prefix):
\begin{enumerate}
\item $\Tto$ agrees with $\lCO$:
\begin{itemize}
\item $\forall w \in \lE \cap \Init. \; \Tto(w) = 0$
\item $\forall \tup{w, w'} \in [\IssuedSet] \mathbin{;}\lCO \mathbin{;} [\IssuedSet]. \; \Tto(w) \le \Tto(w')$
\end{itemize}

\item Non-initialization messages in $\mem$ have counterparts in $\IssuedSet$:
\begin{itemize}
\item $\forall \msgRlx{\loc}{\_}{t} \in \mem. \; t \neq 0 \implies \exists w \in \IssuedSet. \; \lLOC(w) = \loc \land \Tto(w) = t$
\end{itemize}

\item 
Issued events have corresponding messages in memory:
\begin{itemize}
\item $\forall w \in \IssuedSet. \;
      \msgRlx{\lLOC(w)}{\lVAL(w)}{\Tto(w)} \in \mem$
\end{itemize}
\item 
For every promise, there exists a corresponding issued uncovered event $w$:
\begin{itemize}
\item $      \forall \msgRlx{\loc}{\val}{t} \in \lprom. \;
        \exists w \in \lE_{\tid} \cap \IssuedSet \setminus \CoveredSet. \;
     \lLOC(w) = \loc \land \lVAL(w) = \val \land \Tto(w) = t$
\end{itemize}
\item  Every issued uncovered event $w$ of thread $\tid$ has a corresponding promise in $\lprom$.
\begin{itemize}
\item $ \forall w \in \lE_{\tid} \cap \IssuedSet \setminus \CoveredSet.\;
  \msgRlx{\lLOC(w)}{\lVAL(w)}{\Tto(w)} \in \lprom$ 
\end{itemize}

\item The view $\tcom$ is justified by graph paths:
\begin{itemize}
\item $\tcom = \lambda \loc.\; \max\Tto[\lW(\loc) \cap \dom{\lURR_{\rlx} \mathbin{;} [\lE\tidmod{\tid}\cap \CoveredSet]}]$
  where $\lURR_{\rlx}\defeq \lRF^?; \lPO^?$
\end{itemize}

\item 
The thread local state $\sigma$ matches the covered events ($\sigma.\lG.\lE = C \cap \lE\tidmod{\tid}$),
and can always reach the execution graph $G$ 
($\exists \sigma'. \; \sigma \to_{\tid}^* \sigma' \land \sigma'.\lG = G\rst{\tid}$).
\end{enumerate}

\begin{proposition}
\label{prop:rlx-sim-step}
  If $\simthread_{\tid}(G, \TC, \tup{\lts, \mem}, \Tto)$ and 
  $G \vdash \TC \etravStep_{\tid} \TC'$ hold,
  then there exist $\lts'$, $\mem'$, $\Tto'$ such that
$\tup{\lts,\mem} \astep{} \tup{\lts',\mem'}$
and $\simthread_{\tid}(G, \TC', \tup{\lts', \mem'}, \Tto')$ hold.
\end{proposition}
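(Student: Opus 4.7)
The plan is to proceed by case analysis on the form of the traversal step $G \vdash \TC \etravStep_{\tid} \TC'$, in each case exhibiting a matching Promise thread transition $\tup{\lts,\mem} \astep{} \tup{\lts',\mem'}$ (possibly a short sequence of thread steps interleaved with the intermediate silent instructions that do not generate events) together with an extended timestamp map $\Tto'$, and then re-establishing each of the seven clauses of $\simthread_\tid$. The preconditions $w \in \issuable(G,\CoveredSet,\IssuedSet)$ and $e \in \coverable(G,\CoveredSet,\IssuedSet)$ supply the graph-theoretic inputs, while \IMM-consistency of $G$, especially coherence, supplies the ordering arguments.

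In the \textsc{(issue)} case, with $\TC' = \tup{\CoveredSet, \IssuedSet \uplus \{w\}}$, I would pick a fresh rational timestamp $t$ such that $\Tto(w_1) < t < \Tto(w_2)$ for all $w_1,w_2 \in \IssuedSet$ with $\tup{w_1,w},\tup{w,w_2} \in G.\lCO$, and additionally $t > \tcom(G.\lLOC(w))$. Such a $t$ exists by density of $\Q$, by clause 1 of the simulation relation, and because every write contributing to $\tcom(G.\lLOC(w))$ via clause 6 must be $G.\lCO$-before $w$ (otherwise the $G.\lHB$-path underlying clause 6 composed with the would-be $G.\lCO$-successor yields a cycle in $G.\lHB;G.\lECO^?$, contradicting coherence). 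The Promise machine then takes a promise step adding $\msgRlx{G.\lLOC(w)}{G.\lVAL(w)}{t}$ to $\mem$ and to $\tid$'s promise set, and with $\Tto' \defeq \Tto[w \mapsto t]$ the clauses 1--5 are discharged by routine bookkeeping; clauses 6 and 7 are untouched because a promise advances neither $\tcom$ nor the sequential state $\sigma$.

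In the \textsc{(cover)} case, with $\TC' = \tup{\CoveredSet \uplus \{e\}, \IssuedSet}$ and $\lTID(e)=\tid$, I would split on the kind of $e$. If $e$ is a read, coverability yields a unique $w \in \IssuedSet$ with $\tup{w,e}\in G.\lRF$, and clause 3 provides a matching message $\msgRlx{G.\lLOC(e)}{G.\lVAL(e)}{\Tto(w)}$ in $\mem$, which the Promise thread consumes via a read step. If $e$ is a write, then $e \in \IssuedSet$ by coverability and clause 5 supplies a matching outstanding promise, which is fulfilled by executing the corresponding write instruction without generating a new message. In both subcases, clause 7 lets $\sigma$ be advanced by exactly the instructions producing $e$ (using $G\rst\tid$ as the reachability witness inside its existential), and clause 6 is re-established because enlarging $\CoveredSet$ by $e$ extends the $\lURR_\rlx;[G.\lE_\tid \cap \CoveredSet]$ paths in exact correspondence with the Promise update to $\tcom$ at $G.\lLOC(e)$.

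The main obstacle is the admissibility of $t$ in the issue case: excluding an already-issued $G.\lCO$-successor of $w$ whose timestamp lies below $\tcom(G.\lLOC(w))$ is where \IMM-coherence is genuinely spent, through the cycle argument sketched above. A secondary challenge is the certification obligation packaged into the outer Promise machine step, which I would defer to \cref{sec:rlx-certification}: a certification graph $\Gcert$ together with a matching traversal configuration is constructed so that \cref{prop:rlx-trav-full}, combined with iterated application of the present simulation lemma along a thread-local traversal of $\Gcert$, eventually empties $\tid$'s promise set and yields the certifying continuation required by the machine-step rule.
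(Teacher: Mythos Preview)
The paper does not spell out a proof of this proposition in the text (it is deferred to the Coq development), so there is no detailed paper proof to compare against. Your overall approach---case analysis on the traversal step, promising with a fresh $\lCO$-compatible timestamp in the \textsc{(issue)} case and reading or fulfilling in the \textsc{(cover)} case, with coherence supplying the view-compatibility arguments---is the natural one and is essentially correct.

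A couple of refinements. The additional constraint $t > \tcom(G.\lLOC(w))$ in the \textsc{(issue)} case is not actually needed there: a \Promise promise step carries no view precondition, and clause~6 is trivially preserved because neither $\CoveredSet$ nor $\tcom$ changes on an issue. The coherence argument you sketch is instead the crux of the \textsc{(cover)} subcases: for a read $e$ with $\tup{w,e}\in G.\lRF$ you must show $\tcom(G.\lLOC(e)) \le \Tto(w)$ (no stale read), and for a write $e \in \IssuedSet$ you must show $\tcom(G.\lLOC(e)) < \Tto(e)$ (fulfillment is enabled). Both follow from \ref{ax:coh} by exactly the kind of $G.\lHB\mathbin{;}G.\lECO^?$-cycle argument you give, using that every already-covered event of thread $\tid$ is $G.\lPO$-before $e$. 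You also correctly anticipate that silent instructions (assignments, branches) may require several \Promise steps per traversal step; the single $\astep{}$ in the statement is indeed loose---the proof of \cref{prop:rlx-ext-sim-step} already invokes it as $\astep{}^+$, and the full-model analogue \cref{prop:sim-step} states $\astep{}^+$ explicitly. Finally, your last paragraph on certification really pertains to \cref{prop:rlx-ext-sim-step} rather than to the present proposition, which concerns only thread steps; but you are clearly aware of that separation.
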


In addition, it is easy to verify that the initial states are related, \ie  
$\simthread_{\tid}(G, \TC_0, \tup{\lts_0^\tid, \mem_{\rm init}},\bot)$ holds
for every $\tid\in\Tid$.

\subsection{Certification (relaxed fragment)}
\label{sec:rlx-certification}

\begin{figure}[t]
\centering
{\footnotesize
$\inarrII{
  \readInst{\rlx}{r_1}{\loc} \comment{1}  \\
  \writeInst{\rlx}{\locy}{r_1}  \\
  \writeInst{\rlx}{\loc}{2}  \\
}{
  \writeInst{\rlx}{\loc}{1}  \\
  \readInst{\rlx}{r_2}{\locy} \comment{1} \\
  \readInst{\rlx}{r_3}{\loc} \comment{2} \\
  \writeInst{\rlx}{\locz}{r_2}  \\
  \writeInst{\rlx}{\loc}{3}  \\
}$}$
\hfill\vrule\hfill
\inarr{\begin{tikzpicture}[scale=0.8, every node/.style={transform shape}]
  \node (i11) at (0, 0) {$\ee{11}\colon \erlab{\rlx}{\loc}{1}{\isnotex}$};
  \node (i12) at (0, -1.2) {$\ee{12}\colon \ewlab{\rlx}{\locy}{1}{}$};
  \node (i13) at (0, -2.4) {$\ee{13}\colon \ewlab{\rlx}{\loc}{2}{}$};

  \node (i21) at (3.5,  0) {$\ee{21}\colon \ewlab{\rlx}{\loc}{1}{}$};
  \node (i22) at (3.5, -1.2) {$\ee{22}\colon \erlab{\rlx}{\locy}{1}{\isnotex}$};
  \node (i23) at (3.5, -2.4) {$\ee{23}\colon \erlab{\rlx}{\loc}{2}{\isnotex}$};
  \node (i24) at (3.5, -3.6) {$\ee{24}\colon \ewlab{\rlx}{\locz}{1}{}$};
  \node (i25) at (3.5, -4.8) {$\ee{25}\colon \ewlab{\rlx}{\loc}{3}{}$};
  \node (hh) at (1.5, -6) {$\inarrC{\text{An execution graph $G$ and}\\ \text{its traversal configuration $\tup{\CoveredSet,\IssuedSet}$}}$};

  \begin{scope}[on background layer]
     \issuedBox{i24};
     \issuedBox{i12};

     \issuedCoveredBox{i21};
  \end{scope}

  \draw[rf] (i21) edge node[above] {\small$\lRFE$} (i11);
  \draw[deps] (i11) edge node[left] {\small$\lDEPS$} (i12);
  \draw[rf] (i12) edge node[above] {\small$\lRFE$} (i22);
  \draw[rf] (i13) edge node[above] {\small$\lRFE$} (i23);

  \draw[deps,out=230,in=130] (i22) edge node[pos=.2,left] {\small$\lDEPS$} (i24);

  \draw[po] (i12) edge (i13);
  \draw[po] (i21) edge (i22);
  \draw[po] (i22) edge (i23);
  \draw[po] (i23) edge (i24);
  \draw[po] (i24) edge (i25);
\end{tikzpicture}}
\hfill\vrule\hfill
\inarr{\begin{tikzpicture}[scale=0.8, every node/.style={transform shape}]
  \node (i12) at (0, -1.2) {$\ee{12}\colon \ewlab{\rlx}{\locy}{1}{}$};

  \node (i21) at (3.5,  0) {$\ee{21}\colon \ewlab{\rlx}{\loc}{1}{}$};
  \node (i22) at (3.5, -1.2) {$\ee{22}\colon \erlab{\rlx}{\locy}{1}{\isnotex}$};
  \node (i23) at (3.5, -2.4) {$\ee{23}\colon \erlab{\rlx}{\loc}{1}{\isnotex}$};
  \node (i24) at (3.5, -3.6) {$\ee{24}\colon \ewlab{\rlx}{\locz}{1}{}$};
  \node (i25) at (3.5, -4.8) {\phantom{$\ee{25}\colon \ewlab{\rlx}{\loc}{3}{}$}};
  \node (hh) at (1.7, -6) {$\inarrC{\text{The  certification graph $\Gcert$ and}\\ \text{its traversal configuration $\tup{\CoveredSetcert,\IssuedSetcert}$}}$};

  \begin{scope}[on background layer]
     \issuedBox{i24};

     \issuedCoveredBox{i12};
     \issuedCoveredBox{i21};
  \end{scope}

  \draw[rf] (i12) edge node[above] {\small$\lRFE$} (i22);
  \draw[rf,out=200,in=180] (i21) edge node[pos=.8,left] {\small$\lRFI$} (i23);

  \draw[deps,out=230,in=130] (i22) edge node[pos=.2,left] {\small$\lDEPS$} (i24);

  \draw[po] (i21) edge (i22);
  \draw[po] (i22) edge (i23);
  \draw[po] (i23) edge (i24);
\end{tikzpicture}}$
\vspace{-10pt}
\caption{A program, its execution graph, and a related certification graph.
Covered events are marked by
{\protect\tikz \protect\draw[coveredStyle] (0,0) rectangle ++(0.35,0.35);}
and issued ones by 
{\protect\tikz \protect\draw[issuedStyle] (0,0) rectangle ++(0.35,0.35);}.}\label{fig:prog-gr-cert}
\end{figure}

To show that a traversal step can be simulated by \Promise, \cref{prop:rlx-sim-step} does not suffice:
the machine step requires the new thread's state to be certified.
To understand how we construct a certification run, consider the example in \cref{fig:prog-gr-cert}.
Suppose that $\simthread_{\tid_2}$ holds for $G,\tup{\CoveredSet, \IssuedSet},\tup{\lts,\mem},\Tto$
(where $\tid_2$ is the identifier of the second thread).
Consider a possible certification run for $\tid_2$.
According to $\simthread_{\tid_2}$, there is one unfulfilled promise of $\tid_2$, 
\ie $\lts.\lprmem = \{\msgRlx{\locz}{1}{\Tto(\ee{24})}\}$.
We also know that $\tid_2$ has executed all instructions
up to the one related to the last covered event $\ee{21}$.
To fulfill the promise, it has to execute the instructions corresponding to $\ee{22}$, $\ee{23}$, and $\ee{24}$.

To construct the certification run, we (inductively) apply a version of \cref{prop:rlx-sim-step} for certification steps,
starting from a sequence of traversal steps of $\tid_2$ that cover $\ee{22}$, $\ee{23}$, and $\ee{24}$.
For $G$ and $\tup{\CoveredSet, \IssuedSet}$, there is no such sequence:
we cannot cover $\ee{23}$ without issuing $\ee{13}$ first 
(which we cannot do since only one thread may run during certification).
Nevertheless, observing that the value read at $\ee{23}$ is immaterial for covering $\ee{24}$,
we may use a different execution graph for this run, namely $\Gcert$ shown in \cref{fig:prog-gr-cert}.
Thus, in $\Gcert$ we redirect $\ee{23}$'s incoming reads-from edge and change its value accordingly.
In contrast, we do not need to change $\ee{22}$'s incoming reads-from edge
because the condition about $G.\lPPO$ in the definition of issuable events ensures that 
$\ee{12}$ must have already been issued.
For $\Gcert$ and $\tup{\CoveredSetcert, \IssuedSetcert}$,
there exists a sequence of traversal steps  that cover $\ee{22}$, $\ee{23}$, and $\ee{24}$.
Since events of other threads have all been made covered in $\CoveredSetcert$,
we know that only $\tid_2$ will take steps in this sequence.

Generally speaking, for a given $\tid\in\Tid$ whose step has to be certified, 
our goal is to construct a certification graph $\Gcert$ 
and a traversal configuration $\TCcert=\tup{\CoveredSetcert, \IssuedSetcert}$ of $\Gcert$ such that
(1) $\Gcert$ is \IMM-consistent (so we can apply \cref{prop:rlx-trav-full} to it)
and (2) we can simulate its traversal in \Promise to obtain the certification run for thread $\tid$.
In particular, the latter requires that $\Gcert\rst{\tid}$ is an execution graph of $\tid$'s program.
In the rest of this section, we present this construction and show how it is used to certify \Promise's steps (\cref{prop:rlx-ext-sim-step}).

First, the events of $\Gcert$ are given by 
$\Gcert.\lE  \defeq \CoveredSet \cup \IssuedSet \cup \dom{G.\lPO\mathbin{;}[\IssuedSet\cap G.\lE\tidmod{\tid}]}$.
They consist of the covered and issued events and all $\lPO$-preceding events of issued events in thread $\tid$.
The $\lCO$ and dependency components of $\Gcert$ are the same as in (restricted) $G$
($\Gcert.\lX = [\Gcert.\lE]\mathbin{;} G.\lX\mathbin{;} [\Gcert.\lE]$ for $\lX \in \set{\lCO,\lADDR,\lDATA,\lCTRL,\lRMWDEP}$).
As we saw on \cref{fig:prog-gr-cert},
we may need to modify the $\lRF$ edges of the certification graph (and, consequentially, labels of events).
In the example, it was required because the source of an $\lRF$ edge was not present in $\Gcert$.
The relation $\Gcert.\lRF$ is defined as follows:
\[\inarr{
\Gcert.\lRF \defeq G.\lRF\mathbin{;} [\D] \cup
  \bigcup_{\loc \in \Loc} ([G.\lW(\loc)] \mathbin{;}\lBVFrlx \mathbin{;}
         [G.\lR(\loc) \cap \Gcert.\lE \setminus  \D]
    \setminus G.\lCO \mathbin{;} G.\lBVFrlx)
\\
\text{~~where}~~
\D=\Gcert.\lE \cap (\CoveredSet \cup \IssuedSet \cup G.\lE\tidmod{\neq\tid} \cup
 \dom{G.\lRFI^? \mathbin{;} G.\lPPO \mathbin{;} [\IssuedSet]})
  \text{~~and} \\
\qquad \quad
G.\lBVFrlx = (G.\lRF \mathbin{;} [\D])^? \mathbin{;} G.\lPO
}\]

The set $D$ represents the \emph{determined} events, whose $\lRF$ edges are preserved.
Intuitively, for a read event $r$ with location $\loc$,
the set $\dom{[G.\ewlab{}{\loc}{}{}]\mathbin{;} G.\lBVFrlx\mathbin{;} [r]}$ consists of writes to $\loc$
that are ``observed'' by $\lTID(r)$ at the moment it ``executes'' $r$.
If $r$ is not determined, we choose the new $\lRF$ edge to $r$ to be from the $\lCO$-latest write in this set.
Thus, in the certification graph, $r$ is not reading a stale value,
and its incoming $\lRF$ edge does not increase the set of ``observed'' writes in thread $\tid$.

The labels (which include the read values) in $\Gcert$ have to be modified as well, to match the new $\lRF$ edges.
To construct of $\Gcert.\lLAB$, we leverage a certain \emph{receptiveness} property of the operational semantics in \cref{fig:prog-to-exec}.
Roughly speaking, we show that if $\tup{\sprog,pc,\Phi,G,\PhiD,S} \rightarrow^+_{\tid} \tup{\sprog,pc',\Phi',G',\PhiD',S'}$,
then for every read $r \in G'.\lE \setminus (G.\lE \cup \dom{G'.\lCTRL})$ and value $\val$, 
there exist $pc''$, $\Phi''$, $G''$, $\PhiD''$, and $S''$ such that
$\tup{\sprog,pc,\Phi,G,\PhiD,S} \rightarrow^+_{\tid} \tup{\sprog,pc'',\Phi'',G'',\PhiD'',S''}$,
$G''.\lVAL(r) = \val$, and $G''$ is identical to $G'$ except (possibly) for values of events that depend on $r$.%
\footnote{The full formulation of the receptiveness property is more elaborate.
  Due to the lack of space, we refer the reader to our Coq development~\url{https://github.com/weakmemory/imm}.}
Applying this property inductively, we construct the labeling function $\Gcert.\lLAB$.

\smallskip
This concludes the construction of $\Gcert$.
Now, we start the traversal from $\TCcert=\tup{\CoveredSetcert,\IssuedSetcert}$
where $\CoveredSetcert\defeq\CoveredSet \cup \Gcert.\lE\tidmod{\neq\tid}$ and $\IssuedSetcert\defeq\IssuedSet$.
Thus, we take all events of other threads to be covered so that the traversal of $\Gcert$ may only include steps of thread $\tid$.
To be able to reuse \cref{prop:rlx-trav-full}, we prove the following proposition.

\begin{proposition}
\label{prop:rlx-cert-cons}
Let $G$ be an \IMM-consistent execution graph, and $\TC=\tup{\CoveredSet,\IssuedSet}$ a traversal configuration of $G$.
Then, $\Gcert$ is \IMM-consistent and $\TCcert$ is a traversal configuration of $\Gcert$.
\end{proposition}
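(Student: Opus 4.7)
The plan splits into two parts: checking \IMM-consistency of $\Gcert$, and checking the four traversal-configuration conditions for $\TCcert$. First I would verify that $\Gcert$ is a well-formed execution graph in the sense of Definition~\ref{def:execution}: the dependency and $\lCO$ components are restrictions of those in $G$, so the only nontrivial clause is $\lRF$-completeness. For $r \in \Gcert.\lR$, either $r \in \D$---in which case the $G$-edge ending at $r$ is preserved in $\Gcert.\lRF$ and its source lies in $\Gcert.\lE$ by the definition of $\D$ (in particular, $\D$ is backward-closed under $G.\lRF$)---or $r \notin \D$, in which case the $\lCO$-maximum of $\dom{[G.\lW] \mathbin{;} G.\lBVFrlx \mathbin{;}[r]}$ restricted to location $\lLOC(r)$ serves as a witness, non-empty because the initialization event of $\lLOC(r)$ lies in $\CoveredSet \suq \D$ and is $G.\lPO$-before $r$. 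Functionality of $\Gcert.\lRF^{-1}$ is immediate from the explicit $\lCO$-maximum choice, and labels consistent with the rerouted $\lRF$ are obtained via the receptiveness property of the thread semantics. CO-totality is inherited from $G$.

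For the remaining three axioms I would first prove a decomposition lemma: every $\Gcert.\lRF$ edge either coincides with a $G.\lRF$ edge whose target lies in $\D$, or comes from the $\lCO$-maximum write observed by its target in $G$ (which lies in $\Gcert.\lE\tidmod{\tid}\setminus\D$) via $G.\lBVFrlx$. The $\setminus G.\lCO\mathbin{;} G.\lBVFrlx$ clause in the definition of $\Gcert.\lRF$ guarantees that no new $\lFR$-edge can reach back into a write already observable by thread $\tid$ in $G$. Coherence of $\Gcert$ then reduces to coherence of $G$: since $\Gcert.\lPO \suq G.\lPO$ and all $\Gcert$-synchronization edges project back, any cycle in $\Gcert.\lHB\rst\lLOC \cup \Gcert.\lRF \cup \Gcert.\lCO \cup \Gcert.\lFR$ translates to a corresponding $G$-cycle, contradicting $G$'s coherence. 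Atomicity is preserved because $\Gcert.\lRMW \suq G.\lRMW$ and RMW reads lie in $\D$ (their paired write is either issued or $G.\lPO$-before an issued event, hence in $\D$ through the $\lRFI^?\mathbin{;}\lPPO$ clause or directly), so no new $\lFRE\mathbin{;}\lCOE$ detour is introduced. For acyclicity of $\Gcert.\lAR$, the $\lBOB$, $\lPPO$, $\lDETOUR$, $\lPSC$, and strong-RMW components all project back to $G.\lAR$, and any rerouted $\Gcert.\lRFE$ edge lands on a read outside $\D$, hence outside $\dom{G.\lPPO\mathbin{;}[\IssuedSet]}$, which blocks the completion of any alternating-edge cycle through such a reroute.

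For $\TCcert$, initialization is in $\CoveredSet\suq\CoveredSetcert$ directly from the corresponding property of $\TC$. The inclusion $\CoveredSetcert \cap \Gcert.\lW \suq \IssuedSetcert$ follows from the same property of $\TC$ together with the structural fact $\Gcert.\lE\tidmod{\neq\tid} \suq \CoveredSet \cup \IssuedSet$ inherent in the definition of $\Gcert.\lE$. Issuability of each $w\in \IssuedSetcert$ in $\Gcert$ follows from its issuability in $G$ together with the inclusions $\Gcert.\lRFE \suq G.\lRFE$ and $\Gcert.\lPPO \suq G.\lPPO$. Coverability splits on whether $e\in\CoveredSet$ (inherited from coverability in $G$, with the rerouted $\lRF$ edges only affecting uncovered reads in $\Gcert.\lE\tidmod{\tid}\setminus\D$) or $e\in\Gcert.\lE\tidmod{\neq\tid}\setminus\CoveredSet$, in which case $e$ must be an issued uncovered write of another thread, whose $\lPO$-predecessors are in $\CoveredSet$ by coverability of $\TC$ applied to its $\lPO$-successor events, and which is itself in $\IssuedSetcert$.

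The hardest step will be the acyclicity of $\Gcert.\lAR$ together with the coherence check, since both require case analysis of paths alternating between preserved $G$-edges and rerouted $\Gcert.\lRF$ edges. The crucial design choice to exploit is the last disjunct of $\D$, namely $\dom{G.\lRFI^?\mathbin{;} G.\lPPO\mathbin{;}[\IssuedSet]}$: it is precisely calibrated so that every read contributing to an issued write via an internal-reads-from followed by a preserved-program-order chain retains its original $\lRF$ source, thereby ruling out the alternating-edge cycles that would otherwise threaten both axioms.
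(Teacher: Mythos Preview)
The paper does not supply a written proof; the result is established only in the Coq development. Your outline has the right shape and correctly isolates the role of $\D$, but several intermediate claims are false as stated.

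For issuability you assert $\Gcert.\lRFE \suq G.\lRFE$ and $\Gcert.\lPPO \suq G.\lPPO$. Neither holds: a read $r \notin \D$ in thread $\tid$ can receive a new $\lRF$-edge from a write in another thread (whenever that write reaches $r$ via $G.\lRF\mathbin{;}[\D]\mathbin{;}G.\lPO$), and new $\Gcert.\lRFI$ edges to non-determined reads create $\Gcert.\lPPO$ paths absent from $G$. What you actually need is the restricted inclusion $\Gcert.\lPPO\mathbin{;}[\IssuedSet] \suq G.\lPPO\mathbin{;}[\IssuedSet]$, proved by backward induction along such a path showing every intermediate read lies in $\D$---this is exactly what the clause $\dom{G.\lRFI^?\mathbin{;}G.\lPPO\mathbin{;}[\IssuedSet]}$ in the definition of $\D$ is calibrated to guarantee. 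Once that is in hand, the head reads of those paths are in $\D$, their incoming $\lRF$ is preserved, and issuability follows; your $\lAR$-acyclicity sketch needs the same refinement, since otherwise you have not excluded cycles that pass through a rerouted $\lRFE$ followed by a $\Gcert.\lPPO$ step. Two smaller issues: your coverability argument for $e\in\Gcert.\lE\tidmod{\neq\tid}\setminus\CoveredSet$ appeals to coverability of $\TC$, but $e$ is not covered in $\TC$ so no such appeal is available---the correct observation is simply $\dom{\Gcert.\lPO\mathbin{;}[e]}\suq \Gcert.\lE\tidmod{\neq\tid}\suq\CoveredSetcert$. And your coherence claim that cycles ``translate to a corresponding $G$-cycle'' is too strong, since rerouted $\lRF$ edges have no $G$-counterpart; the argument must use the $\lCO$-maximality of the new source directly to show that the induced $\Gcert.\lFR$ edges cannot close a cycle. (Finally, note that the relaxed fragment of \cref{sec:rlx-prom-compilation} has no RMWs, so atomicity is vacuous and your discussion of it is unnecessary.)
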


For the full model (see \cref{sec:certification}), we will have to introduce
a slightly modified version of the simulation relation for certification.
For the relaxed fragment that we consider here, however, we use the same relation defined in \cref{sec:rlx-sim-cf-step} 
and prove that it holds for the constructed certification graph:

\begin{proposition}
\label{prop:rlx-cert-graph}
Suppose that $\simthread_{\tid}(G, \TC, \tup{\lts, \mem}, \Tto)$ holds.
Then
$\simthread_{\tid}(\Gcert,\TCcert, \tup{\lts, \mem}, \Tto)$ holds.
\end{proposition}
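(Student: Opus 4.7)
The plan is to unfold the seven conjuncts of $\simthread_{\tid}$ and verify each of them in turn for $\Gcert$ and $\TCcert$, using the construction of $\Gcert$ together with the assumed instance of $\simthread_{\tid}$ for $G$ and $\TC$. Two general facts about the construction will be invoked throughout. First, $\IssuedSetcert=\IssuedSet$, and $\CoveredSetcert\cap\Gcert.\lE\tidmod{\tid} = \CoveredSet\cap G.\lE\tidmod{\tid}$, since $\CoveredSetcert$ differs from $\CoveredSet$ only by adding events of threads $\neq\tid$. Second, $\Gcert.\lCO$ and the dependency relations of $\Gcert$ are just $G$'s corresponding relations restricted to $\Gcert.\lE$, and issued events lie in the determined set $\D$ (since $\IssuedSet\suq\D$ by definition of $\D$), so $\Gcert.\lLAB$ agrees with $G.\lLAB$ on $\IssuedSet$ and, in particular, on all write events of interest.

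Clauses~1--5 are then essentially bookkeeping. Clause~1 is preserved because $\Tto$ is unchanged, initialization events belong to both $\CoveredSet$ and $\CoveredSetcert$, and $\Gcert.\lCO\suq G.\lCO$. Clauses~2 and 3 are preserved verbatim: the memory $\mem$ and the set $\IssuedSetcert=\IssuedSet$ are unchanged, and the locations and values of issued events are preserved since issued events lie in $\D$. For clauses~4 and 5, the set $\IssuedSet\setminus\CoveredSet$ restricted to thread $\tid$ equals $\IssuedSetcert\setminus\CoveredSetcert$ restricted to thread $\tid$, so the one-to-one correspondence between $\lprom$ and these events carries over immediately.

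The first nontrivial clause is clause~6, the justification of $\tcom$ by graph paths. I would show that for every write $w$ and every $\lURR_{\rlx}$-path $w\to r\to e$ with $e\in\lE\tidmod{\tid}\cap\CoveredSet$, the same path exists in $\Gcert$, and vice versa. The forward direction is the main point: the intermediate read $r$ satisfies either $r=e$ or $r\mathrel{\lPO}e$; in both cases $r\in\CoveredSet$ (for the second case because $e$ being coverable forces all $\lPO$-predecessors to be covered). Thus $r\in\D$, so the incoming $\lRF$ edge into $r$ is preserved by the construction of $\Gcert.\lRF$. The degenerate case $w=r=e$ is a covered write of $\tid$, hence issued, hence in $\D$, and therefore in $\Gcert.\lE$. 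The backward direction (no spurious path is introduced) follows because any new $\lRF$ edge targets a read outside $\D$, which in particular is not in $\CoveredSet$ and not $\lPO$-before a covered event of $\tid$.

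Clause~7 is the main obstacle and is where the receptiveness property of the operational semantics enters. The first part, $\sigma.\lG.\lE = \CoveredSetcert\cap\Gcert.\lE\tidmod{\tid}$, follows from the identity $\CoveredSetcert\cap\Gcert.\lE\tidmod{\tid}=\CoveredSet\cap G.\lE\tidmod{\tid}$ noted above. The hard part is to produce $\sigma'$ with $\sigma\to_{\tid}^{*}\sigma'$ and $\sigma'.\lG=\Gcert\rst{\tid}$. From the assumed $\simthread_{\tid}$ we already have some $\sigma''$ with $\sigma\to_{\tid}^{*}\sigma''$ and $\sigma''.\lG=G\rst{\tid}$. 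The graphs $G\rst{\tid}$ and $\Gcert\rst{\tid}$ have the same events, same dependencies and same $\lPO$; they differ only in the values of reads not in $\D$ and of events downstream of such reads by data flow. Applying receptiveness inductively to the finite sequence of reads of $\tid$ outside $\D$ that lie after $\sigma$'s program counter, replacing each in turn by the value dictated by $\Gcert.\lRF$, yields a modified run $\sigma\to_{\tid}^{*}\sigma'$ whose execution graph agrees with $\sigma''.\lG$ except precisely on those reads and their dependent events, and is thus equal to $\Gcert\rst{\tid}$. The key condition for receptiveness to apply is that none of the values being changed lies in $\dom{\lCTRL}$ going to an earlier covered event, which holds because $\D$ contains $\CoveredSet$ and the construction preserves $\lPO$-structure. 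This yields the desired $\sigma'$ and completes the verification of all seven clauses.
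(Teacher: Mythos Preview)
The paper does not give an explicit proof of this proposition (it is stated and immediately used, with the details deferred to the Coq development), so there is no textual proof to compare against. Your clause-by-clause strategy is the natural approach and is presumably what the mechanized proof does; clauses~1--6 are handled correctly.

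There is, however, a genuine gap in your argument for clause~7. You assert that ``$G\rst{\tid}$ and $\Gcert\rst{\tid}$ have the same events,'' but this is false. By construction $\Gcert.\lE = \CoveredSet \cup \IssuedSet \cup \dom{G.\lPO\mathbin{;}[\IssuedSet \cap G.\lE_{\tid}]}$, so $\Gcert.\lE_{\tid}$ is only the $\lPO$-prefix of $G.\lE_{\tid}$ up to the last issued write of thread~$\tid$; events after that (such as $\ee{25}$ in the paper's running example, \cref{fig:prog-gr-cert}) are dropped. Since receptiveness preserves the event set and only reassigns values, applying it to a run reaching $G\rst{\tid}$ cannot produce a run reaching the strictly smaller $\Gcert\rst{\tid}$.

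The repair is straightforward: first truncate the run $\sigma \to_{\tid}^{*} \sigma''$ to a prefix whose graph has exactly the events $\Gcert.\lE_{\tid}$ (possible because this set is $\lPO$-downward-closed in $G.\lE_{\tid}$), and only then apply receptiveness to the non-determined reads. Your justification that such reads lie outside $\dom{\lCTRL}$ also needs sharpening: the phrase ``$\dom{\lCTRL}$ going to an earlier covered event'' is not the right condition. The correct argument is that any $\lCTRL$-edge from a read $r$ with target in $\Gcert.\lE_{\tid}$ lands on an event that is either covered, issued, or $\lPO$-before an issued write; using that $\lCTRL$ is forward-closed under $\lPO$, each case yields a $\lPPO$-path from $r$ to an issued write (or places $r$ in $\CoveredSet$ directly), forcing $r \in \D$.
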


Putting \cref{prop:rlx-trav-full,prop:rlx-sim-step,prop:rlx-cert-cons,prop:rlx-sim-step,prop:rlx-cert-graph} together,
we derive the following strengthened version of \cref{prop:rlx-sim-step}, 
which additionally states that the new \Promise thread's state is certifiable.

\begin{proposition}
\label{prop:rlx-ext-sim-step}
  If $\simthread_{\tid}(G, \TC, \tup{\lts, \mem}, \Tto)$ and 
  $G \vdash \TC \etravStep_{\tid} \TC'$ hold,
  then there exist $\lts', \mem',\Tto'$ such that
$\tup{\lts,\mem} \astep{}^+ \tup{\lts',\mem'}$ and $\simthread_{\tid}(G, \TC', \tup{\lts', \mem'}, \Tto')$ hold,
and
there exist $\lts'',\mem''$ such that $\tup{\lts',\mem'} \astep{}^* \tup{\lts'',\mem''}$ and
$\lts''.\lprmem = \emptyset$.
\end{proposition}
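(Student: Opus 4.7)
The plan is to assemble this strengthened proposition by chaining the four preceding propositions, separating cleanly the ``do the step'' part from the ``certify the result'' part. For the first part, I would simply invoke \cref{prop:rlx-sim-step} on the given traversal step $G \vdash \TC \etravStep_{\tid} \TC'$, which immediately yields $\lts'$, $\mem'$, $\Tto'$ with $\tup{\lts,\mem} \astep{}^+ \tup{\lts',\mem'}$ and $\simthread_{\tid}(G, \TC', \tup{\lts', \mem'}, \Tto')$. (Strictly, \cref{prop:rlx-sim-step} is stated with $\astep{}$, but its proof produces at least one thread step per traversal step, so the plus closure is available.)

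For the certification part, I would build the certification graph $\Gcert$ and the certification traversal configuration $\TCcert = \tup{\CoveredSetcert, \IssuedSetcert}$ associated with the \emph{new} state $\tup{G, \TC'}$, as described in \cref{sec:rlx-certification}. By \cref{prop:rlx-cert-cons}, $\Gcert$ is \IMM-consistent and $\TCcert$ is a valid traversal configuration of $\Gcert$; by \cref{prop:rlx-cert-graph}, $\simthread_{\tid}(\Gcert, \TCcert, \tup{\lts', \mem'}, \Tto')$ holds. Then \cref{prop:rlx-trav-full} applied to $\Gcert$ and $\TCcert$ gives a finite sequence of traversal steps $\Gcert \vdash \TCcert \etravStep^{*} \tup{\Gcert.\lE, \Gcert.\lW}$. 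Crucially, every event of a thread $\tid' \neq \tid$ is already in $\CoveredSetcert$, and every write of such a thread in $\Gcert$ is already in $\IssuedSetcert$, so no \textsc{(cover)} or \textsc{(issue)} step in this sequence can concern any thread other than $\tid$; all steps carry the annotation $\etravStep_{\tid}$.

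Next, I would iterate \cref{prop:rlx-sim-step} along this traversal sequence, starting from $\tup{\lts', \mem'}$. Each individual application produces a matching thread step of $\tid$ in the promising machine and re-establishes the simulation relation, so composing them yields $\tup{\lts', \mem'} \astep{}^* \tup{\lts'', \mem''}$ together with $\simthread_{\tid}(\Gcert, \tup{\Gcert.\lE, \Gcert.\lW}, \tup{\lts'', \mem''}, \Tto'')$ for some $\Tto''$. In this final configuration the set $\IssuedSetcert \setminus \CoveredSetcert$ is empty, and hence by clause (4) of the simulation relation (``every promise corresponds to an issued uncovered event of $\tid$''), the promise set $\lts''.\lprmem$ must be empty, which is exactly what certification requires.

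The only delicate point I would expect is bookkeeping around the fact that certification is performed on $\Gcert$ rather than on $G$, so the final state $\lts''$ reached by certification need not coincide with anything related to $G$ itself — but this is fine, because \Promise only demands the \emph{existence} of some thread-local run fulfilling the promises, and \cref{prop:rlx-cert-graph} is precisely designed to let us reuse the same simulation relation on $\Gcert$. A minor additional check is that the iteration of \cref{prop:rlx-sim-step} preserves the use of thread $\tid$ throughout; this is immediate from the observation above that no other thread can appear as the subscript of a traversal step after we have pre-covered their events in $\CoveredSetcert$.
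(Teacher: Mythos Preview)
Your proposal is correct and follows essentially the same approach as the paper's proof outline: apply \cref{prop:rlx-sim-step} to get the simulated step, then use \cref{prop:rlx-cert-graph} and \cref{prop:rlx-cert-cons} to transfer the simulation relation to $\Gcert$, apply \cref{prop:rlx-trav-full} to obtain a complete traversal of $\Gcert$, and iterate \cref{prop:rlx-sim-step} along that traversal to reach a state with no outstanding promises. Your added justifications---that all traversal steps on $\Gcert$ must be by thread $\tid$ because other threads' events are pre-covered in $\CoveredSetcert$ (and their writes pre-issued), and that clause~(4) of $\simthread_\tid$ forces $\lts''.\lprmem=\emptyset$ once $\IssuedSet\setminus\CoveredSet=\emptyset$---are exactly the details the paper leaves implicit.
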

\begin{proof}[Proof outline]
By \cref{prop:rlx-sim-step}, there exist
  $\lts', \mem'$, and $\Tto'$ such that
  $\tup{\lts,\mem} \astep{}^+ \tup{\lts', \mem'}$ and 
  $\simthread_{\tid}(G, \TC', \tup{\lts', \mem'}, \Tto')$ hold.
  By \cref{prop:rlx-cert-graph}, 
  $\simthread_{\tid}(\Gcert, \TCcert, \tup{\lts', \mem'}, \Tto')$ holds.
  By \cref{prop:rlx-trav-full,prop:rlx-cert-cons}, we have 
  $\Gcert \vdash \TCcert \etravStep_{\tid}^* \tup{\Gcert.\lE, \Gcert.\lW}$.
  We inductively apply \cref{prop:rlx-sim-step} to obtain 
  $\tup{\lts'', \mem''}$ and $\Tto''$
  such that $\tup{\lts', \mem''} \astep{}^* \tup{\lts'', \mem''}$
  and $\simthread_{\tid}(\Gcert, \tup{\Gcert.\lE, \Gcert.\lW}, \tup{\lts'', \mem''}, \Tto'')$ hold.
  From the latter, it follows that $\lts''.\lprmem = \emptyset$.
\end{proof}

\subsection{Compilation correctness theorem (relaxed fragment)}
\label{sec:rlx-prom-compl-thm}

\begin{theorem}
\label{thm:rlx-prom-full}
Let $\prog$ be a program with only relaxed reads and relaxed writes.
Then, every outcome  of $\prog$ under \IMM (\cref{def:outcome})
is also an outcome of $\prog$ under \Promise (\cref{def:rlx-prom-outcome}).
\end{theorem}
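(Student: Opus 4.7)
Fix an outcome $O$ of $\prog$ under \IMM, witnessed by an \IMM-consistent initialized full execution graph $G$ of $\prog$. My plan is to build a matching \Promise run by following a traversal of $G$: \cref{prop:rlx-trav-full} supplies the traversal, \cref{prop:rlx-ext-sim-step} provides a certified \Promise thread step for each traversal step, and at the end the memory will deliver $O$. For the machine-level argument I lift $\simthread$ to a global invariant $\simfull(G, \TC, \tup{\gts,\mem}, \Tto)$ that asserts $\simthread_{\tid}(G, \TC, \tup{\gts(\tid), \mem}, \Tto)$ for every $\tid$. The initial instance $\simfull(G, \TC_0, \PConf_0(\prog), \bot)$ is immediate from the definitions of $\TC_0$, $\lts_0^\tid$, and $\mem_{\rm init}$, since every thread starts with empty promises, zero view, and empty local graph, and $\TC_0$ covers and issues exactly the initialization events.

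The main work is a single global simulation step. From $\simfull(G, \TC, \tup{\gts,\mem}, \Tto)$ and $G \vdash \TC \etravStep_{\tid} \TC'$, applying \cref{prop:rlx-ext-sim-step} at thread $\tid$ yields $\tup{\gts(\tid), \mem} \astep{}^+ \tup{\lts', \mem'}$ together with a certifying continuation, making $\tup{\gts,\mem} \astep{} \tup{\gts[\tid \mapsto \lts'], \mem'}$ a valid \Promise machine step. I then check that $\simthread_{\tid'}$ is preserved for every $\tid' \neq \tid$: its local state, promise set, and view are untouched, the sets $\CoveredSet \cap \lE\tidmod{\tid'}$ and $\IssuedSet \cap \lE\tidmod{\tid'}$ do not change, and any new message in $\mem'$ corresponds to a newly issued $\tid$-event recorded in $\IssuedSet'$ by the extension $\Tto'$. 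Iterating along the sequence $G \vdash \TC_0 \etravStep^{*} \tup{G.\lE, G.\lW}$ given by \cref{prop:rlx-trav-full} then produces a \Promise run $\PConf_0(\prog) \astep{}^{*} \tup{\gts_f, \mem_f}$ satisfying $\simfull(G, \tup{G.\lE, G.\lW}, \tup{\gts_f, \mem_f}, \Tto_f)$.

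It remains to drive the machine to a terminal state and read off the outcome. Clause 4 of $\simthread$ forces $\gts_f(\tid).\lprmem = \emptyset$ for every $\tid$ (no issued-uncovered events remain), so further silent local steps need no nontrivial certification; clause 7, combined with $G$ being a full execution of $\prog$, lets me append such steps to reach a machine state $\tup{\gts^\star, \mem_f}$ in which every thread's local state is terminal. For the outcome, clauses 2 and 3 put the $\loc$-messages of $\mem_f$ in bijection with $G.\ewlab{}{\loc}{}{}$ (which contains $\initev{\loc}$ when the location is touched, while otherwise $\mem_f$ only retains $\msgRlx{\loc}{0}{0}$), and clause 1 together with \cref{ax:total} makes this bijection strictly monotone between $G.\lCO$ and timestamps. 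Hence the timestamp-maximal $\loc$-message carries the value of the $G.\lCO$-maximal write, i.e.\ $O(\loc)$, as required by \cref{def:outcome,def:rlx-prom-outcome}.

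The most delicate part of the inductive step is preserving clause 6 of $\simthread_{\tid'}$: its view $\tcom$ is defined as a maximum over timestamps of $\lURR_\rlx$-predecessors (in $G$) of covered $\tid'$-events, and I need this value to be stable under the extension $\Tto \rightsquigarrow \Tto'$. The saving observation is that $\CoveredSet \cap \lE\tidmod{\tid'}$ is unchanged, so the underlying set of predecessors is the same; moreover the invariant already carried by $\simthread_{\tid'}$ at the pre-step forces every such predecessor to lie in $\IssuedSet$, hence in $\dom{\Tto}$ with timestamp preserved by $\Tto'$, leaving the maximum untouched. A parallel, easier check covers clauses 2 and 3 for the new $\mem'$-message, closing the inductive step.
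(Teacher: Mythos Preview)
Your proposal follows essentially the same approach as the paper: lift $\simthread_\tid$ to a global invariant $\simfull$, establish it initially, maintain it along the traversal supplied by \cref{prop:rlx-trav-full} using \cref{prop:rlx-ext-sim-step} for the active thread and a direct check for the others, and read off the outcome from the final configuration. You supply considerably more detail than the paper's outline (the preservation of clause~6 for idle threads, the silent-step extension to terminal states, and the bijection between messages and writes); these elaborations are sound, though the argument that $\Tto'$ agrees with $\Tto$ on the old issued set is a property of the construction inside \cref{prop:rlx-sim-step} rather than of its statement, so strictly speaking it should be recorded as an additional conclusion of that proposition.
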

\begin{proof}[Proof outline]
We introduce a simulation relation $\simfull$ on traversal configurations and \Promise states:
\[
\simfull(G, \TC, \tup{\gts,\mem}, \Tto) \defeq
  \forall \tid \in \Tid. \; \simthread_{\tid}(G, \TC, \tup{\gts(\tid),\mem}, \Tto)
\]
We show that $\simfull$ holds for an \IMM-consistent execution graph $G$, which has the outcome $O$, of the program $\prog$,
its initial traversal configuration, 
the initial \Promise state $\PConf_0(\prog)$, and the initial timestamp mapping $\Tto=\bot$.
Then, we inductively apply \cref{prop:rlx-ext-sim-step} on a traversal
$G\vdash \tup{G.\lE \cap \Init, G.\lE \cap \Init} \etravStep^* \tup{G.\lE, G.\lW}$, which exists by \cref{prop:rlx-trav-full},
and additionally show that at every step $\simthread_{\tid}$ holds for every thread $\tid$ that did not take the step.
Thus, we obtain a \Promise state $\PConf$ and a timestamp function $\Tto$ such that
$\PConf_0(\prog) \astep{}^* \PConf$
and $\simfull(G, \tup{G.\lE, G.\lW}, \PConf, \Tto)$ hold.
From the latter, it follows that $O$ is an outcome of $\prog$ under \Promise.
\end{proof}

\section{From the promising semantics to \IMM: The general case}
\label{sec:prom-compilation}

In the section, we extend the result of \cref{sec:rlx-prom-compilation} to the full 
\Promise model.
Recall that, due to the limitation of \Promise discussed in \cref{ex:strong-rmw}, 
we assume that all RMWs are ``strong''.

\begin{theorem}
\label{thm:prom-full}
Let $\prog$ be a program in which all RMWs are ``strong''.
Then, every outcome  of $\prog$ under \IMM 
is also an outcome of $\prog$ under \Promise.
\end{theorem}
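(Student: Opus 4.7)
The plan is to lift the proof of \cref{thm:rlx-prom-full} by extending every ingredient of the simulation machinery to the remaining language features: release/acquire annotations on reads, writes and fences; sequentially consistent fences; and RMWs (all assumed strong by hypothesis). The overall skeleton is preserved: I will define a richer per-thread simulation relation $\simthread_{\tid}$, prove analogues of the propositions of \cref{sec:rlx-prom-compilation} in this extended setting, and glue them via a product simulation $\simfull$ that is advanced along a traversal of a chosen \IMM-consistent execution graph.

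First, I enrich the thread simulation relation to track all three promising-semantics views ($\viewCur$, $\viewAcq$, $\viewRel$), together with the per-message release views carried in \Promise memory. The matching clauses will state that (i) $\viewCur$ is generated by writes reachable from covered events of $\tid$ via a $\lHB$-based closure, (ii) $\viewAcq$ additionally absorbs $\lECO$-reachable writes following acquire reads or acquire fences, and (iii) the release view attached to every issued write $w$ matches the graph set $\dom{\lRELEASE\mathbin{;}[w]}$. Accordingly, $\issuable$ and $\coverable$ are strengthened to respect all \IMM orderings: for $w$ to become issuable I require that, up to an intra-thread $\lPO$ prefix, $\dom{\lAR\mathbin{;}[w]} \suq \IssuedSet$; this subsumes the $\lBOB$, $\lDETOUR$, $\lPSC$ and strong-RMW components of $\lAR$, and, through the clause $[\ewlab{}{}{}{\strong}]\mathbin{;}\lPO\mathbin{;}[\lW]$, forces issuing a strong RMW to simultaneously schedule all its $\lPO$-later writes. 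Symmetrically, covering an acquire read or an SC fence demands that its relevant $\lSW$- and $\lPSC$-predecessors be already covered. Following the remark of \cref{sec:model-coherence}, I carry the argument out relative to RC11's $\lSW$, so that the release-fence/relaxed-write split used inside certification remains internally sound; the \IMM-consistency assumed in the theorem is inherited via \cref{thm:rel-write-to-fence}.

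Next, I upgrade the certification construction of \cref{sec:rlx-certification}. The determined set $\D$ is enlarged so that it is closed under $\lSW$ and, for acquire reads and SC fences, under the appropriate $\lECO$-extensions; this prevents the redirection of $\lRF$-edges of undetermined reads from either strengthening the certifying thread's view beyond what \Promise allows, or invalidating a release view already fixed on a previously promised message. Certification will rely on a slightly weakened relation $\simthreadcert_{\tid}$ whose view clauses are one-sided bounds rather than equalities, reflecting that certification may not introduce new promises; a dedicated lemma then shows that $\simfull$ implies $\simthreadcert_{\tid}$ on $\Gcert$. The analogues of \cref{prop:rlx-trav-full,prop:rlx-sim-step,prop:rlx-cert-cons,prop:rlx-cert-graph,prop:rlx-ext-sim-step} go through with these modifications.

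The main obstacle will be, as in the relaxed fragment but substantially more delicate, the certification step. Two points concentrate the difficulty. Promised messages already carry release views, so the receptiveness argument used to re-label $\Gcert$ must preserve not only plain read values but also the release views of writes that are later fulfilled; this calls for a more careful accounting in $\D$ and for ordering the rebuild compatibly with $\lRELEASE$. Strong RMWs interact with $\issuable$ through the clause $[\ewlab{}{}{}{\strong}]\mathbin{;}\lPO\mathbin{;}[\lW]$: issuing such an RMW forces the simultaneous scheduling of all $\lPO$-later writes of its thread, so the analogue of \cref{prop:rlx-sim-step} must perform a bounded look-ahead, and certification must fulfill the induced chain of promises without deadlocking---precisely where the hypothesis that every RMW is strong is used, since it makes the required ordering visible in $\lAR$. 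Once these cases are handled, the theorem follows by the same inductive application of the extended analogue of \cref{prop:rlx-ext-sim-step} along the traversal as in the proof of \cref{thm:rlx-prom-full}, yielding a terminal \Promise run with the desired outcome.
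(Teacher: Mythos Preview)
Your high-level skeleton (enrich $\simthread_\tid$, extend the traversal, rebuild a certification graph, glue via $\simfull$) matches the paper, but several concrete ingredients are wrong or missing, and one of them is essential.

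First, you have the strong-RMW ordering backwards. The clause $[\ewlab{}{}{}{\strong}]\mathbin{;}\lPO\mathbin{;}[\lW]\subseteq\lAR$ means that any write $w$ that is $\lPO$-\emph{after} a strong RMW write $w'$ has $w'$ as an $\lAR$-predecessor. Consequently the issuability side-condition is ``to issue $w$, every strong RMW write $\lPO$-before $w$ must already be issued'' (this is exactly the paper's \ref{req:w-strong-iss}). It does \emph{not} say that issuing a strong RMW drags all $\lPO$-later writes with it; there is no bounded look-ahead and no chain of simultaneous promises to discharge. Your reading would make the issue step unimplementable in \Promise and is simply not what the semantics requires.

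Second, and more seriously, your certification plan misses the key difficulty of the full model: certification must succeed for \emph{every future memory} $\omem\supseteq\mem$ (and future SC view). Enlarging the determined set $\D$ and weakening the view clauses of $\simthreadcert$ is necessary but not sufficient. The paper's crucial move is to \emph{modify $\lCO$} in $\Gcert$ so that every non-issued write of the certified thread becomes the immediate $\lCO$-predecessor of an already-issued write of that thread (or is $\lCO$-maximal). This lets the certification run realise each such write by \emph{splitting} an existing promised message rather than allocating a fresh timestamp interval, which is the only way to cope with an arbitrary $\omem$ that may leave no free intervals. Without this $\lCO$-reshuffling and the use of \Promise's split step, your certification argument will fail on examples like the one in \cref{ex:cert-co}. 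Relatedly, the simulation relation must track timestamp \emph{intervals} via a second map $\Tfrom$ (not only $\Tto$), since RMW atomicity in \Promise is expressed through interval adjacency.

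Third, two smaller points. Your proposed issuability condition $\dom{\lAR\mathbin{;}[w]}\subseteq\IssuedSet$ conflates ``must be covered'' with ``must be issued'': fences and $\lPO$-earlier release writes to the same location must be \emph{covered}, not issued (\ref{req:fwbob-cov}), and collapsing the distinction breaks the traversal-existence proof. The paper also replaces $\lPSC$ by a total order $G.\lSC$ on SC fences (and adds dedicated \textsc{(release-cover)} and \textsc{(rmw-cover)} traversal steps) precisely so that coverability of an SC fence can be stated as ``all $G.\lSC$-earlier fences are covered''; your $\lPSC$-based coverability is workable but will force you to redo the traversal-progress argument. Finally, \cref{thm:rel-write-to-fence} is a lemma about the \POWER mapping and plays no role on the \Promise side; the relevant observation here is only that one may work with $\lSWs$ instead of $\lSW$, which yields a weaker \IMM and hence suffices.
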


To prove this theorem, we find it technically convenient
to use a slightly modified version of \IMM, which is (provably) weaker.
In this version, we use the simplified synchronization relation $G.\lSWs$ (see \cref{rem:rs}),
as well as a \emph{total order on SC fences}, $G.\lSC$, which we include as another basic component of execution graphs.
Then, we include $G.\lSC$ in $G.\lAR$  instead of $G.\lPSC$ (see \cref{sec:global}), and require that
$G.\lSC ; G.\lHB ; (G.\lECO; G.\lHB)^?$ is irreflexive (to ensure that $G.\lPSC \suq G.\lSC$).
It is easy to show that the latter modification results in an equivalent model,
while the use of $G.\lSWs$ makes this semantics only weaker than \IMM.
The $G.\lSC$ relation facilitates the construction of a run of \Promise,
as it fully determines the order in which SC fences should be executed.

The rest of this section is structured as follows.
In \cref{sec:promise} we briefly introduce the full \Promise model.
In \cref{sec:traversal} we introduce more elaborate traversal of \IMM execution graphs,
which might be followed by the full \Promise model.
In \cref{sec:sim-cf-step} we define the simulation relation for the full model.
In \cref{sec:certification} we discuss how certification graphs are adapted for the full model.

\subsection{The full promise machine}
\label{sec:promise}

In the full \Promise model, the machine state is a triple $\PConf=\tup{\gts,\gsco,\mem}$.
The additional component $\gsco \in \View $ is a (global) \emph{SC view}.
Messages in the memory are of the form $\msg{\loc}{\val}{f}{t}{\view}$,
where, comparing to the version from \cref{sec:rlx-promise},
(i) a timestamp $t$ is extended to a \emph{timestamp interval} $(f,t] \in \Q \times \Q$
satisfying $f < t$ or $f = t = 0$ (for initialization messages)
and (ii) the additional component $\view \in \View$ is the \emph{message view}.\footnote{
The order $\leq$ on $\Q$ is extended pointwise to order $\Loc \rightarrow \Q$.
$\bot$ and $\sqcup$ denote the natural bottom element and join
operations (pointwise extensions of the
initial timestamp $0$ and the $\max$ operation on timestamps).
$[\loc_1@t_1 \til \loc_n@t_n]$ denotes the function assigning $t_i$ to $\loc_i$ and $0$ to other locations.}
Messages to the same location should have disjoint timestamp intervals, 
and thus the intervals totally order the messages to each location.
The use of intervals allows one to express the fact that two
messages are adjacent (corresponding to $\imm{G.\lCO}$),
which is required to enforce the RMW atomicity condition (\cref{sec:model-atomicity}).

Message views represent the ``knowledge'' carried by the message
that is acquired by threads reading this message (if they use an acquire read or fence).
In turn, the thread view $\tcom$ is now a triple
$\tup{\viewCur, \viewAcq, \viewRel} \in \View \times \View \times (\Loc \rightarrow \View)$,
whose components are called the \emph{current}, \emph{acquire}, and \emph{release} views.
The different thread steps (for the different program instructions)
constrain the three components of the thread view with the timestamps and message views that are included 
in the messages that the thread reads and writes,
as well as with the global SC view $\gsco\in \View$.
These constraints are tailored to precisely enforce 
the coherence and RMW atomicity properties (\cref{sec:model-coherence},\cref{sec:model-atomicity}),  
as well as the global synchronization provided by SC fences.
(Again, we refer the reader to \citet{Kang-al:POPL17} for the full definition of thread steps.)


Apart from promising messages,
our proof utilizes another non-deterministic step of \Promise,
which allows a thread to \emph{split} its promised messages,
\ie to replace its promise $\msg{\loc}{\val}{f}{t}{\view}$
with two promises $\msg{\loc}{\val'}{f}{t'}{\view'}$ and $\msg{\loc}{\val}{t'}{t}{\view}$
provided that $f < t' < t$.

In the full \Promise model, the certification requirement is stronger than the one presented in \cref{sec:rlx-prom-compilation} for the relaxed fragment.
Due to possible interference of other threads before the current thread fulfills its promises,
certification is required for every possible \emph{future memory} and \emph{future SC view}.
Thus, a \emph{machine step} in \Promise is given by:
\[
\inferrule{
\tup{\gts(\tid),\gsco,\mem} \astep{}^+ \tup{\lts',\gsco',\mem'}  \\
\forall \omem\supseteq\mem',\ogsco \geq \gsco'.\; \exists \lts''.\; \tup{\lts',\ogsco,\omem} \astep{}^* \tup{\lts'',\_,\_} \land \lts''.\lprmem = \emptyset
}{\tup{\gts,\gsco,\mem} \astep{} \tup{\gts[\tid\mapsto \lts'],\gsco',\mem'}}
\]



\begin{example}
We revisit the program presented in \cref{ex:rfi-not-preserved}.
To get the intended behavior in \Promise,
thread I starts by promising a message $\msg{z}{1}{1}{2}{[z@2]}$.
It may certify the promise since its fourth instruction does not depend on $a$ and
the thread may read $1$ from $y$ when executing the third instruction in any future memory.
After the promise is added to memory, thread II reads it and writes
$\msg{x}{1}{1}{2}{[x@2]}$ to the memory. 
Then, thread I reads from this message, executes its remaining instructions, and fulfills its promise.
\qed
\end{example}

\begin{remark}
In \Promise, the notion of future memory is broader---a future memory
may be obtained by a sequence of memory modifications including message additions,
message splits and lowering of message views.
In our Coq development, we show that it suffices to consider only future memories that are obtained by adding messages
(\citeapp{sec:futurememory}{Appendix E} outlines the proof of this claim).
\end{remark}

\begin{remark}
What we outline here ignores \Promise's \emph{plain} accesses.
These are weaker than relaxed accesses (they only provide partial coherence),
and are not needed for accounting for \IMM's behaviors.
Put differently, one may assume that the compilation from \Promise to \IMM
first strengthens all plain access modes to relaxed.
The correctness of compilation then follows from the soundness of
this strengthening (which was proved by \citet{Kang-al:POPL17})
and our result that excludes plain accesses. 
\end{remark}

\subsection{Traversal}
\label{sec:traversal}

To support all features of \IMM and \Promise models, we have to complicate the traversal considered
in \cref{sec:rlx-traversal}.
We do it by introducing two new traversal steps (see \cref{fig:traversal-steps})
and modifying the definitions of issuable and coverable events.

The \textsc{(release-cover)} step is introduced because the \Promise model forbids to
promise a release write without fulfilling it immediately.
It adds a release write to both the covered and issued sets in a single step.
Its precondition is simple: all $G.\lPO$-previous events have to be covered.

The \textsc{(rmw-cover)} step reflects that RMWs in \Promise are performed in one atomic step,
even though they are split to two events in \IMM.
Accordingly, when traversing $G$, we require to cover the write part of $\lRMW$ edges
immediately after their read part.
If the write is release, then, again since release writes cannot be promised without immediate fulfillment,
it is issued in the same step.

\begin{figure}[t]
\small
\begin{mathpar}
\inferrule[\textsc{(issue)}]{
    w \in \issuable(G, \CoveredSet, \IssuedSet) \\
    w \nin G.\lW^{\rel}
    }{
    G \vdash
    \tup{\CoveredSet, \IssuedSet} \etravStep_{\lTID(w)} \tup{\CoveredSet, \IssuedSet \uplus \{w\}}
} \and
\inferrule[\textsc{(cover)}]{
    e \in \coverable(G, \CoveredSet, \IssuedSet) \\
    e \nin \dom{G.\lRMW}
}{
    G \vdash 
    \tup{\CoveredSet, \IssuedSet} \etravStep_{\lTID(e)} \tup{\CoveredSet \uplus \{e\}, \IssuedSet}
} \and
 \inferrule[\textsc{(release-cover)}]{
    \dom{G.\lPO\mathbin{;}[w]}\suq \CoveredSet \\
     w \in G.\lW^{\rel}
     }{
     G \vdash
     \tup{\CoveredSet, \IssuedSet} \etravStep_{\lTID(w)} \tup{\CoveredSet \uplus \{w\}, \IssuedSet \uplus \{w\}}
 } \and
\inferrule[\textsc{(rmw-cover)}]{
    r \in \coverable(G, \CoveredSet, \IssuedSet) \\
    \tup{r,w} \in G.\lRMW \\\\
     (w \in \IssuedSet \land \IssuedSet' = \IssuedSet) \lor (w \in G.\lW^{\rel} \land \IssuedSet' =\IssuedSet \uplus \{w\})
}{
    G \vdash 
    \tup{\CoveredSet, \IssuedSet} \etravStep_{\lTID(r)} \tup{\CoveredSet \uplus \{r, w\}, \IssuedSet'}
}
\end{mathpar}
\caption{Traversal steps.}
\label{fig:traversal-steps}
\end{figure}

The full definition of issuable event has additional requirements.
\begin{definition}
\label{def:issuable}
An event $w$ is \emph{issuable} in $G$
and $\tup{\CoveredSet, \IssuedSet}$, denoted $w\in\issuable(G, \CoveredSet, \IssuedSet)$, if $w\in G.\lW$ and the following hold:
      \begin{itemize}
        \item $\dom{([G.\lW^{\rel}]\mathbin{;}G.\lPO\rst{G.\lLOC} \cup [G.\lF]\mathbin{;}G.\lPO)\mathbin{;} [w]} \suq \CoveredSet$
          \labelAxiom{fwbob-cov}{req:fwbob-cov}
        \item $\dom{(G.\lDETOUR \cup G.\lRFE)\mathbin{;} G.\lPPO\mathbin{;}[w]} \suq \IssuedSet$
          \labelAxiom{ppo-iss}{req:ppo-iss}
        \item $\dom{(G.\lDETOUR \cup G.\lRFE)\mathbin{;} [G.\erlab{\acq}{}{}{}]\mathbin{;}G.\lPO\mathbin{;}[w]} \suq \IssuedSet$
          \labelAxiom{acq-iss}{req:acq-iss}
        \item $\dom{[G.\lWstrong]\mathbin{;}G.\lPO\mathbin{;}[w]} \suq \IssuedSet$
          \labelAxiom{w-strong-iss}{req:w-strong-iss}
      \end{itemize}
\end{definition}
The \ref{req:ppo-iss} condition extends the condition from \cref{def:rlx-issuable}.
The \ref{req:fwbob-cov} condition arises from \Promise's restrictions on promises:
a release write cannot be executed if the thread has an unfulfilled promise to the same location,
and a release fence cannot be executed if the thread has any unfulfilled promise.
Accordingly, we require that when $w$ is issued 
$G.\lPO$-previous release writes to the same location and release fences have already been covered.
Note that we actually require this from all $G.\lPO$-previous fences (rather than just release ones).
This is not dictated by \Promise, but simplifies our proofs.
Thus, our proof implies that compilation from \Promise to \IMM remains correct even if 
acquire fences ``block'' promises as release ones.
The other conditions in \cref{def:issuable} are forced by \Promise's certification,
as demonstrated by the following examples.

\begin{figure}[t]\small\centering 
$\inarrII{
  \ee{11}: \writeInst{\rlx}{x}{2}  \\
}{
  \ee{21}: \writeInst{\rlx}{x}{1}  \\
  \ee{22}: \readInst{\rlx}{a}{x} \comment{2} \\
  \ee{23}: \writeInst{\rlx}{y}{a}  \\
}
\quad\vrule\quad
\inarr{\begin{tikzpicture}[yscale=0.8,xscale=1]
  \node (i11) at (0,  -0.5) {$\ee{11}: \ewlab{\rlx}{\loc}{2}{}$};
  \node (i21) at (3,  0) {$\ee{21}: \ewlab{\rlx}{\loc}{1}{}$};
  \node (i22) at (3, -1) {$\ee{22}: \erlab{\rlx}{\loc}{2}{\isnotex}$};
  \node (i23) at (3, -2) {$\ee{23}: \ewlab{\rlx}{\locy}{2}{}$};
  \draw[detour] (i21) edge node[right] {\smaller\smaller$\lDETOUR$} (i22);
  \draw[mo] (i21) edge node[above] {\smaller\smaller$\lCOE$} (i11);
  \draw[rf] (i11) edge node[below] {\smaller\smaller$\lRFE$} (i22);
  \draw[deps] (i22) edge node[right] {\smaller\smaller$\lDEPS$} (i23);
\end{tikzpicture}}$
\caption{Demonstration of the necessity of \ref{req:ppo-iss} in the definition of $\issuable$.}
\label{fig:ex-ppo-iss}
\end{figure}

\begin{example}
Consider the program and its execution graph on \cref{fig:ex-ppo-iss}.
To certify a promise of a message that corresponds to $\ee{23}$,
we need to be able to read the value $2$ for $x$ in $\ee{22}$
(as $\ee{23}$ depends on this value).
Thus, the message that corresponds to $\ee{11}$ has to be in memory already, \ie
the event $\ee{11}$ has to be already issued.
This justifies the $G.\lRFE\mathbin{;} G.\lPPO$ part of \ref{req:ppo-iss}.
The justification for the $G.\lDETOUR\mathbin{;} G.\lPPO$  part of \ref{req:ppo-iss} is related to the requirement of certification \emph{for every future memory}.
Indeed, in the same example, it is also required that $\ee{21}$ was issued before $\ee{23}$:
We know that $\ee{23}$ is issued after $\ee{11}$, and thus, 
there is a message of the form $\msg{\loc}{2}{f_{\ee{11}}}{t_{\ee{11}}}{\_}$ in the memory.
Had $\ee{21}$ not been issued before, 
the instruction $\ee{21}$ would have to add a message of the form $\msg{\loc}{1}{f_{\ee{21}}}{t_{\ee{21}}}{\_}$ to the memory
during certification.
Because $\ee{22}$ has to read from $\msg{\loc}{2}{f_{\ee{11}}}{t_{\ee{11}}}{\_}$, the timestamp $t_{\ee{21}}$ has to be smaller than $t_{\ee{11}}$.
However, an arbitrary future memory might not have free timestamps in $(0, f_{\ee{11}}]$.
\qed
\end{example}

\begin{figure}[t]\centering 
\small
$\inarrC{
\inarrIV{
  \ee{11}: \writeInst{\rlx}{x}{3}
}{
  \ee{21}: \writeInst{\rlx}{y}{2} \\
  \ee{22}: \writeInst{\rel}{x}{2}
}{
  \ee{31}: \readInst{\rlx}{a}{x} \comment{2} \\
  \ee{32}: \writeInst{\rel}{z}{2}
}{
  \ee{41}: \readInst{\acq}{b}{z} \comment{2} \\
  \ee{42}: \readInst{\acq}{c}{x} \comment{3} \\
  \ee{43}: \writeInst{\rlx}{y}{1}
} \\
\\
\hline
\\
\inarr{\begin{tikzpicture}[yscale=0.8,xscale=1]
  \node (i11) at (0,  -0.5) {$\ee{11}: \ewlab{\rlx}{\loc}{3}{}$};

  \node (i21) at (3.5,     0) {$\ee{21}: \ewlab{\rlx}{\locy}{2}{}$};
  \node (i22) at (3.5,    -1) {$\ee{22}: \ewlab{\rel}{\loc}{2}{}$};

  \node (i31) at (7,     0) {$\ee{31}: \erlab{\rlx}{\loc}{2}{\isnotex}$};
  \node (i32) at (7,    -1) {$\ee{32}: \ewlab{\rel}{\locz}{2}{}$};

  \node (i41) at (10.5,   0.5) {$\ee{41}: \erlab{\acq}{\locz}{2}{\isnotex}$};
  \node (i42) at (10.5,  -0.5) {$\ee{42}: \erlab{\acq}{\loc}{3}{\isnotex}$};
  \node (i43) at (10.5,  -1.5) {$\ee{43}: \ewlab{\rlx}{\locy}{1}{}$};

  \draw[mo] (i22) edge node[below] {\smaller\smaller$\lCOE$} (i11);
  \draw[po] (i21) edge (i22);
  \draw[rf] (i22) edge node[below] {\smaller\smaller$\lRFE$} (i31);
  \draw[po] (i31) edge (i32);
  \draw[rf] (i32) edge node[below] {\smaller\smaller$\lRFE$} (i41);
  \draw[po] (i41) edge (i42);
  \draw[po] (i42) edge (i43);
  \draw[mo,out=190,in=-40] (i43) edge node[pos=.1,above] {\smaller\smaller$\lCOE$} (i21);
  \draw[rf,out=20,in=160] (i11) edge node[below] {\smaller\smaller$\lRFE$} (i42);

  \begin{scope}[on background layer]
     \coveredBox{i41};

     \draw[coveredStyle] ($(i21)  + (-1.3,0.5)$) rectangle ++(2.6,-2.0);
     \issuedBox{i21};
     \issuedBox{i22};

     \draw[coveredStyle] ($(i31)  + (-1.3,0.5)$) rectangle ++(2.6,-2.0);
     \issuedBox{i32};
  \end{scope}


\end{tikzpicture}}
}
$
\caption{Demonstration of the necessity of \ref{req:acq-iss} in the definition of $\issuable$.
The covered events are marked by
{\protect\tikz \protect\draw[coveredStyle] (0,0) rectangle ++(0.35,0.35);}
and the issued ones by 
{\protect\tikz \protect\draw[issuedStyle] (0,0) rectangle ++(0.35,0.35);}.}
\label{fig:ex-acq-iss}
\end{figure}

\begin{example}
Consider the program and its execution graph on \cref{fig:ex-acq-iss}.
Why does $\ee{43}$ have to be issued after $\ee{11}$, \ie
why to respect a path $[\ee{11}]\mathbin{;} G.\lRFE\mathbin{;} [G.\erlab{\acq}{}{}{}]\mathbin{;}G.\lPO\mathbin{;} [\ee{43}]$?
In the corresponding state of simulation, 
the \Promise memory has messages related to the issued set with timestamps respecting $G.\lCO$.
Without loss of generality, suppose that the memory contains the 
messages $\msg{\locy}{2}{1}{2}{[\locy@2]}$, $\msg{\loc}{2}{1}{2}{[\loc@2,\locy@2]}$,
and $\msg{\locz}{2}{1}{2}{[\loc@2,\locz@2]}$
related to $\ee{21}$, $\ee{22}$, and $\ee{32}$ respectively.
Since the event $\ee{41}$ is covered, the fourth thread has already executed the instruction $\ee{41}$, which is an acquire read.
Thus, its current view is updated to include $[\loc@2,\locz@2]$.
Suppose that $\ee{43}$ is issued. Then, the \Promise machine has to be able to promise
a message $\msg{\locy}{1}{\_}{t_{\ee{43}}}{[y@t_{\ee{43}}]}$ for some $t_{\ee{43}}$. 
The timestamp $t_{\ee{43}}$ has to be less than $2$, which is the timestamp of the message related to $\ee{21}$, since $\tup{\ee{43},\ee{21}} \in G.\lCO$.
Now, consider a certification run of the fourth thread.
In the first step of the run, the thread executes the instruction $\ee{42}$.
It is forced to read from $\msg{\loc}{2}{1}{2}{[\loc@2,\locy@2]}$ since thread's view is equal to $[\loc@2,\locz@2]$.
Because $\ee{42}$ is an acquire read, the thread's current view incorporates the message's view
and becomes $[\loc@2,\locy@2,\locz@2]$. After that, the thread cannot fulfill the promise to the
location $\locy$ with the timestamp $t_{\ee{43}}<2$.
\qed
\end{example}

\begin{example}
  \label{ex:w-strong-iss}
To see why we need \ref{req:w-strong-iss}, revisit the program in \cref{ex:strong-rmw}.
Suppose that we allow to issue $\ewlab{\rlx}{y}{1}{}$ before issuing $\ewlab{\rel}{x}{1}{\strong}$.
Correspondingly, in \Promise, the second thread promises a message $\msg{y}{1}{1}{2}{[y@2]}$ and
has to certify it in any future memory.
Consider a future memory that contains two messages to location $x$: 
an initial one, $\msg{x}{0}{0}{0}{\bot}$, and $\msg{x}{1}{0}{1}{[x@1]}$.
In this state $\incInst{\rlx}{\rel}{c}{x}{1}{\strong}$ has to read from the
non-initial message and assign $1$ to $c$,
since RMWs are required to add messages adjacent to the ones they reads from.
After that, $\writeInst{\rlx}{y}{c+1}$ is no longer able to fulfill
the promise with value $1$. 
\qed
\end{example}


The full definition of coverable event adds (\wrt \cref{def:rlx-coverable}) cases related to fence events:
for an SC fence to be coverable, all $G.\lSC$-previous fence events have to be already covered.

\begin{definition}
An event $e$ is called \emph{coverable} in $G$ 
and $\tup{\CoveredSet, \IssuedSet}$, denoted $e\in \coverable(G, \CoveredSet, \IssuedSet)$, 
if $e \in G.\lE$, $\dom{G.\lPO\mathbin{;}[e]} \suq \CoveredSet$, and either
$(i)$ $e \in G.\lW \cap \IssuedSet$;
$(ii)$ $e \in G.\lR$ and $\dom{G.\lRF\mathbin{;}[e]} \suq \IssuedSet$;
$(iii)$ $e \in G.\lF^{\sqsubset \sco}$;
or $(iv)$ $e \in G.\lF^\sco$ and $\dom{G.\lSC\mathbin{;}[e]} \suq \CoveredSet$.
\end{definition}



By further requiring that traversals configurations $\tup{\CoveredSet, \IssuedSet}$ of
an execution $G$ satisfy $\IssuedSet \cap G.\lW^{\rel} \suq \CoveredSet$ and
$\codom{[\CoveredSet]\mathbin{;}G.\lRMW} \suq \CoveredSet$,
\cref{prop:rlx-trav-full} is extended to the updated definition of the traversal strategy.



\subsection{Thread step simulation}
\label{sec:sim-cf-step}

Next, we refine the simulation relation from \cref{sec:rlx-sim-cf-step}.
The relation $\simthread_{\tid}(G, \TC, \tup{\lts, \gsco, \mem}, \Tfrom, \Tto)$
has an additional parameter $\Tfrom : \IssuedSet \rightarrow \Q$, which is used to assign
lower bounds of a timestamp interval to issued writes ($\Tto$ assigns upper bounds).
We define this relation to hold if the following conditions are met (for conciseness we omit the ``$G.$'' prefix):\footnote{To relate 
the timestamps in the different views to relations in $G$ (items (3),(4),(7)), 
we use essentially the same definitions that were introduced by \citet{Kang-al:POPL17}
when they related the promise-free fragment of \Promise to a declarative model.}
\begin{enumerate}
\item $\Tfrom$ and $\Tto$ agree with $\lCO$ and reflect the requirements on timestamp intervals:
\begin{itemize}
\item $\forall w \in \lE \cap \Init. \; \Tto(w) = \Tfrom(w) = 0$ and $\forall w \in \IssuedSet \setminus \Init. \; \Tfrom(w) < \Tto(w)$
\item $\forall \tup{w, w'} \in [\IssuedSet] \mathbin{;}\lCO \mathbin{;} [\IssuedSet]. \; \Tto(w) \le \Tfrom(w')$ and 
$\forall \tup{w, w'} \in [\IssuedSet] \mathbin{;} \lRF \mathbin{;} \lRMW \mathbin{;} [\IssuedSet]. \; \Tto(w) = \Tfrom(w')$
\end{itemize}

\item Non-initialization messages in $\mem$ have counterparts in $\IssuedSet$:
\begin{itemize}
\item $\forall \msg{\loc}{\_}{f}{t}{\_} \in \mem. \;   t \neq 0 \implies \exists w \in \IssuedSet. \; \lLOC(w) = \loc \land \Tfrom(w) = f \land \Tto(w) = t$
\item $\forall \tup{w,w'} \in [\IssuedSet] \mathbin{;} \lCO \mathbin{;} [\IssuedSet]. \; \Tto(w) = \Tfrom(w') \Rightarrow \tup{w,w'} \in \lRF\mathbin{;}\lRMW$
\end{itemize}

\item The SC view $\gsco$ corresponds to write events that are ``before'' covered SC fences:
\begin{itemize}
\item $\gsco= \lambda \loc.\;  \max \Tto[\lW(\loc) \cap \dom{\lRF^? \mathbin{;} \lHB \mathbin{;} [\CoveredSet \cap \lF^{\sco}]}]$
\end{itemize}
\item 
Issued events have corresponding messages in memory:
\begin{itemize}
\item $\forall w \in \IssuedSet. \;
      \msg{\lLOC(w)}{\lVAL(w)}{\Tfrom(w)}{\Tto(w)}{\msghelper(\Tto, w)} \in \mem$,
      where:
\begin{itemize}
\item $\msghelper(\Tto, w) \defeq (\lambda \loc.\; \max\Tto[\lW(\loc) \cap \dom{\lURR \mathbin{;} \lRELEASE\mathbin{;}[w]}])\sqcup [\lLOC(w)@\Tto(w)] $
\item $\lURR \defeq \lRF^? \mathbin{;} (\lHB \mathbin{;} [\lF^{\sco}])^?\mathbin{;} \lSC^?\mathbin{;} \lHB^?$
\end{itemize}      
\end{itemize}
\item 
For every promise, there exists a corresponding issued uncovered event $w$:
\begin{itemize}
\item $      \forall \msg{\loc}{\val}{f}{t}{\view} \in \lprom. \;
        \exists w \in \lE_{\tid} \cap \IssuedSet \setminus \CoveredSet. \;$
\\ $\hspace*{20pt}\lLOC(w) = \loc \land \lVAL(w) = \val \land \Tfrom(w) = f \land \Tto(w) = t \land \view = \msghelper(\Tto, w) $
\end{itemize}
\item  Every issued uncovered event $w$ of thread $\tid$ has a corresponding promise in $\lprom$.
Its message view includes the singleton view $[\lLOC(w)@\Tto(w)]$
and the thread's release view $\viewRel$ (third component of $\tcom$).
If $w$ is an RMW write, and its read part is reading from an issued write $p$,
the view of the message that corresponds to $p$ is also included in $w$'s message view.
\begin{itemize}
\item $ \forall w \in \lE_{\tid} \cap \IssuedSet \setminus (\CoveredSet \cup \codom{[\IssuedSet] \mathbin{;} \lRF \mathbin{;} \lRMW}).\;$
\\ $\hspace*{30pt} \msg{\lLOC(w)}{\lVAL(w)}{\Tfrom(w)}{\Tto(w)}{[\lLOC(w) @ \Tto(w)] \sqcup \viewRel(\loc) } \in \lprom$ 
\item $ \forall w \in \lE_{\tid} \cap \IssuedSet \setminus \CoveredSet, p\in \IssuedSet.\; \tup{p, w} \in \lRF \mathbin{;} \lRMW \implies$
 \\ $\hspace*{30pt} \msg{\lLOC(w)}{\lVAL(w)}{\Tfrom(w)}{\Tto(w)}{[\lLOC(w) @ \Tto(w)] \sqcup \viewRel(\loc) \sqcup \msghelper(\Tto, p)} \in \lprom$
\end{itemize}

\item The three components $\tup{\viewCur, \viewAcq, \viewRel}$ of $\tcom$ are justified by graph paths:
\begin{itemize}
\item $\viewCur = \lambda \loc.\;  \max\Tto[\lW(\loc) \cap \dom{\lURR \mathbin{;} [\lE\tidmod{\tid}\cap \CoveredSet]}]$
\item $\viewAcq = \lambda \loc.\;  \max\Tto[\lW(\loc) \cap \dom{\lURR \mathbin{;} (\lRELEASE\mathbin{;} \lRF)^? \mathbin{;} [\lE\tidmod{\tid}\cap \CoveredSet]}]$
\item $\viewRel  = \lambda \loc,\locy.\; \max\Tto[\lW(\loc) \cap (\dom{\lURR \mathbin{;} [(\lW^{\rel}(\locy) \cup \flab{\sqsupseteq \rel}) \cap \lE\tidmod{\tid} \cap \CoveredSet]} \cup \lW(\locy) \cap \lE\tidmod{\tid} \cap\CoveredSet)]$
\end{itemize}

\item 
The thread local state $\sigma$ matches the covered events ($\sigma.\lG.\lE = C \cap \lE\tidmod{\tid}$),
and can always reach the execution graph $G$ 
($\exists \sigma'. \; \sigma \to_{\tid}^* \sigma' \land \sigma'.\lG = G\rst{\tid}$).
\end{enumerate}

We also state a version of \cref{prop:rlx-sim-step} for the new relation.
\begin{proposition}
\label{prop:sim-step}
  If $\simthread_{\tid}(G, \TC, \tup{\lts, \gsco, \mem},\Tfrom, \Tto)$ and 
  $G \vdash \TC \etravStep_{\tid} \TC'$ hold,
  then there exist $\lts'$, $\gsco'$, $\mem'$, $\Tfrom'$, $\Tto'$ such that
$\tup{\lts,\gsco,\mem} \astep{}^+ \tup{\lts',\gsco',\mem'}$
and $\simthread_{\tid}(G, \TC', \tup{\lts', \gsco', \mem'},\Tfrom', \Tto')$ hold.
\end{proposition}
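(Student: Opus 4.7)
The plan is to proceed by case analysis on which of the four traversal steps \textsc{(issue)}, \textsc{(cover)}, \textsc{(release-cover)}, \textsc{(rmw-cover)} produced $\TC'$ from $\TC$, and to construct in each case a matching sequence of one or more \Promise transitions of thread $\tid$, together with updated timestamp maps $\Tfrom'$ and $\Tto'$ that witness $\simthread_{\tid}$ for $\TC'$. In every case I will exploit clause (7) of $\simthread_{\tid}$, which reconstructs the thread views from the graph, and clause (8), which guarantees that thread $\tid$ can, via $\to_{\tid}^*$, make progress on its local state to perform the instruction associated with the newly traversed event(s). The latter is crucial: it lets me extract, for each newly covered event $e$, exactly one thread local-step that introduces $e$ and nothing else.

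First I would treat the \textsc{(issue)} case: pick $\Tto'(w)$ to be a fresh timestamp strictly between the $\Tto$-image of $w$'s immediate $\lCO$-predecessor in $\IssuedSet$ (or $0$ if none) and the $\Tfrom$-image of its immediate $\lCO$-successor in $\IssuedSet$ (or $+\infty$), and set $\Tfrom'(w)$ equal to the predecessor's $\Tto$ when $w$ is reached from an issued write via $\lRF;\lRMW$ and strictly above it otherwise. Existence of such timestamps follows because $\Q$ is dense and clause (2) of $\simthread_{\tid}$ says adjacent $\lCO$-neighbours in $\mem$ correspond to $\lRF;\lRMW$ edges. The message view is forced by clause (4), and the precondition of the \Promise promise step (that $w \notin G.\lW^\rel$, combined with the $\issuable$ requirements \textsc{fwbob-cov}, \textsc{ppo-iss}, \textsc{acq-iss}, \textsc{w-strong-iss}) is exactly what ensures the thread view and the current memory are permissive enough to accept the promise; in particular $\viewRel$, $\viewCur$, $\viewAcq$ at the relevant locations are bounded by the new timestamp because of the graph characterisations in clause (7) and the definition of $\issuable$.

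The \textsc{(cover)} case splits on the label of $e$. For a read, clause (8) yields a thread step reading from the (already issued, hence present in memory by clause (4)) $\lRF$-source of $e$; the view updates prescribed by \Promise's read rule match the $\lURR$-based definitions in clause (7). For a fence, in particular an $\sco$ fence, I use clause (3) together with $\dom{\lSC;[e]}\suq\CoveredSet$ to show that the SC-join performed by \Promise coincides with the updated $\gsco$ obtained by recomputing the max over $\dom{\lRF^?;\lHB;[\CoveredSet' \cap \lF^\sco]}$. For a write $e \in \IssuedSet\setminus\CoveredSet$, the corresponding promise already sits in $\lprom$ by clause (6), and the thread step is a fulfilment; no new timestamps are introduced. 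The \textsc{(release-cover)} case is a combined promise-and-fulfil: I choose timestamps as in \textsc{(issue)} but certify that $\lprom$ at location $\lLOC(w)$ is empty, which follows from \textsc{fwbob-cov} (no earlier release write to the same location can be uncovered) combined with clause (5). The \textsc{(rmw-cover)} case is analogous but the new write's timestamp interval must abut the read-from write's upper timestamp; this is exactly what my choice of $\Tfrom'$ yields, and it discharges the $\lRF;\lRMW$ clauses of (1) and (2).

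The main obstacle, and where the bulk of the bookkeeping lives, is preserving the seven clauses about timestamps, memory, and views simultaneously. In particular, after \textsc{(issue)}, clauses (7) and (4) (and their release/acquire variants) depend on recomputing maxima of $\Tto'$ over graph paths of the forms $\lURR$, $\lURR;\lRELEASE;\ldots$, and $\lURR;(\lRELEASE;\lRF)^?;\ldots$; I have to show the added $\Tto'(w)$ does not contaminate any of these maxima for events that remain untouched. This requires a careful argument that every new graph path ending at a currently covered event, or at $w$ itself in the release view computation, either already existed or is bounded by the chosen $\Tto'(w)$; the key lemma is that, for an issuable $w$, no $\lURR$-predecessor of a covered event of $\tid$ can acquire $w$ without violating $\issuable$ or acyclicity of $\lAR$. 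Once that is established, the remaining verifications are mechanical, and the \Promise machine step follows from the corresponding single-instruction rule, possibly composed with a promise or a split step when intervals must be separated from previously promised messages.
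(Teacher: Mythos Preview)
The paper does not include a written proof of this proposition; it is stated and immediately used, with the actual argument deferred to the Coq development. Your plan---case analysis on the four traversal rules, constructing a matching \Promise transition (promise, read, write/fulfil, fence, or RMW) and updating $\Tfrom,\Tto$ so that the eight invariant clauses are re-established---is precisely the intended structure and is correct in outline.

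A few places where your description is slightly off. First, in the \textsc{(issue)} case you speak of ``the precondition of the \Promise promise step'' and say the $\issuable$ requirements ensure ``the thread view and the current memory are permissive enough to accept the promise''. A promise step in \Promise does not inspect the thread's current/acquire/release views; it only requires that the new message's timestamp interval be disjoint from existing ones and that its view be closed in memory. The $\issuable$ conditions (\ref{req:fwbob-cov}, \ref{req:ppo-iss}, \ref{req:acq-iss}, \ref{req:w-strong-iss}) are used not to license the promise step itself but to show that the message view $\msghelper(\Tto',w)$ computed by clause (4) is closed in $\mem'$ and that the invariants about $\viewRel$ in clause (6) hold. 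Second, your remark at the end about ``possibly composed with \ldots a split step'' does not belong here: splitting is needed only in the certification argument (\cref{sec:certification}) to cope with arbitrary future memories; in the main simulation you control all timestamp choices and never need to split. Third, in \textsc{(rmw-cover)} you should distinguish the two sub-cases explicitly: when $w\in\IssuedSet$ the message is already in memory (a fulfilment), whereas when $w\in G.\lW^\rel$ you must both issue and cover it in one \Promise step, which is where the abutting-interval choice actually bites.
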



\subsection{Certification}
\label{sec:certification}

We move on to the construction of certification graphs.
First, the set of events of  $\Gcert$ is extended:
\begin{align*}
\Gcert.\lE  \defeq ~& \CoveredSet \cup \IssuedSet \cup \dom{G.\lPO\mathbin{;}[\IssuedSet\cap G.\lE\tidmod{\tid}]} ~ \cup \\
 & (\dom{G.\lRMW\mathbin{;} [\IssuedSet \cap G.\lE\tidmod{\neq\tid}]} \setminus \codom{[G.\lE \setminus \codom{G.\lRMW}] \mathbin{;} G.\lRFI}) 
\end{align*}
It additionally contains read parts of issued RMWs in other threads
(excluding those reading locally from a non-RMW write).
They are needed to preserve release sequences to issued writes in $\Gcert$.


The $\lRMW,\lSC$ and dependencies components of $\Gcert$ are the same as in (restricted) $G$
($\Gcert.\lX = [\Gcert.\lE]\mathbin{;} G.\lX\mathbin{;} [\Gcert.\lE]$ for
$\lX \in \set{\lRMW,\lADDR,\lDATA,\lCTRL,\lRMWDEP,\lSC}$) as in \cref{sec:rlx-certification}.
However, $G.\lCO$ edges have to be altered due to the future memory quantification in \Promise certifications.

\begin{example}
  \label{ex:cert-co}
Consider the annotated execution $G$ and its traversal configuration 
($\CoveredSet = \emptyset$ and $\IssuedSet = \{\ee{11},\ee{22}\}$)
shown in the inlined figure.
Suppose that 
$\simthread_{\tid_2}(G, \tup{\CoveredSet, \IssuedSet}, \tup{\tup{\sigma,\tcom,\lprom}, \gsco, \mem}, \Tfrom, \Tto)$ holds for
some $\sigma$, $\tcom$, $\lprom$, $\mem$, $\gsco$, $\Tfrom$ and $\Tto$.
Hence, there are messages of the form $\msg{x}{2}{\Tfrom(\ee{11})}{\Tto(\ee{11})}{\_}$ 
and $\msg{x}{3}{\Tfrom(\ee{22})}{\Tto(\ee{22})}{\_}\}$ in $\mem$ and
$\Tfrom(\ee{11}) < \Tto(\ee{11}) \le \Tfrom(\ee{22}) < \Tto(\ee{22})$.

{\makeatletter
\let\par\@@par
\par\parshape0
\everypar{}\begin{wrapfigure}{r}{0.4\textwidth}\centering
\begin{tikzpicture}[yscale=0.9,xscale=1]
  \node (i11) at (0, -0.6) {$\ee{11}: \ewlab{\rlx}{\loc}{2}{}$};

  \node (i21) at (3,    0) {$\ee{21}: \ewlab{\rlx}{\loc}{1}{}$};
  \node (i22) at (3, -1.2) {$\ee{22}: \ewlab{\rlx}{\loc}{3}{}$};

  \begin{scope}[on background layer]
     \issuedBox{i11};
     \issuedBox{i22};
  \end{scope}

  \draw[mo] (i21) edge node[above] {\smaller\smaller$\lCOE$} (i11);
  \draw[mo] (i21) edge node[right] {\smaller\smaller$\lCOI$} (i22);
  \draw[mo] (i11) edge node[below] {\smaller\smaller$\lCOE$} (i22);
\end{tikzpicture}
\label{fig:co-helper}
 \end{wrapfigure}

During certification, we have to execute the instruction related to $\ee{21}$
and add a corresponding message to $\mem$.
Since certification is required for every future memory $\omem\supseteq\mem$,
it might be the case that here is no free timestamp $t'$ in $\omem$ such that $t' \le \Tfrom(\ee{11})$.
Thus, our chosen timestamps cannot agree with $G.\lCO$.
However, if we place $\ee{21}$ as the immediate predecessor of $\ee{22}$ in $\Gcert.\lCO$,
we may use the splitting feature of \Promise: the promised message 
$\msg{x}{3}{\Tfrom(\ee{22})}{\Tto(\ee{22})}{\_}\}$ can be split into two messages
$\msg{x}{1}{\Tfrom(\ee{22})}{t}{\_}\}$ and $\msg{x}{3}{t}{\Tto(\ee{22})}{\_}\}$
for any $t$ such that $\Tfrom(\ee{22}) < t < \Tto(\ee{22})$.
To do so, we need the non-issued writes of the certified thread
to be immediate predecessors of the issued ones in $\Gcert.\lCO$.
By performing such split, we do not ``allocate'' new timestamp intervals,
which allows us to handle arbitrary future memories.
Note that if we had writes to other locations to perform during the certification,
with no possible promises to split, we would need them to be placed last in 
$\Gcert.\lCO$, so we can relate them to messages whose timestamps are larger than all timestamps in $\omem$.
\qed
  \par}
\end{example}


Following \cref{ex:cert-co}, 
we define $\Gcert.\lCO$ to consist of all pairs $\tup{w,w'}$ such that 
$w,w'\in \Gcert.\lE \cap G.\lW$, 
$G.\lLOC(w)=G.\lLOC(w')$, and either 
$\tup{w,w'} \in ([\IssuedSet] \mathbin{;} G.\lCO \mathbin{;} [\IssuedSet] \cup 
[\IssuedSet] \mathbin{;} G.\lCO \mathbin{;} [\Gcert.\lE\tidmod{\tid}]  \cup [\Gcert.\lE\tidmod{\tid}] \mathbin{;} G.\lCO \mathbin{;}[\Gcert.\lE\tidmod{\tid}])^+$,
or there is no such path, $w\in \IssuedSet$, and $w'\in \Gcert.\lE\tidmod{\tid}\setminus \IssuedSet$.
This construction essentially ``pushes'' the non-issued writes of the certified thread to be as late as possible in 
$\Gcert.\lCO$.
 
%
%

\smallskip

The definition of $\Gcert.\lRF$ is also adjusted to be in accordance with $\Gcert.\lCO$:
\[\inarr{
\Gcert.\lRF \defeq G.\lRF\mathbin{;} [\D] \cup \bigcup_{\loc \in \Loc} ([G.\lW(\loc)] \mathbin{;}G.\lBVF \mathbin{;} [G.\lR(\loc) \cap \Gcert.\lE \setminus  \D] \setminus \Gcert.\lCO \mathbin{;} G.\lBVF)
\\
\text{~~where}~~
\D=\Gcert.\lE \cap (\CoveredSet \cup \IssuedSet \cup G.\lE\tidmod{\neq\tid} \cup
 \dom{G.\lRFI^? \mathbin{;} G.\lPPO \mathbin{;} [\IssuedSet]} \cup
 \codom{G.\lRFE \mathbin{;} [G.\lR^{\acq}]})
 \text{~~and} \\
\qquad \quad
G.\lBVF = (G.\lRF \mathbin{;} [\D])^? \mathbin{;}
         (G.\lHB \mathbin{;} [G.\lF^{\sco}])^?\mathbin{;}
         G.\lSC^?\mathbin{;} G.\lHB
}\]
The set of determined events is extended to include acquire read events which read externally, \ie
the ones potentially engaged in synchronization.

For the certification graph $\Gcert$ presented here, we prove a version of \cref{prop:rlx-cert-cons},
\ie show that the graph is \IMM-consistent and $\TCcert$ is its traversal configuration,
and adapt \cref{prop:rlx-cert-graph} as follows.
\begin{proposition}
\label{prop:cert-graph}
Suppose that $\simthread_{\tid}(G, \TC, \tup{\lts, \gsco, \mem}, \Tfrom, \Tto)$ holds.
Then, for every $\omem \supseteq \mem$ and $\ogsco \ge \gsco$,
$\simthreadcert_{\tid}(\Gcert,\TCcert, \tup{\lts, \ogsco, \omem}, \Tfrom, \Tto)$ holds.
\end{proposition}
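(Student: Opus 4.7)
The plan is to verify, clause by clause, that every condition of $\simthreadcert_{\tid}$ is satisfied by $\Gcert$, $\TCcert = \tup{\CoveredSetcert,\IssuedSetcert}$, $\lts$, $\ogsco$, $\omem$, $\Tfrom$, and $\Tto$. Several clauses are essentially immediate from structural observations. First, $\IssuedSetcert = \IssuedSet$ and the timestamp maps $\Tfrom, \Tto$ are unchanged, so the timestamp-consistency clause (1) reduces to showing that $\Gcert.\lCO$ agrees with $G.\lCO$ on $\IssuedSet \times \IssuedSet$, which is built into the definition of $\Gcert.\lCO$ in \cref{sec:certification}. Second, $\lts$ is unchanged and $\CoveredSetcert$ extends $\CoveredSet$ only by events of other threads, so $\lE_{\tid} \cap \IssuedSetcert \setminus \CoveredSetcert = \lE_{\tid} \cap \IssuedSet \setminus \CoveredSet$, and the promise-related clauses (5) and (6) transfer directly from the hypothesis. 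Third, $\simthreadcert$ is the variant of $\simthread$ in which the memory-correspondence clause (2) and the SC-view clause (3) are relaxed to one-sided containments and inequalities, respectively, which accommodates $\omem \supseteq \mem$ and $\ogsco \ge \gsco$ without extra work.

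The substantive work is concentrated in clause (7), where I must re-prove that $\viewCur$, $\viewAcq$, and $\viewRel$ coincide with the max-$\Tto$ expressions once those expressions are evaluated over $\Gcert$ rather than $G$. I would reduce this to the following key lemma: for any $e \in \lE_{\tid} \cap \CoveredSetcert$, the sets of issued writes reached along $\Gcert.\lURR$, $\Gcert.\lURR \mathbin{;}(\Gcert.\lRELEASE \mathbin{;} \Gcert.\lRF)^?$, and the corresponding release-view path ending at $e$ coincide (modulo restriction to $\IssuedSet$) with the analogous sets computed in $G$. The design of $\D$ in the definition of $\Gcert.\lRF$ is tailored for this purpose: the clause $G.\lRF \mathbin{;} [\D]$ ensures that every incoming $\lRF$-edge to an event in $\CoveredSet$, in $\codom{G.\lRFE\mathbin{;}[G.\lR^{\acq}]}$, or targeted by a $\lPPO$-chain from an issued write is preserved in $\Gcert$; conversely the filter $\setminus \Gcert.\lCO \mathbin{;} G.\lBVF$ on redirected reads prevents any new path from appearing that would reach a larger $\Tto$ value at an issued write. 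Establishing these inclusions amounts to a careful case analysis on how each path decomposes across the four disjuncts defining $\D$.

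Clause (8), concerning the thread-local state $\sigma$, is handled via the receptiveness property of the operational semantics from \cref{sec:rlx-certification}. Starting from a derivation $\sigma \to^*_\tid \sigma'$ with $\sigma'.\lG = G\rst{\tid}$, one replays the same program using the values prescribed by $\Gcert$'s redirected $\lRF$; since $\Gcert$ differs from $G$ only on values of reads and their data-dependent successors, and the redirected reads lie outside $\dom{G.\lCTRL}$ (so control flow is unaffected), a matching derivation to $\Gcert\rst{\tid}$ exists. This in turn shows that $\Gcert\rst{\tid}$ is a legitimate execution graph of $\tid$'s program.

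The main obstacle will be clause (7): the proof is delicate because the acquire view travels through $(\lRELEASE \mathbin{;} \lRF)$ and the release view through the location-parameterised $\lURR$-chain, and each of these interacts non-trivially with the redirection performed on non-determined reads. The inclusion $\codom{G.\lRFE \mathbin{;}[G.\lR^{\acq}]} \subseteq \D$ is exactly what preserves synchronization paths of the acquire view; I expect most of the technical effort to go into showing that no synchronization path is silently added or removed, and that the altered $\Gcert.\lCO$ continues to be compatible with the $\lURR$-computations, relying on the coherence of $G$ to rule out spurious reorderings.
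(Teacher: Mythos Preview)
Your overall clause-by-clause strategy matches what the paper (implicitly) does, and your identification of the view clause (7) as the technical heart is right. However, you have misread the definition of $\simthreadcert_\tid$. The paper does \emph{not} relax clauses (2) and (3) to ``one-sided containments and inequalities.'' Clause (2) is replaced entirely by two structural conditions on the certification graph: $\codom{\Gcert.\lRMW} \suq \IssuedSet$ and $[\Gcert.\lE \setminus \IssuedSet] \mathbin{;} \imm{\Gcert.\lCO} \suq \Gcert.\lPO$. Clause (3) is replaced by $\Gcert.\lF^{\sco} \suq \CoveredSetcert$. None of these mentions $\omem$ or $\ogsco$; they are properties of $\Gcert$ and $\TCcert$ alone. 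Proving them is not difficult, but it is work you have not identified: the first follows from \ref{req:w-strong-iss} in the definition of issuable (together with the standing assumption that all RMWs are strong), the second is exactly what the construction of $\Gcert.\lCO$ in \cref{sec:certification} is engineered to guarantee, and the third follows from \ref{req:fwbob-cov} (every fence $\lPO$-before an issued write of thread $\tid$ is already covered) plus the inclusion of all other-thread events in $\CoveredSetcert$.

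A second, smaller gap: clauses (4), (5), and (6) do not transfer ``directly.'' Each involves $\msghelper(\Tto, w)$, which is defined via $\lURR$ and $\lRELEASE$ computed over the ambient graph; clause (6) additionally quantifies over $\tup{p,w}\in\lRF\mathbin{;}\lRMW$. When you move from $G$ to $\Gcert$ you must show these quantities are preserved for every issued $w$---which is exactly the path-preservation lemma you plan to prove for clause (7). So the work you have localised to (7) is really shared across (4)--(7); your argument there (stability of synchronization paths through the determined set $\D$, and the observation that $\Gcert.\lCO$ agrees with $G.\lCO$ on $\IssuedSet\times\IssuedSet$) is what carries all four.
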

Here, $\simthreadcert_\tid$ is a modified simulation relation, which differs to $\simthread_\tid$
in the following parts:
\begin{enumerate}
\item[(2)] 
Since certification begins from an arbitrary future memory, 
we cannot require that \emph{all} messages in memory have counterparts in $\IssuedSet$.
Here, it suffices to assert that 
all RMW writes are issued ($\codom{\Gcert.\lRMW} \suq \IssuedSet$), 
and for every non-issued write either it is last in
$\Gcert.\lCO$ or its immediate successor is in the same thread
($[\Gcert.\lE \setminus \IssuedSet] \mathbin{;} \imm{\Gcert.\lCO} \suq \Gcert.\lPO$).
 The latter allows us to split existing messages to obtain timestamp intervals for non-issued writes
 during certification (see \cref{ex:cert-co}).
\item[(3)]
Since certification begins from arbitrary future SC view,
$\gsco$ may not correspond to $\Gcert$.
Nevertheless, SC fences cannot be executed in the certification run, and we can simply require that all SC fences are covered
($ \Gcert.\lF^{\sco} \suq \CoveredSetcert$).
\end{enumerate}


We also show that a version of \cref{prop:sim-step} holds for $\simthreadcert$.
It allows us to prove a strengthened version \cref{prop:sim-step}, which also concludes that 
new \Promise thread state is certifiable, in a similar way we prove \cref{prop:rlx-ext-sim-step}.

\begin{proposition}
\label{prop:ext-sim-step}
  If $\simthread_{\tid}(G, \TC, \tup{\lts, \gsco, \mem}, \Tfrom, \Tto)$ and 
  $G \vdash \TC \etravStep_{\tid} \TC'$ hold,
  then there exist $\lts', \gsco',\mem',\Tfrom',\Tto'$ such that
$\tup{\lts,\gsco,\mem} \astep{}^+ \tup{\lts',\gsco',\mem'}$ and $\simthread_{\tid}(G, \TC', \tup{\lts', \gsco', \mem'},\Tfrom', \Tto')$ hold,
and for every $\ogsco \geq \gsco',\omem\supseteq\mem'$,
there exist $\lts'',\ogsco',\omem'$ such that $\tup{\lts',\ogsco,\omem} \astep{}^* \tup{\lts'',\ogsco',\omem'}$ and $\lts''.\lprmem = \emptyset$.
\end{proposition}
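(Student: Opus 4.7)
The plan is to mirror the argument of \cref{prop:rlx-ext-sim-step}, but reuse the more elaborate machinery developed in \cref{sec:certification} to handle the stronger certification obligation over arbitrary future memories and SC views. First I would apply \cref{prop:sim-step} directly to the hypotheses to obtain $\lts', \gsco', \mem', \Tfrom', \Tto'$ such that $\tup{\lts,\gsco,\mem} \astep{}^+ \tup{\lts',\gsco',\mem'}$ and $\simthread_{\tid}(G, \TC', \tup{\lts', \gsco', \mem'}, \Tfrom', \Tto')$ hold. This discharges the first conjunct of the conclusion, and is essentially identical to the corresponding step in the relaxed proof.

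The second conjunct---fulfillability from every future $\tup{\ogsco,\omem}$---is where the full generality of the promising semantics comes in. Fix arbitrary $\ogsco \geq \gsco'$ and $\omem \supseteq \mem'$. I would then construct the certification graph $\Gcert$ and traversal configuration $\TCcert$ for thread $\tid$ from $G$ and $\TC'$ as defined in \cref{sec:certification}. Applying the generalization of \cref{prop:rlx-cert-cons} to this setting yields that $\Gcert$ is \IMM-consistent and $\TCcert$ is a traversal configuration of $\Gcert$; hence by the generalization of \cref{prop:rlx-trav-full} there is a traversal $\Gcert \vdash \TCcert \etravStep_{\tid}^{*} \tup{\Gcert.\lE, \Gcert.\lW}$ consisting exclusively of $\tid$-steps (all other threads' events were placed in $\CoveredSetcert$ by construction).

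Next, I would invoke \cref{prop:cert-graph}, instantiated at $\omem$ and $\ogsco$, to obtain $\simthreadcert_{\tid}(\Gcert, \TCcert, \tup{\lts', \ogsco, \omem}, \Tfrom', \Tto')$. From here I proceed by induction on the length of the traversal $\TCcert \etravStep_{\tid}^{*} \tup{\Gcert.\lE, \Gcert.\lW}$, each time applying the $\simthreadcert$-analogue of \cref{prop:sim-step} alluded to at the end of \cref{sec:certification}, which the authors prove for the modified relation $\simthreadcert$. Each such application delivers a corresponding \Promise thread step, so altogether we obtain $\tup{\lts', \ogsco, \omem} \astep{}^{*} \tup{\lts'', \ogsco', \omem'}$ with $\simthreadcert_{\tid}(\Gcert, \tup{\Gcert.\lE, \Gcert.\lW}, \tup{\lts'', \ogsco', \omem'}, \Tfrom'', \Tto'')$ holding. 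From clauses (5)--(6) of $\simthreadcert_\tid$, the fact that $\Gcert.\lE\tidmod{\tid} \cap \IssuedSet \setminus \CoveredSet = \emptyset$ at the end of traversal forces $\lts''.\lprmem = \emptyset$, as required.

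The main obstacle, and the reason this is substantively more than a cosmetic lift of the relaxed-fragment proof, is ensuring that the two relaxations built into $\simthreadcert$---the weakened clause (2) allowing arbitrary extra non-issued messages in $\omem$ (compensated for by the reshaped $\Gcert.\lCO$ and the use of \Promise's split step, as illustrated in \cref{ex:cert-co}), and the weakened clause (3) tolerating arbitrary $\ogsco \geq \gsco'$ (compensated by requiring all SC fences to already be covered)---are actually preserved and exploited by the inductive sim-step argument for \emph{every} future $\tup{\ogsco,\omem}$. Concretely, the challenge is that the timestamps $\Tfrom'',\Tto''$ used during certification must thread between the pre-existing intervals in $\omem$, which is only possible because the traversal configuration $\TCcert$ and the $\Gcert.\lCO$ order have been set up to align non-issued $\tid$-writes with opportunities for splitting already-promised messages and with the tail of $\omem$; verifying this alignment carefully for each of the four traversal step kinds of \cref{fig:traversal-steps} is the delicate part of the argument.
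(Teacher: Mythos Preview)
Your proposal is correct and follows essentially the same approach as the paper: the paper explicitly says that \cref{prop:ext-sim-step} is proved ``in a similar way we prove \cref{prop:rlx-ext-sim-step}'' using \cref{prop:sim-step}, \cref{prop:cert-graph}, the full-model analogues of \cref{prop:rlx-cert-cons} and \cref{prop:rlx-trav-full}, and the $\simthreadcert$-version of \cref{prop:sim-step}. Your identification of the two weakened clauses of $\simthreadcert$ and the role of message splitting as the substantive additional difficulty is also on point.
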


\section{Related work}
\label{sec:related}

Together with the introduction of the promising semantics, \citet{Kang-al:POPL17} provided a declarative presentation
of the promise-free fragment of the promising model. They established the adequacy of this presentation
using a simulation relation, which resembles the simulation relation that we use in \cref{sec:prom-compilation}.
Nevertheless, since their declarative model captures only the promise-free fragment of \Promise, 
the simulation argument is much simpler,
and no certification condition is required. In particular, their analogue to our traversal strategy would simply
cover the events of the execution graph following $\lPO\cup\lRF$.

To establish the correctness of compilation of the promising semantics to POWER, 
\citet{Kang-al:POPL17} followed the approach of \citet{trns}.  
This approach reduces compilation correctness to POWER to $(i)$ the correctness of compilation
to the POWER model strengthened with $\lPO\cup\lRF$ acyclicity; and $(ii)$ the soundness of local reorderings of memory accesses.
To establish $(i)$, \citet{Kang-al:POPL17} wrongly argued that the strengthened POWER-consistency of mapped 
promise-free execution graphs imply the promise-free consistency of the source execution graphs.
This is not the case due to SC fences, which have relatively strong semantics in the promise-free declarative model
(see \citeapp{sec:POPL17mistake}{Appendix D} for a counter example).
Nevertheless, our proof shows that the compilation claim of \citet{Kang-al:POPL17} is correct.
We note also that, due to the limitations of this approach, \citet{Kang-al:POPL17} only
claimed the correctness of a less efficient compilation scheme to POWER that requires \texttt{lwsync} barriers
after acquire loads rather than (cheaper) control dependent \texttt{isync} barriers.
Finally, this approach cannot work for ARM as it relies on the relative strength of POWER's preserved program order.

\citet{Podkopaev-al:ECOOP17} proved (by paper-and-pencil) the correctness of compilation from the promising semantics to ARMv8. 
Their result handled only a restricted subset of the concurrency features of the promising semantics,
leaving release/acquire accesses, RMWs, and SC fences out of scope.
In addition, as a model of ARMv8, they used an operational model, ARMv8-POP~\cite{arm8-model},  
that was later abandoned by ARM in favor of a stronger different declarative model~\cite{Pulte-al:POPL18}.
Our proof in this paper is mechanized, supports all features of the promising semantics,
and uses the recent declarative model of ARMv8.

\citet{Wickerson-al:POPL17} developed a tool, based on the Alloy solver, that can be used to test the correctness
of compiler mappings. Given the source and target models and the intended compiler mapping, their tool
searches for minimal litmus tests that witness a bug in the mapping.
While their work concerns automatic bug detection, the current work is focused around formal verification
of the intended mappings. In addition, their tool is limited to declarative specifications, and cannot be used
to test the correctness of the compilation of the promising semantics.

%
%
%

Finally, we note that \IMM is weaker than the ARMv8 memory model of \citet{Pulte-al:POPL18}.
In particular, \IMM is not multi-copy atomic (see \cref{ex:iriw}); its release writes provide weaker guarantees
(allowing in particular the so-called 2+2W weak behavior~\cite{sra,tutorial_arm_power}); 
it does not preserve address dependencies between reads 
(allowing in particular the ``big detour'' weak behavior~\cite{Pulte-al:POPL18}); 
and it allows ``write subsumption''~\cite{arm8-model,Pulte-al:POPL18}.
Formally, this is a result of not including $\lFR$ and $\lCO$ in a global acyclicity condition,
but rather having them in a C/C++11-like coherence condition.
While \citet{Pulte-al:POPL18} consider these strengthenings of the ARMv8 model as beneficial
for its simplicity, we do not see \IMM as being much more complicated than the ARMv8 declarative model.
(In particular, \IMM's derived relations are not mutually recursive.)
Whether or not these weaknesses of \IMM in comparison to ARMv8 allow more optimizations and better performance is left for future work.

\section{Concluding remarks}
\label{sec:conclusion}

We introduced a novel intermediate model, called \IMM, 
as a way to bridge the gap between language-level and hardware models and modularize compilation correctness proofs.
On the hardware side, we provided (machine-verified) mappings from \IMM to the main multi-core architectures,
establishing \IMM as a common denominator of existing hardware weak memory models.
On the programming language side, we proved the correctness of compilation from the promising semantics,
as well as from a fragment of (R)C11, to \IMM.

In the future, we plan to extend our proof for verifying the mappings from full (R)C11 to \IMM
as well as to handle infinite executions with a more expressive notion of a program outcome.
We believe that \IMM can be also used to verify 
the implementability of other language-level models mentioned in \cref{sec:intro}.
This might require some modifications of \IMM (in the case it is too weak for certain models)
but these modifications should be easier to implement and check over the existing mechanized proofs.
Similarly, new (and revised) hardware models could be related to (again, a possibly modified version of) \IMM.
Specifically, it would be nice to extend \IMM to support mixed-size accesses~\cite{mixed}
and hardware transactional primitives~\cite{Chong_transactions, Dongol_transactions}.
On a larger scope, we believe that \IMM may provide a basis for extending CompCert~\cite{compcert,compcerttso}
to support modern multi-core architectures beyond x86-TSO.

\begin{acks}
We thank Orestis Melkonian for his help with Coq proof concerning the POWER model
in the context of another project,
and the POPL'19 reviewers for their helpful feedback.
The first author was supported by RFBR (grant number 18-01-00380).
The second author was supported by the Israel Science Foundation
(grant number 5166651), and by Len Blavatnik and the Blavatnik Family foundation.
\end{acks}

\bibliography{main}

\newpage
\appendix

\section{Examples: from programs to execution graphs}
\label{sec:prog2exec_examples}
We provide several examples of sequential programs and their execution graphs,
constructed according to the semantics in \cref{fig:prog-to-exec}.

\begin{example}
The program below has conditional branching.
\[
\inarr{
 \phantom{L\colon} \readInst{\rlx}{a}{x} \\
 \phantom{L\colon} \ifGotoInst{a = 0}{L} \\
 \phantom{L\colon} \writeInst{\rlx}{y}{1} \\
 L\colon \writeInst{\rlx}{z}{1} \\
 \phantom{L\colon} \writeInst{\rlx}{w}{1} \\
}
\quad\vrule\quad
\inarr{\begin{tikzpicture}[yscale=0.8,xscale=1]
 \node (i11) at (0,  0) {$\erlab{\rlx}{x}{0}{\isnotex}$};
 \node (i13) at (0, -2) {$\ewlab{\rlx}{z}{1}{}$};
 \node (i14) at (0, -3) {$\ewlab{\rlx}{w}{1}{}$};

 \draw[deps] (i11) edge node[right] {\small$\lCTRL$} (i13);
 \draw[deps,out=-20,in=0] (i11) edge node[right] {\small$\lCTRL$} (i14);
 \draw[po] (i13) edge (i14);
\end{tikzpicture}}
\quad\vrule\quad
\inarr{\begin{tikzpicture}[yscale=0.8,xscale=1]
 \node (i11) at (0,  0) {$\erlab{\rlx}{x}{1}{\isnotex}$};
 \node (i12) at (0, -1) {$\ewlab{\rlx}{y}{1}{}$};
 \node (i13) at (0, -2) {$\ewlab{\rlx}{z}{1}{}$};
 \node (i14) at (0, -3) {$\ewlab{\rlx}{w}{1}{}$};

 \draw[deps] (i11) edge node[right] {\small$\lCTRL$} (i12);
 \draw[deps,out=-20,in=0] (i11) edge node[right] {} (i13);
 \draw[deps,out=-20,in=0] (i11) edge node[right] {\small$\lCTRL$} (i14);
 \draw[po] (i12) edge (i13);
 \draw[po] (i13) edge (i14);
\end{tikzpicture}}
\]
Note that $\lCTRL$ is downward closed (the set $S$ is non-decreasing during the steps of the semantics).
\qed
\end{example}

\begin{example}
The following program has an atomic fetch-and-add instruction,
whose location and added value depend on previous read instructions
(recall that $\Val=\Loc=\N$ and $x,y,z,w$ represent some constants):
\[
\inarr{
  \readInst{\rlx}{a}{x} \comment{z} \\
  \readInst{\rlx}{b}{y} \comment{1} \\
  \incInst{\rlx}{\rlx}{c}{a}{b}{\normal} \comment{2} \\
  \writeInst{\rlx}{w}{1} \\
}
\quad\vrule\quad
\inarr{\begin{tikzpicture}[yscale=0.8,xscale=1]
  \node (i11) at (0,  0) {$\erlab{\rlx}{x}{z}{\isnotex}$};
  \node (i12) at (0, -1) {$\erlab{\rlx}{y}{1}{\isnotex}$};
  \node (i13) at (0, -2) {$\erlab{\rlx}{z}{2}{\isex}$};
  \node (i14) at (0, -3) {$\ewlab{\rlx}{z}{3}{\normal}$};
  \node (i15) at (0, -4) {$\ewlab{\rlx}{w}{1}{}$};

  \draw[po] (i11) edge (i12);
  \draw[po] (i12) edge (i13);
  \draw[deps,out=-160,in=180] (i12) edge node[left] {\small$\lDATA$} (i14);
  \draw[deps,out=-160,in=180] (i13) edge (i14);
  \draw[deps,out=-20,in=0] (i11) edge (i13);
  \draw[deps,out=-20,in=0] (i11) edge node[right] {\small$\lADDR$} (i14);
  \draw[rmw] (i13) edge node[right] {\small$\lRMW$} (i14);
  \draw[po] (i14) edge (i15);
\end{tikzpicture}}
\]
\qed
\end{example}

\section{\POWER-consistency}
\label{sec:power_consistent}

We define \POWER-consistency following~\cite{herding-cats}.
This section is described in the context of a given \POWER execution graph $G_p$,
and the `$G_p.$' prefixes are omitted.

The definition requires the following derived relations
(see~\cite{herding-cats} for further explanations and details):

\begin{align*}
\lSYNC  & \defeq [\lR \cup \lW];\lPO;[\lF^\sync];\lPO;[\lR\cup \lW]
\tag{\emph{sync order}} \\
\lLWSYNC & \defeq [\lR \cup \lW];\lPO;[\lF^\lwsync];\lPO;[\lR \cup \lW] \setminus (\lW\times \lR)
\tag{\emph{lwsync order}} \\
\lFENCE & \defeq \lSYNC \cup \lLWSYNC
\tag{\emph{fence order}} \\
\lHBP & \defeq \lPPOP \cup \lFENCE \cup \lRFE
\tag{\emph{POWER's happens-before}} \\
\lPROP_1 & \defeq [\lW];\lRFE^?;\lFENCE;\lHBP^*;[\lW] \\
\lPROP_2 & \defeq (\lCOE \cup \lFRE)^?;\lRFE^?;(\lFENCE;\lHBP^*)^? ; \lSYNC ; \lHBP^* \\
\lPROP & \defeq \lPROP_1 \cup \lPROP_2
\tag{\emph{propagation relation}}
\end{align*}

In the definition on $\lHBP$, \POWER employs a ``preserved program order'' denoted $\lPPOP$.
The definition of this relation is quite intricate 
and requires several more additional derived relations (its correctness was extensively tested~\cite{herding-cats}):

\begin{align*}
\lCTRLISYNC & \defeq [\lR];\lCTRL;[\lF^\isync];\lPO \tag{\emph{ctrl-isync order}} \\
\lRDW  & \defeq (\lFRE;\lRFE) \cap \lPO  \tag{\emph{read different writes}}\\
\lPPOP & \defeq [\lR];\lii;[\lR]\cup [\lR];\lic;[\lW] 
\tag{\emph{POWER's preserved program order}} \\
\end{align*}
where, $\lii,\lic,\lci,\lcc$ are inductively defined as follows:
$$\begin{array}{c@{\hspace{1em}}c@{\hspace{1em}}c@{\hspace{1em}}c@{\hspace{1em}}c@{\hspace{1em}}c@{\hspace{1em}}c@{\hspace{1em}}c}
\inferrule*{\lADDR}{\lii}&
\inferrule*{\lDATA}{\lii}&
\inferrule*{\lRDW}{\lii}&
\inferrule*{\lRFI}{\lii}& \inferrule*{\lci}{\lii}    &  & \inferrule*{\lic;\lci}{\lii}   & \inferrule*{\lii;\lii}{\lii}      \\
&&& & \inferrule*{\lii}{\lic}      & \inferrule*{\lcc}{\lic}       & \inferrule*{\lic;\lcc}{\lic} & \inferrule*{\lii;\lic}{\lic} \\
&& \inferrule*{\lCTRLISYNC}{\lci} &\inferrule*{\lDETOUR}{\lci}  &&  & \inferrule*{\lci;\lii}{\lci} & \inferrule*{\lcc;\lci}{\lci}  \\
\inferrule*{\lDATA}{\lcc}& \inferrule*{\lCTRL}{\lcc} & \inferrule*{\lADDR;\lPO^?}{\lcc}& 
\inferrule*{\lPO\rst{\lLOC}}{\lcc} & \inferrule*{\lci}{\lcc} &   & \inferrule*{\lci;\lic}{\lcc}  & \inferrule*{\lcc;\lcc}{\lcc} 
\end{array} $$

\begin{definition}
\label{def:power}
A \POWER execution graph $G_p$ is \emph{\POWER-consistent} if the following hold:
\begin{enumerate}
\item $\codom{\lRF} = \lR$. \labelAxiom{$\lRF$-completeness}{ax:P-comp}
\item For every location $\loc\in\Loc$, $\lCO$ totally orders $\ewlab{}{\loc}{}{}$. \labelAxiom{$\lCO$-totality}{ax:P-total}
\item $\lPO\rst{\lLOC} \cup \lRF \cup \lFR \cup \lCO$ is acyclic.
 \labelAxiom{sc-per-loc}{ax:P-sc_per_loc}
\item $\lFRE;\lPROP;\lHBP^*$ is irreflexive.
 \labelAxiom{observation}{ax:P-observation}
\item $\lCO \cup \lPROP$ is acyclic.
\labelAxiom{propagation}{ax:P-propagation}
\item $\lRMW \cap (\lFRE; \lCOE)=\emptyset$.   \labelAxiom{atomicity}{ax:P-atomicity}
\item $\lHBP$ is acyclic. \labelAxiom{power-no-thin-air}{ax:P-thinair}
\end{enumerate}
\end{definition}

\begin{remark}
The model in~\cite{herding-cats} contains an additional constraint: 
$\lCO \cup [\lAT];\lPO;[\lAT]$ should be acyclic
(where $\lAT=\dom{\lRMW}\cup\codom{\lRMW}$). 
Since none of our proofs requires this property, we excluded it from \cref{def:power}.
\end{remark}

\section{\ARM-consistency}
\label{sec:arm_consistent}

We define \ARM-consistency following~\cite{ARMv82model}.
This section is described in the context of a given \ARM execution graph $G_a$,
and the `$G_a.$' prefixes are omitted.

The definition requires the following derived relations
(see~\cite{Pulte-al:POPL18} for further explanations and details):

\begin{align*}
\lOBS &\defeq  \lRFE \cup \lFRE \cup \lCOE  \tag{\emph{observed-by}} \\
\lDOB &\defeq \inarr{(\lADDR \cup \lDATA); \lRFI^? \cup 
	(\lCTRL \cup \lDATA); [\lW]; \lCOI^? \cup 
	\lADDR; \lPO; [\lW] } \tag{\emph{dependency-ordered-before}} \\
\lAOB &\defeq \lRMW \cup [\lW^\isex]; \lRFI; [\lR^{\lQ}] \tag{\emph{atomic-ordered-before}} \\
\lBOB &\defeq \inarr{\lPO; [\lDMBSY]; \lPO  \cup 
	{}[\lR]; \lPO; [\lDMBLD]; \lPO  \cup 
	{}[\lR^{\lQ}]; \lPO \cup 
	\lPO; [\lW^\lL]; \lCOI^?  \tag{\emph{barrier-ordered-before}} }
\end{align*}

\begin{definition}
\label{def:arm}
An \ARM execution graph $G_a$ is called \ARM-consistent if the following hold:
\begin{itemize}
\item $\codom{\lRF} = \lR$. \labelAxiom{$\lRF$-completeness}{ax:A-comp}
\item For every location $\loc\in\Loc$, $\lCO$ totally orders $\ewlab{}{\loc}{}{}$. \labelAxiom{$\lCO$-totality}{ax:A-total}
\item $\lPO\rst{\lLOC} \cup \lRF \cup \lFR \cup \lCO$ is acyclic.
 \labelAxiom{sc-per-loc}{ax:A-sc_per_loc_arm}
\item $\lOBS \cup \lDOB \cup \lAOB \cup \lBOB$ is acyclic. \labelAxiom{external}{ax:external}
\item $\lRMW \cap (\lFRE; \lCOE)=\emptyset$.   \labelAxiom{atomicity}{ax:A-atomicity}
\end{itemize}
\end{definition}

\section{Mistake in Kang et al. (2017)'s compilation to POWER correctness proof}
\label{sec:POPL17mistake}

The following execution graph is not consistent in the promise-free declarative model of \cite{Kang-al:POPL17}.
Nevertheless, its mapping to \POWER (obtained by simply replacing $\flab{\sco}$ with $\flab{\sync}$) is \POWER-consistent
and $\lPO\cup\lRF$ is acyclic (so it is Strong-\POWER-consistent).
Note that, using promises, the promising semantics allows this behavior.

\[
\inarr{\begin{tikzpicture}[yscale=0.8,xscale=1]
  \node (i11) at (0,   1) {$\erlab{\rlx}{z}{1}{}$};
  \node (i12) at (0,   0) {$\flab{\sco}$};
  \node (i13) at (0, -1) {$\ewlab{\rlx}{x}{1}{}$};

  \node (i21) at (3,   1) {$\ewlab{\rlx}{x}{2}{}$};
  \node (i22) at (3,   0) {$\flab{\sco}$};
  \node (i23) at (3, -1) {$\ewlab{\rlx}{y}{1}{}$};

  \node (i31) at (6,   1) {$\erlab{\rlx}{y}{1}{}$};
  \node (i32) at (6,   0) {$\ewlab{\rlx}{z}{1}{}$};
  
  \draw[po] (i11) edge (i12);
  \draw[po] (i12) edge (i13);
  \draw[po] (i21) edge (i22);
  \draw[po] (i22) edge (i23);
  \draw[po] (i31) edge (i32);

  \draw[rf,bend right=30] (i32) edge node[above] {\small$\lRF$} (i11);
  \draw[mo] (i13) edge node[above] {\small$\lCO$} (i21);
  \draw[rf] (i23) edge node[below] {\small$\lRF$} (i31);
\end{tikzpicture}}
\]


\section{Future memory simplification}
\label{sec:futurememory}

\begin{proposition}
  \label{prop:future-mem}
  Let $\tup{\lts,\gsco,\mem}$ be a thread configuration, $\omem$---a future memory (as defined in \cite{Kang-al:POPL17}) to $\mem$ \wrt $\lts.\lprmem$,
  and $\ogsco$---a view such that $\ogsco \ge \gsco$. 
  Then, there exist $\omem'$ and $\ogsco'$ such that $\omem'\supseteq \mem$, $\ogsco' \ge \ogsco$, 
  and the following statement holds.
  If there exist $\lts',\mem'$ and $\gsco'$ such that
  $\tup{\lts,\ogsco',\omem'} \astep{}^* \tup{\lts',\gsco',\mem'}$
  and $\lts'.\lprmem = \emptyset$, then there exist
  $\lts'', \mem''$ and $\gsco''$ such that 
  $\tup{\lts,\ogsco,\omem} \astep{}^* \tup{\lts'',\gsco'',\mem''}$ and $\lts''.\lprmem = \emptyset$ hold.
\end{proposition}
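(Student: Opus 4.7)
\smallskip\noindent
\textbf{Proof plan.} The plan is to argue by induction on the length of the sequence of memory modifications from $\mem$ to $\omem$ in Kang et al.'s future-memory relation. Recall that each step is one of: (i) adding a fresh message in a free timestamp interval, (ii) splitting a promised message $\msg{x}{v}{f}{t}{V}$ of some thread into $\msg{x}{v'}{f}{t'}{V'}$ and $\msg{x}{v}{t'}{t}{V}$ with $f < t' < t$, or (iii) lowering the view of a promised message. The construction of $\omem'$ keeps only the additions: informally, $\omem'$ is obtained from $\mem$ by inserting, for each addition step in the sequence $\mem \to \cdots \to \omem$, the same fresh message, and by leaving all original $\mem$-messages untouched (neither split nor with lowered view). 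I take $\ogsco' \defeq \ogsco$. One has to check well-definedness: additions in the original sequence are placed in intervals disjoint from the split pieces, hence \emph{a fortiori} disjoint from the unsplit originals, so no timestamp conflicts arise in $\omem'$. By construction $\omem' \supseteq \mem$.

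\smallskip\noindent
The core of the proof is then a simulation argument establishing that any certification run $\tup{\lts,\ogsco,\omem'} \astep{}^\ast \tup{\lts',\gsco',\mem'}$ with $\lts'.\lprmem = \emptyset$ can be mimicked by a certification run from $\tup{\lts,\ogsco,\omem}$. The invariant I would maintain relates a memory $\mem^\star$ reachable from $\omem'$ to a memory $\mem^\dagger$ reachable from $\omem$ by: (a) every message of $\mem^\star$ that is unchanged from the unsplit/unlowered originals is ``covered'' by a chain of split pieces in $\mem^\dagger$ occupying the same overall interval, with the final piece sharing the same value and message view; (b) message views of corresponding messages satisfy $V^\dagger \sqsubseteq V^\star$ (i.e., $\mem^\dagger$-views are pointwise smaller, due to the lowerings); (c) every message added in $\mem^\star$ beyond $\omem'$ is present in $\mem^\dagger$ too; and (d) the thread views satisfy analogous $\sqsubseteq$ inequalities. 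Given this, I would translate each step of the $\omem'$-certification to the $\omem$-certification: a read from an unsplit message in $\mem^\star$ becomes a read from the terminal split piece in $\mem^\dagger$ (same value, same view, hence the acquired knowledge is preserved); a write with timestamp interval $(f,t]$ in $\mem^\star$ either fits into a corresponding free interval in $\mem^\dagger$ or is realised by first performing a split step on $\mem^\dagger$ (if the slot is occupied by a split piece whose value matches); promise-fulfilment steps need no adjustment since $\lts.\lprmem$ is the same on both sides. Monotonicity in thread and SC views ensures that constraints satisfied on the $\omem'$ side (timestamp $>$ view) remain satisfied on the $\omem$ side where views are no larger.

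\smallskip\noindent
The inductive part is then routine: composing the one-step argument for each kind of modification gives the full result. Additions are handled trivially (they are already present in $\omem'$); lowerings are handled by monotonicity of the view components of the simulation; splits are handled by letting the simulated $\omem$-run perform the same split as an ``internal'' memory step whenever the $\omem'$-run writes into the middle of a slot occupied by an unsplit message.

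\smallskip\noindent
The main obstacle will be bookkeeping for the split case. Splits produce internal timestamps $t'$ that do not appear in $\omem'$, so the simulation relation must allow the $\omem$-side memory to contain strictly more timestamped messages than the $\omem'$-side for the same underlying content. When a write on the $\omem'$-side chooses an interval $(f,t]$ that crosses a split boundary in $\mem^\dagger$, one has to argue (using the fact that splits preserve the last piece's value and view, and that \Promise\ admits an additional split step) that the write can still be realised without inventing new timestamps in $\omem$. Ensuring that the promised messages of $\lts$ itself are handled uniformly with those of other threads, and that the final state of the mimicked run satisfies $\lts''.\lprmem = \emptyset$, will require a careful statement of the invariant and a case analysis on which of \Promise's thread-step rules is taken.
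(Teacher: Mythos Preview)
Your overall strategy---construct $\omem'$ by retaining only the message additions from the sequence $\mem \to^* \omem$, then run a simulation with views on the $\omem$-side bounded above by those on the $\omem'$-side---matches the paper's approach. However, there is a genuine gap in your construction of $\omem'$ and $\ogsco'$.

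You set $\ogsco' \defeq \ogsco$ and keep all views in $\omem'$ as they were. This can fail \Promise's \emph{closedness} invariant: every view (whether a message view, a thread view, or the SC view) must be closed in the memory, meaning every timestamp it mentions must be the upper bound of an actual message. An \emph{appended} message in $\omem$ may carry a view pointing at a timestamp $t'$ that exists only because some earlier message was split, producing a piece with upper bound $t'$. In your $\omem'$ you have discarded that split, so $t'$ is no longer the upper bound of any message, and the appended message's view is not closed. The same can happen for $\ogsco$. The paper handles this by \emph{enlarging} the views of appended messages in $\omem'$ (and correspondingly setting $\ogsco' \ge \ogsco$, possibly strictly) so that closedness is restored; this is exactly why the statement asks only for $\ogsco' \ge \ogsco$ rather than equality. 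Without this adjustment, your $\omem'$ may not be a well-formed memory at all.

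Your treatment of splits in the simulation is also more complicated than needed. You worry about the $\omem'$-run writing ``into the middle of a slot occupied by an unsplit message'' and propose inserting extra split steps on the $\omem$-side. But writes in \Promise\ must use \emph{fresh} intervals disjoint from all existing messages; since the split pieces in $\omem$ together occupy exactly the same interval as the unsplit original in $\omem'$, free intervals coincide on both sides, and no extra splits are needed. (The current thread's own promises are untouched by future-memory modifications, so promise fulfilment is also unaffected.) The paper gets by with a much lighter invariant $\futuresimmem(\mems,\memt)$: every message of $\memt$ (the $\omem'$-side) has a counterpart in $\mems$ with the same upper timestamp, a possibly larger lower bound, and a possibly smaller view; conversely every $\mems$-message lies inside the interval of some $\memt$-message. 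That, together with $\lprom_{\mathrm S}=\lprom_{\mathrm T}$ and pointwise view inequalities, suffices for the step-by-step simulation.
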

\begin{proof}[Proof outline]
  First, we inductively construct $\omem'$ from $\mem \to^* \omem$ by ignoring modifications, which are not appends of messages.
  Also, we may have to enlarge views of some appended messages to preserve their closeness in $\omem'$ since
  some of them in $\omem$ may point to messages obtained from split modifications. For the same reason,
  we update $\ogsco$ to $\ogsco'$.
  Thus, we know that $\omem'\supseteq \mem$ and $\omem$ and $\omem'$ satisfy the predicate $\futuresimmem$:
  \[\inarr{\futuresimmem(\omem,\omem') \defeq \\
  \quad
  \inarr{(\forall \msg{\loc}{\val}{f'}{t}{\view'} \in \omem'. \\
    \quad \exists f \ge f', \view \le \view'. \; \msg{\loc}{\val}{f}{t}{\view} \in \omem) \land {} \\
  (\forall \msg{\loc}{\val}{f}{t}{\view} \in \omem. \\
    \quad \exists f' \le f, t' \ge t. \; \msg{\loc}{\_}{f'}{t'}{\_} \in \omem').}
  }\]
  Having $\omem'$ and $\ogsco'$, we fix $\lts', \gsco', \mem'$ such that $\tup{\lts,\ogsco',\omem'} \astep{}^* \tup{\lts',\gsco',\mem'}$
  and $\lts'.\lprmem = \emptyset$.

  To prove that the main statement, we do simulation of the \emph{target} execution
  $\tup{\lts,\ogsco',\omem'} \astep{}^* \tup{\lts',\gsco',\mem'}$
  in a \emph{source} machine, which starts from $\tup{\lts,\ogsco,\omem}$.
  To do so, we use the following simulation relation:
  \[\inarr{
  \fsim(
          \tup{\tup{\sigmaT, \tup{\viewCurT,\viewAcqT,\viewRelT},\lpromT}, \gscot,\memt},
\tup{\tup{\sigmaS, \tup{\viewCurS,\viewAcqS,\viewRelS},\lpromS}, \gscos,\mems}
) \defeq \\
  \quad \inarr{
    \sigmaS = \sigmaT \land \lpromS = \lpromT \land {} \\
    \viewCurS \le \viewCurT \land
    \viewAcqS \le \viewAcqT \land
    (\forall \loc. \; \viewRelS(\loc) \le \viewRelT(\loc)) \land {} \\
    \gscos \le \gscot \land \futuresimmem(\mems,\memt).
  }}\]
  It holds for the initial state of the simulation, \ie $\fsim(\tup{\lts,\ogsco',\omem'},\tup{\lts,\ogsco,\omem})$ holds.
  The induction step holds as, from $\fsim$, it follows that the source machine has less restrictions.
\end{proof}


\section{On existence of traversal}
\label{sec:trav-proof}
All results described in this section are mechanized in Coq.

We use a \emph{small traversal step} and the notion of \emph{partially coherent traversal configuration} to prove
the extended of \cref{prop:rlx-trav-full} discussed in \cref{sec:traversal}.
First, we show that for a partial traversal configuration $\tup{\CoveredSet,\IssuedSet}$ of $G$ such that $C \neq G.\lE$ there exists
a small traversal step to a new partial traversal configuration (\cref{prop:trav-step}).
Second, we prove that for a traversal configuration $\tup{\CoveredSet,\IssuedSet}$ if there exists a small traversal step from it,
then there exists a (normal) traversal step from it (\cref{prop:trav-to-etrav}).
Using that, we prove the extension of \cref{prop:rlx-trav-full} for an execution graph $G$ and its traversal configuration
$\tup{\CoveredSet, \IssuedSet}$ by an induction on $|G.\lE \setminus C| + |G.\lW \setminus I|$
applying \cref{prop:trav-step} and \cref{prop:trav-to-etrav}.

\begin{definition}
A pair $\tup{\CoveredSet, \IssuedSet}$ is a \emph{partial traversal configuration} of an execution $G$,
denoted $\travConfigP(G, \tup{\CoveredSet, \IssuedSet})$, if
  $\lE \cap \Init \suq \CoveredSet$,
  $\CoveredSet \suq \coverable(G, \CoveredSet, \IssuedSet)$, and
  $\IssuedSet \suq \issuable(G, \CoveredSet, \IssuedSet)$ hold.
\end{definition}

An operational semantics of a so-called \emph{small traversal step}, denoted $\travConfigStep$, has two rules.
One of them adds an event to covered, another one---to issued (here $\coverable$ and $\issuable$ are defined as
in \cref{sec:traversal}):
\begin{mathpar}
\inferrule*{
    a \in \coverable(G, \CoveredSet, \IssuedSet)
}{
    G \vdash 
    \tup{\CoveredSet, \IssuedSet} \travConfigStep \tup{\CoveredSet \uplus \{a\}, \IssuedSet}
} \and
\inferrule*{
    w \in \issuable(G, \CoveredSet, \IssuedSet)
    }{
    G \vdash
    \tup{\CoveredSet, \IssuedSet} \travConfigStep \tup{\CoveredSet, \IssuedSet \uplus \{w\}}
}
\end{mathpar}
It is obvious that $G \vdash \TC \etravStep \TC'$ implies $G \vdash \TC \travConfigStep^{+} \TC'$ for any $G, \TC, \TC'$.

\begin{proposition}
  \label{prop:trav-step}
  Let $G$ be an \IMM-consistent execution and $\tup{\CoveredSet, \IssuedSet}$ be its partial
  traversal configuration.
  If $\CoveredSet \neq G.\lE$, then there exist $\CoveredSet'$ and $\IssuedSet'$ such that
  $G \vdash \tup{\CoveredSet, \IssuedSet} \travConfigStep \tup{\CoveredSet', \IssuedSet'}$.
\end{proposition}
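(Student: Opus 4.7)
The plan is to produce the required small traversal step by a minimal-element argument with respect to a suitable acyclic ``prerequisite'' relation that encodes all of the side-conditions in the definitions of $\coverable$ and $\issuable$. Since $C \neq G.\lE$, the pending set $P = (G.\lE \setminus C) \cup (G.\lW \setminus I)$ is nonempty; the goal is to locate some $e \in P$ that can actually be stepped on, either via \textsc{(cover)} or \textsc{(issue)}. The natural relation $R$ to consider is the union of: (i) $G.\lPO$ (for the ``$\lPO$-predecessors covered'' clause of $\coverable$, and implicitly for $\ref{req:fwbob-cov}$ and $\ref{req:w-strong-iss}$ in $\issuable$); (ii) $G.\lRF$ restricted to target reads in $G.\lE\setminus C$ (for the read clause of $\coverable$); (iii) $G.\lSC$ restricted to target fences in $G.\lF^{\sco}\setminus C$ (for the SC-fence clause); and (iv) the $\lAR$-like fragments $(G.\lDETOUR \cup G.\lRFE);G.\lPPO$ and $(G.\lDETOUR \cup G.\lRFE);[G.\lR^{\acq}];G.\lPO$ and $[G.\lWstrong];G.\lPO$ targeted at writes in $G.\lW\setminus I$ (for \ref{req:ppo-iss}, \ref{req:acq-iss}, \ref{req:w-strong-iss}). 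Every edge of $R$ lies inside $G.\lAR$ or $G.\lPO$, and when one restricts these edges to the pending set $P$, acyclicity follows directly from IMM-consistency (\ref{ax:nta} gives acyclicity of $G.\lAR$, which already contains $\lBOB$, $\lPPO$, $\lDETOUR$ and $\lRFE$, while coherence together with irreflexivity of $G.\lHB;G.\lECO^?$ rules out the remaining $\lPO$/$\lRF$/$\lSC$ cycles on the relevant event types).

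Having established that $R$ restricted to $P$ is acyclic and $P$ is finite and nonempty, I pick an $R$-minimal element $e \in P$ and split into cases on the label of $e$. If $e \in G.\lE \setminus C$ is not a write, then minimality ensures that every $\lPO$-predecessor of $e$ is already in $C$ (otherwise such a predecessor would sit in $P$ below $e$ under $R$); for a read, the RF-predecessor is in $I$ (otherwise it would be in $G.\lW \setminus I \subseteq P$ and $R$-below $e$); for an SC fence, the $\lSC$-predecessors are covered. Hence $e \in \coverable(G,C,I)$, and the \textsc{(cover)} rule applies. If $e$ is a write, I distinguish whether $e \in I$: when $e \in I$, the same $\lPO$-minimality gives $e \in \coverable$; when $e \notin I$, I verify each clause of $\issuable$ in turn, and failure of any clause would exhibit an $R$-smaller element of $P$, contradicting minimality. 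In either case one of the two small-step rules fires.

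The main obstacle I anticipate is the interplay between the $C$-membership and $I$-membership obligations: some issuability clauses (notably $\ref{req:fwbob-cov}$ and the read/fence parts of $\coverable$) demand that certain events be \emph{covered}, while others (the $\ref{req:ppo-iss}$ etc.) demand that they be \emph{issued}. This means a single descent step in $R$ may switch from ``I am looking for something to issue'' to ``I am looking for something to cover''; one has to include both kinds of pending events in $P$ and argue that the combined relation is still well-founded. A second subtlety is that coverable requires the write case $w \in I$, so a write in $I\setminus C$ can block covering without itself producing a useful $R$-predecessor; this is handled by noting that such a $w$ has all its $\lPO$-predecessors covered (since $w \in I \subseteq \issuable$ implies \ref{req:fwbob-cov} and one can use $\lPO$-minimality among writes in $I \setminus C$ to pick $w$ first and cover it directly). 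Once these bookkeeping issues are resolved, the remaining verifications for each issuability clause reduce to routine inclusion arguments against the $\lAR$-acyclicity hypothesis.
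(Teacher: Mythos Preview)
Your proposal has a genuine gap: the ``prerequisite'' relation $R$ you define is \emph{not} acyclic on $P$ under \IMM-consistency, so there need not be any $R$-minimal element to pick. The load-buffering execution of \cref{ex:lb-data} already witnesses this. Take the initial partial traversal configuration $C=I=G.\lE\cap\Init$, so that $P=\{e_{11},e_{12},e_{21},e_{22}\}$. Your clause (i) contributes the $\lPO$-edges $e_{11}\to e_{12}$ and $e_{21}\to e_{22}$; your clause (ii) contributes the $\lRF$-edges $e_{12}\to e_{21}$ and $e_{22}\to e_{11}$ (both target reads are uncovered). That is a $4$-cycle in $R\rst P$. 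Coherence cannot help here: the cycle involves two distinct locations, so nothing in $\lHB;\lECO^?$ is violated, and \ref{ax:nta} does not apply because the $\lPO$-edge $e_{11}\to e_{12}$ is not in $\lAR$ (there is no dependency, no barrier, and $e_{12}$ is a relaxed write). Allowing exactly this kind of $\lPO\cup\lRF$ cycle is the whole point of \IMM's treatment of out-of-thin-air.

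The underlying mistake is conflating the two kinds of obligation into one relation on untagged events. To \emph{issue} a write $w$ one does not need all $\lPO$-predecessors of $w$ to be covered---only the specific ones named in \ref{req:fwbob-cov}---so including all of $\lPO$ in $R$ is too strong when the target is ``issue''. The paper avoids this by a two-phase argument: first take the $\lPO$-minimal non-covered event $n_\tid$ in each thread; if any $n_\tid$ is coverable, or is a write (hence trivially issuable, since its $\lPO$-prefix is already covered), finish immediately. Otherwise every $n_\tid$ is a non-coverable read or SC fence, and the paper then switches to a \emph{separate} minimality argument: pick an $\lAR^+$-minimal write $w\in G.\lW\setminus I$ and verify issuability clause by clause, using that any violation would exhibit an $\lAR^+$-smaller non-issued write. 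Your approach could be repaired by moving to a tagged pending set $\{(e,\text{cov})\}\cup\{(w,\text{iss})\}$ with different incoming edges for the two tags; but that is essentially a repackaging of the paper's two-stage decomposition, not the single well-founded descent you sketched.
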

\begin{proof}
  Let's denote a set of threads, which have non-covered events, by $U$, \ie $U \defeq \{\tid \mid G\rst{\tid} \not \suq \CoveredSet\}$.
  For each thread $\tid \in U$, there exists an event, which we 
  denote $n_{\tid}$, such that $\dom{G.\lPO; [n_{\tid}]} \suq \CoveredSet$ and $n_{\tid} \nin \CoveredSet$.

  Consider the case then there exists a thread $\tid \in U$ such that
  $n_{\tid} \in \coverable(G, \CoveredSet, \IssuedSet)$.
  Then, the statement is proven since $G \vdash \tup{\CoveredSet, \IssuedSet} \travConfigStep \tup{\CoveredSet \uplus \{n_{\tid}\}, \IssuedSet}$ holds.
  
  Now, consider the case then $n_{\tid} \nin \coverable(G, \CoveredSet, \IssuedSet)$ for each thread $\tid \in U$.
  If there exists a $\tid \in U$ such that $n_{\tid} \in G.\lW$,
  we know that $n_{\tid} \nin \IssuedSet$ since it is not coverable.
  From definition of $n_{\tid}$, it follows that $n_{\tid} \in \issuable(G, \CoveredSet, \IssuedSet)$ holds, and 
  the statement is proven since $G \vdash \tup{\CoveredSet, \IssuedSet} \travConfigStep \tup{\CoveredSet, \IssuedSet \uplus \{n_{\tid}\}}$ holds.

  In other case, $N \defeq \{n_{\tid} \mid \tid \in U\} \suq G.\lR \cup G.\lF^{\sco}$.
  For each $r \in N \cap G.\lR$, we know that  $G.\lRF^{-1}(r) \nin \IssuedSet$, and
  for each $f \in N \cap G.\lF^\sco$, there exists $f' \in \dom{G.\lSC; [f]} \setminus \CoveredSet$.
  For this situation, we show that there exists a write event, which is issuable.

  Let's show that there is at least one read event in $N$.
  Suppose that there is no read event, then $N \suq \lF^\sco$.
  Let's pick a fence event $f'$ from $\lF^\sco \setminus \CoveredSet$, which is minimal according to $G.\lSC$ order.
  Since it is not in $N$ according to the previous paragraph,
  there is an event $f \in N$ such that $\tup{f, f'} \in G.\lPO$. That means $\tup{f, f'} \in G.\lBOB$
  and there is a $G.\lAR$-cycle since $\tup{f', f} \in G.\lSC$. It contradicts \IMM-consistency of $G$.

  Thus, there is at least one read $r \in N$. We know that the read is not coverable. It means
  that $G.\lW \not \suq \IssuedSet$ and there is a write event, which is not promised yet, \ie $G.\lRF^{-1}(r) \nin \IssuedSet$.
  Let's pick a write event $w \in G.\lW \setminus \IssuedSet$ such that it is $\lAR^+$-minimal among $G.\lW \setminus \IssuedSet$,
  \ie $\nexists w' \in G.\lW \setminus \IssuedSet. \; \lAR^+(w', w)$.
  In the remainder of the proof, we show that $w$ is issuable, and
  $G \vdash \tup{\CoveredSet, \IssuedSet} \travConfigStep \tup{\CoveredSet, \IssuedSet \uplus \{w\}}$ holds consequently.
  
  There are two options: either $w$ is $G.\lPO$-preceded by a fence event from $N$,
  or $w$ is $G.\lPO$-preceded by a read event from $N$.
  Consider the cases:
\begin{itemize}
\item There exist $f \in N \cap G.\lF^{\sco}$ and $f' \in G.\lF^{\sco}$
  such that  $G.\lPO(f, w)$, $G.\lSC(f', f)$, and $f' \nin G.\lE \setminus \CoveredSet$.
  Without loss of generality, we may
  assume that $f'$ is a $\lSC$-minimal event, which is not covered.
  From the definition of $N$, it follows that there exists $r \in N \cap \lR$, such
  that $G.\lPO(r, f')$.
  We also know that $G.\lRF^{-1}(r) = G.\lRFE^{-1}(r) \nin \IssuedSet$.
  It means that $\tup{G.\lRF^{-1}(r), w} \in G.\lRFE; G.\lPO; [G.\lF^\sco]; G.\lSC; [G.\lF^\sco]; G.\lPO \suq G.\lRFE; G.\lBOB; G.\lSC; G.\lBOB \suq G.\lAR^+$.
  It contradicts $\lAR^+$-minimality of $w$.
\item There exists $r \in N \cap \lR$,
  such that $G.\lPO(r, w)$, $G.\lRF^{-1}(r) \nin \IssuedSet$.
  Since $\CoveredSet \cap G.\lW \suq \IssuedSet$ and $\CoveredSet$ is prefix-closed, $G.\lRF^{-1}(r) = G.\lRFE^{-1}(r)$.
  \begin{itemize}
    \item[\ref{req:fwbob-cov}:] Let $e$ s.t. $\tup{e, w} \in ([G.\lW^{\rel}];G.\lPO\rst{G.\lLOC} \cup [G.\lF];G.\lPO) \suq G.\lFWBOB$ and $e \nin \CoveredSet$.
      Since $G.\lPO^{?}(r, e)$ and $w \in G.\lW$, we know that $\tup{r, w} \in G.\lPO^?;G.\lFWBOB\suq\lFWBOB^{+}\suq G.\lAR^{+}$.
      It follows that $\tup{G.\lRFE^{-1}(r), w} \in G.\lRFE; G.\lBOB^{+}; [G.\lW] \suq \lAR^+$. It means $G.\lRFE^{-1}(r) \in \IssuedSet$.
      It contradicts that $r$ cannot be covered.
    \item[\ref{req:ppo-iss}:]
    \item[\ref{req:acq-iss}:]
      Let $r' \in G.\lR$ be an event such that $\tup{r', w} \in G.\lPPO \cup [G.\erlab{\acq}{}{}{}];G.\lPO$. If $G.\lRFE^{-1}(r') \neq \bot$,
      then $\tup{G.\lRFE^{-1}(r'), w} \in G.\lRFE; [G.\lR]; (G.\lPPO \cup [G.\erlab{\acq}{}{}{}];G.\lPO); [G.\lW] \suq \lAR^+$.
      It means $G.\lRFE^{-1}(r') \in \IssuedSet$.

      Let $w', r'$ be events such that  $\tup{w', r'} \in G.\lDETOUR$ and
      $\tup{r', w} \in G.\lPPO \cup [G.\erlab{\acq}{}{}{}];G.\lPO$,
      then $\tup{w', w} \in G.\lDETOUR; [G.\lR];(G.\lPPO \cup [G.\erlab{\acq}{}{}{}];G.\lPO); [G.\lW] \suq G.\lDETOUR; G.\lAR^+ \suq G.\lAR^+$.
      It means $w' \in \IssuedSet$.


    \item[\ref{req:w-strong-iss}:] Let $w'$ be an event such that $\tup{w', w} \in [G.\lWstrong];G.\lPO$.
      We know that $w' \in \IssuedSet$ since $\tup{w', w} \in G.\lAR^+$.
      \qed



  \end{itemize}
\end{itemize}  
\end{proof}

\begin{proposition}
  \label{prop:trav-to-etrav}
  Let $G$ be an \IMM-consistent execution, $\tup{\CoveredSet, \IssuedSet}$---its traversal configuration,
  Then, if there exist $\CoveredSet'$ and $\IssuedSet'$ such that
  $G \vdash \tup{\CoveredSet, \IssuedSet} \travConfigStep \tup{\CoveredSet', \IssuedSet'}$,
  then there exist $\CoveredSet''$ and $\IssuedSet''$ such that
  $G \vdash \tup{\CoveredSet, \IssuedSet} \etravStep \tup{\CoveredSet'', \IssuedSet''}$.
\end{proposition}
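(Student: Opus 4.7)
The proof will proceed by case analysis on the given small step, exhibiting in each case some extended step from $\tup{C, I}$ (not necessarily reaching the same target configuration). Two cases are immediate: if the small step is (issue) of a non-release write $w$, the extended (issue) rule applies verbatim; if it is (cover) of $e$ with $e \notin \dom{G.\lRMW}$, the extended (cover) rule applies verbatim.

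Suppose next the small step covers a read $e$ with $\tup{e, w} \in G.\lRMW$. I would split on $w$'s status: if $w \in I$, apply (rmw-cover) leaving $I$ unchanged; if $w \notin I$ and $w \in G.\lW^{\rel}$, apply (rmw-cover) extending $I$ with $w$; otherwise apply the extended (issue) rule to $w$. In the latter two subcases the obligation is $w \in \issuable(G, C, I)$, which I would verify clause by clause against \cref{def:issuable}. The key point is that $\tup{e, w} \in \imm{G.\lPO}$ and $e$ is a read; so any fence, release write, or strong write $G.\lPO$-before $w$ is strictly $G.\lPO$-before $e$ and lies in $C$ by $e$'s coverability, which yields \ref{req:fwbob-cov} and \ref{req:w-strong-iss}. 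Using $G.\lPPO \suq G.\lPO$ to bound the source read of any $(G.\lDETOUR \cup G.\lRFE)\mathbin{;} G.\lPPO$ path into $w$ and then applying the coverability of either $e$ itself or an earlier $\lPO$-predecessor in $C$, the traversal invariants $C \suq \coverable$ and $C \cap G.\lW \suq I$ yield \ref{req:ppo-iss} and \ref{req:acq-iss}.

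The interesting case is when the small step issues $w \in G.\lW^{\rel}$. The matching extended rule is (release-cover), which additionally requires $\dom{G.\lPO\mathbin{;} [w]} \suq C$. If so, apply (release-cover). Otherwise, I would pick a $G.\lPO$-minimal $e' \in \dom{G.\lPO\mathbin{;} [w]} \setminus C$. Its own $G.\lPO$-predecessors lie in $C$ by minimality, and \ref{req:fwbob-cov} on $w$ combined with the invariant $I \cap G.\lW^{\rel} \suq C$ forces $e'$ to be neither a fence nor a release write. I would then sub-case on $e'$, adapting the second half of the proof of \cref{prop:trav-step}: if $e'$ is a non-release write already in $I$, apply (cover); if $e' \in G.\lR$ with $G.\lRF^{-1}(e') \in I$, apply (cover), or fall back to the RMW subcase if $e' \in \dom{G.\lRMW}$; if $e'$ is an unissued write, establish $e' \in \issuable$ by the same clause-by-clause argument as above (now applied with $e'$ in place of $w$) and apply (issue); if $e'$ is a read whose reads-from source is not yet issued, recurse on that source.

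The main obstacle lies in this last sub-sub-case: showing that the recursion always terminates and always lands at an event admitting one of the four extended rules. I expect this to follow from a well-founded induction on a measure such as $G.\lPO$-depth within the finite set of uncovered events, combined with $G.\lAR$-acyclicity (to rule out cyclic obstructions when descending from a read to its unissued reads-from source). The remaining bookkeeping is routine---verifying, after each applied extended step, that the resulting pair is again a traversal configuration by checking $C \cap G.\lW \suq I$, $I \cap G.\lW^{\rel} \suq C$, and $\codom{[C]\mathbin{;} G.\lRMW} \suq C$ against the specific shape of the rule used.
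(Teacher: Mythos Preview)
Your proposal is considerably more thorough than the paper's own argument, which is a three-sentence case split: it directly applies \textsc{(rmw-cover)} when the small step covers an RMW read and \textsc{(release-cover)} when it issues a release write, without checking the side conditions. Those conditions do not follow from the hypotheses as stated---an RMW write may be neither issued nor release, and a release write can be issuable while having uncovered $\lPO$-predecessors (take a relaxed read with an unissued external $\lRF$-source followed by an unrelated release write). So you are right that extra work is needed, and your fallback to \textsc{(issue)} in the RMW case, together with the clause-by-clause verification of \cref{def:issuable} from the coverability of the read part, is correct.

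There is one slip in your release-write branch: \ref{req:fwbob-cov} only rules out fences and release writes \emph{to the same location} as $w$, while $I \cap G.\lW^{\rel} \suq C$ only rules out already-issued release writes; so $e'$ may be an unissued release write to a different location. The fix is immediate: since $\dom{G.\lPO\mathbin{;}[e']} \suq C$ by minimality of $e'$, apply \textsc{(release-cover)} to $e'$.

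The genuine gap is the recursion. ``Recurse on the $\lRF$-source'' does not terminate by any local measure: that source may sit in another thread with its own uncovered $\lPO$-predecessors, so neither your subcases for $e'$ nor a $\lPO$-depth argument applies to it. A clean way out is to argue globally, essentially redoing the proof of \cref{prop:trav-step} but landing on extended steps. The one additional observation you need is that, in the hard case where every thread's first uncovered event $n_\tid$ is a non-coverable read or SC fence, the $\lAR^+$-minimal $w' \in G.\lW \setminus I$ found there is never release: the first uncovered event $r$ in $w'$'s thread is a read (the SC-fence alternative already yields a contradiction in that proof) with $G.\lRFE^{-1}(r) \in G.\lW \setminus I$, and if $w' \in G.\lW^{\rel}$ then $\tup{G.\lRFE^{-1}(r), w'} \in G.\lRFE\mathbin{;}G.\lPO\mathbin{;}[G.\lW^{\rel}] \suq G.\lRFE\mathbin{;}G.\lBOB \suq G.\lAR^+$, contradicting the minimality of $w'$. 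Hence the extended \textsc{(issue)} rule applies to $w'$ and you are done without any recursion.
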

\begin{proof}
  Consider cases.
  If $\CoveredSet' = \CoveredSet \uplus \{e\}$ for some $e$, there are two cases to consider.
  \begin{itemize}
    \item $e \nin \dom{G.\lRMW}$: Then $G \vdash \tup{\CoveredSet, \IssuedSet} \etravStep \tup{\CoveredSet \uplus \{e\}, \IssuedSet}$ holds.
    \item $\exists w. \; \tup{e, w} \in G.\lRMW$:
      Then $G \vdash \tup{\CoveredSet, \IssuedSet} \etravStep \tup{\CoveredSet \uplus \{e\}, \IssuedSet'}$ holds,
      where either $w \in \IssuedSet$ and $\IssuedSet' = \IssuedSet$, or $w \in \lW^{\rel}$ and $\IssuedSet' =\IssuedSet \uplus \{w\}$.
  \end{itemize}
  If $\IssuedSet' = \IssuedSet \uplus \{e\}$ for some $e$, then
  $G \vdash \tup{\CoveredSet, \IssuedSet} \etravStep \tup{\CoveredSet', \IssuedSet \uplus \{e\}}$ holds,
  where either $w \nin G.\lW^\rel$ and $\CoveredSet' = \CoveredSet$, or $w \in G.\lW^{\rel}$ and $\CoveredSet' =\CoveredSet \uplus \{w\}$.
\end{proof}

\end{document}